\documentclass[
a4paper,reqno]{amsart}
\usepackage[T1]{fontenc}
\usepackage[english]{babel}
\usepackage{amssymb,bbm,enumerate}
\usepackage{color}
\usepackage[linktocpage=true,colorlinks=true, linkcolor=blue, citecolor=red, urlcolor=green]{hyperref}



\renewcommand{\phi}{\varphi}

\newcommand{\mc}[1]{\mathcal{#1}}
\newcommand{\mf}[1]{\mathfrak{#1}}
\newcommand{\mb}[1]{\mathbb{#1}}
\newcommand{\id}{\mathbbm{1}}
\newcommand{\tint}{{\textstyle\int}}

\DeclareMathOperator{\Mat}{Mat}
\DeclareMathOperator{\Der}{Der}

\DeclareMathOperator{\sdeg}{sdeg}

\newcommand{\ass}[1]{\stackrel{#1}{\longleftrightarrow}}

\theoremstyle{plain}
\newtheorem{theorem}{Theorem}[section]
\newtheorem{lemma}[theorem]{Lemma}
\newtheorem{proposition}[theorem]{Proposition}
\newtheorem{corollary}[theorem]{Corollary}
\newtheorem{ansatz}[theorem]{Ansatz}

\theoremstyle{definition}
\newtheorem{definition}[theorem]{Definition}

\theoremstyle{remark}
\newtheorem{remark}[theorem]{Remark}

\setcounter{tocdepth}{1}
\setcounter{section}{-1}

\numberwithin{equation}{section}

\definecolor{light}{gray}{.9}


\title{Dirac reduction for Poisson vertex algebras}

\author{Alberto De Sole, Victor G. Kac, Daniele Valeri}

\address{Dipartimento di Matematica, Sapienza Universit\`a di Roma,
P.le Aldo Moro 2, 00185 Rome, Italy.}
\email{desole@mat.uniroma1.it}

\address{Department of Mathematics, MIT,
77 Massachusetts Avenue, Cambridge, MA 02139, USA.}
\email{kac@math.mit.edu}

\address{SISSA, Via Bonomea 265, 34136 Trieste, Italy.}
\email{dvaleri@sissa.it}

\begin{document}

\pagestyle{plain}

\begin{abstract}
We construct an analogue of Dirac's reduction for an arbitrary 
local or non-local Poisson bracket
in the general setup of non-local Poisson vertex algebras.
This leads to Dirac's reduction of an arbitrary non-local Poisson structure.
We apply this construction to an example of a generalized Drinfeld-Sokolov hierarchy.
\end{abstract}

\maketitle

\tableofcontents

\section{Introduction}\label{sec:0}

Let $P$ be a Poisson algebra with Poisson bracket $\{\cdot\,,\,\cdot\}$,
let $\theta_1,\dots,\theta_m$ be some elements of $P$
such that the determinant of the matrix
$C=\big(\{\theta_i,\theta_j\}\big)_{i,j=1}^m$
is an invertible element of $P$,
and let $C^{-1}$ be the inverse matrix.
In his famous paper \cite{Dir50}
Dirac constructed a new bracket:
\begin{equation}\label{eq:01}
\{a,b\}^D=\{a,b\}
-\sum_{i,j=1}^m\{a,\theta_i\}(C^{-1})_{ij}\{\theta_j,b\}
\,.
\end{equation}
It is immediate to check that this new bracket is skewsymmetric
and satisfies the Leibniz rules,
and that the elements $\theta_i$ are central for this bracket,
i.e. $\{\theta_i,P\}^D=0$.
A remarkable observation of Dirac is that \eqref{eq:01}
satisfies the Jacobi identity,
hence it is a Poisson bracket on $P$.
This is important since the associative algebra ideal $I$ generated by the $\theta_i$'s
is a Poisson ideal for the bracket \eqref{eq:01},
hence this bracket defines a Poisson algebra structure on the factor algebra $P/I$,
called the \emph{Dirac reduction} of $P$ by the constraints $\theta_1,\dots,\theta_m$.

In the present paper we provide an analogous construction for (non-local)
Poisson vertex algebras.
This allows us to extend Dirac's construction to an arbitrary
local or non-local Poisson bracket.
(Note that even if the Poisson bracket we begin with is local,
the resulting Dirac's bracket is, in general, non-local.)

We apply our construction to obtain Dirac's reduction of a generalized
Drinfeld-Sokolov \cite{DS85} hierarchy studied in \cite{DSKV13},
thereby providing the resulting hierarchy with a bi-Hamiltonian structure.

Note that our formula \eqref{dirac}
coincides with Dirac's \eqref{eq:01} in the finite-dimensional case,
but in order to extend it to the infinite-dimensional case
the language of non-local $\lambda$-brackets
developed in \cite{DSK13} is indispensable.

A local Poisson vertex algebra is a local counterpart of the Beilinson-Drinfeld notion
of a Coisson Algebra \cite{BD04};
however it is unclear whether a non-local Poisson vertex algebra can be introduced in their
framework.
As far as we know there has been no rigorous discussion in the literature of general
non-local Poisson structures
and of the Dirac reduction in the infinite-dimensional setting,
see remarks in \cite{DSK13}.

Throughout the paper, unless otherwise specified, all vector spaces, tensor products etc.,
are defined over a field $\mb F$ of characteristic 0.

\subsubsection*{Acknowledgments}
We are grateful to Vsevolod Adler and Vladimir Sokolov
for stimulating correspondence.
We are also grateful to Takayuki Tsuchida
for an important observation which led us to the correction
of the first version of the paper. 
We wish to thank IHES, France,
where part of this research was conducted.
The third author was supported by the ERC grant
``FroM-PDE: Frobenius Manifolds and Hamiltonian Partial
Differential Equations''.

\section{Non-local Poisson vertex algebras}\label{sec:1}

We start by recalling the definition of a (non-local) Poisson vertex algebra,
following \cite{DSK13}, where one can find more details.

We use the following standard notation:
for a vector space $V$,
we let $V[\lambda]$, $V[[\lambda^{-1}]]$, and $V((\lambda^{-1}))$,
be, respectively,
the spaces of polynomials in $\lambda$,
of formal power series in $\lambda^{-1}$,
and of formal Laurent series in $\lambda^{-1}$, 
with coefficients in $V$.
Furthermore,
we shall use the following notation from \cite{DSK13}:
$$
V_{\lambda,\mu}:=V[[\lambda^{-1},\mu^{-1},(\lambda+\mu)^{-1}]][\lambda,\mu]\,,
$$
namely, the quotient of the $\mb F[\lambda,\mu,\nu]$-module
$V[[\lambda^{-1},\mu^{-1},\nu^{-1}]][\lambda,\mu,\nu]$
by the submodule 
$(\nu-\lambda-\mu)V[[\lambda^{-1},\mu^{-1},\nu^{-1}]][\lambda,\mu,\nu]$.
%
%
Recall that we have the natural embedding 
$\iota_{\mu,\lambda}:\,V_{\lambda,\mu}\hookrightarrow V((\lambda^{-1}))((\mu^{-1}))$
defined by expanding the negative powers of $\nu=\lambda+\mu$
by geometric series in the domain $|\mu|>|\lambda|$.

Let $\mc V$ be a differential algebra, i.e. a unital commutative associative algebra
with a derivation $\partial:\,\mc V\to\mc V$.
Recall that a \emph{(non-local)} $\lambda$-\emph{bracket} on $\mc V$ is a linear map
$\{\cdot\,_\lambda\,\cdot\}:\,\mc V\otimes \mc V\to \mc V((\lambda^{-1}))$
satisfying the following \emph{sesquilinearity} conditions:
\begin{equation}\label{20110921:eq1}
\{\partial a_\lambda b\}=-\lambda\{a_\lambda b\}
\,\,,\,\,\,\,
\{a_\lambda\partial b\}=(\lambda+\partial)\{a_\lambda b\}
\,,
\end{equation}
and the left and right \emph{Leibniz rules}:
\begin{equation}\label{20110921:eq3}
\begin{array}{l}
\{a_\lambda bc\}=b\{a_\lambda c\}+c\{a_\lambda b\}\,, \\
\{ab_\lambda c\}=\{a_{\lambda+\partial}c\}_\to b+\{b_{\lambda+\partial} c\}_\to a\,.
\end{array}
\end{equation}
Here and further an expression $\{a_{\lambda+\partial}b\}_\to c$ is interpreted as follows:
if $\{a_{\lambda}b\}=\sum_{n=-\infty}^Nc_n\lambda^n$, 
then $\{a_{\lambda+\partial}b\}_\to c=\sum_{n=-\infty}^Nc_n(\lambda+\partial)^nc$,
where we expand $(\lambda+\partial)^n$ in non-negative powers of $\partial$.
The (non-local) $\lambda$-bracket $\{\cdot\,_\lambda\,\cdot\}$ 
is called \emph{skewsymmetric} if
\begin{equation}\label{20110921:eq2}
\{b_\lambda a\}=-\{a_{-\lambda-\partial}b\}
\,\,\,\, \text{ for all } a,b\in\mc V
\,.
\end{equation}
The RHS of the skewsymmetry condition should be interpreted as follows:
we move $-\lambda-\partial$ to the left and
we expand its powers in non-negative powers of $\partial$,
acting on the coefficients on the $\lambda$-bracket.
The (non-local) $\lambda$-bracket $\{\cdot\,_\lambda\,\cdot\}$ 
is called \emph{admissible} if
\begin{equation}\label{20110921:eq4}
\{a_\lambda\{b_\mu c\}\}\in\mc V_{\lambda,\mu}
\qquad\forall a,b,c\in\mc V\,.
\end{equation}
Here we are identifying the space $\mc V_{\lambda,\mu}$
with its image in $\mc V((\lambda^{-1}))((\mu^{-1}))$ via the embedding $\iota_{\mu,\lambda}$.
Note that, if $\{\cdot\,_\lambda\,\cdot\}$ is a skewsymmetric admissible 
(non-local) $\lambda$-bracket on $\mc V$,
then we also have
$\{b_\mu\{a_\lambda c\}\}\in\mc V_{\lambda,\mu}$ 
and $\{\{a_\lambda b\}_{\lambda+\mu} c\}\in\mc V_{\lambda,\mu}$,
for all $a,b,c\in\mc V$ (see \cite[Rem.3.3]{DSK13}).
\begin{definition}\label{20130513:def}
A \emph{non-local Poisson vertex algebra} (PVA) is a differential algebra $\mc V$
endowed with a non-local $\lambda$-bracket,
$\{\cdot\,_\lambda\,\cdot\}:\,\mc V\otimes \mc V\to \mc V((\lambda^{-1}))$
satisfying
skewsymmetry \eqref{20110921:eq2},
admissibility \eqref{20110921:eq4},
and the following \emph{Jacobi identity}:
\begin{equation}\label{20110922:eq3}
\{a_\lambda\{b_\mu c\}\}-\{b_\mu\{a_\lambda c\}\}=\{\{a_\lambda b\}_{\lambda+\mu} c\}
\,\,\,\,\text{ for every } a,b,c\in\mc V\,,
\end{equation}
where the equality is understood in the space $\mc V_{\lambda,\mu}$.
In this case we call $\{\cdot\,_\lambda\,\cdot\}$ a (non-local) \emph{PVA} $\lambda$-\emph{bracket}.
\end{definition}

We shall often drop the term ``non-local'',
so when we will refer to PVA's and $\lambda$-brackets
we will always mean \emph{non-local PVA}'s
and \emph{non-local} $\lambda$-\emph{brackets}.
(This, of course, includes the local case as well.)

An element $\theta$ of a (non-local) PVA $\mc V$ is called \emph{central}
if $\{a_\lambda\theta\}=0$ for all $a\in\mc V$.
Note that, by skewsymmetry, this is equivalent to the condition that $\{\theta_\lambda a\}=0$
for all $a\in\mc V$.
Note also that, by the sesquilinearity and Leibniz rules,
a differential algebra ideal of $\mc V$ generated by central elements
is automatically a PVA ideal.

In \cite{DSK13} there have been proved the following two lemmas, which will be used in the following sections.
\begin{lemma}[{\cite[Lem.2.3]{DSK13}}]\label{20111006:lem}
Let $A(\lambda,\mu),B(\lambda,\mu)\in\mc V_{\lambda,\mu}$,
and let $S,T:\,\mc V\to \mc V$ be endomorphisms of $\mc V$ (viewed as a vector space). 
Then
$$
A(\lambda+S,\mu+T)B(\lambda,\mu)\in\mc V_{\lambda,\mu}\,,
$$
where we expand the negative powers of $\lambda+S$ and $\mu+T$ 
in non-negative powers of $S$ and $T$, acting on the coefficients of $B$.
In particular,
if $\mc V$ is a differential algebra, 
then the space $\mc V_{\lambda,\mu}$ is also a differential algebra, 
with the obvious product and action of $\partial$.
\end{lemma}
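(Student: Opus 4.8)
The plan is to first give $A(\lambda+S,\mu+T)B$ a precise meaning as an element of $\mc V((\lambda^{-1}))((\mu^{-1}))$, and then to show that it actually lies in the subspace $\mc V_{\lambda,\mu}$ by reducing the problem to a few elementary building blocks. Using the embedding $\iota_{\mu,\lambda}$, write $A=\sum_{m,n}a_{mn}\lambda^m\mu^n$ and $B=\sum_{p,q}b_{pq}\lambda^p\mu^q$ as iterated Laurent series ($\mu$ inner, $\lambda$ outer), and set $A(\lambda+S,\mu+T)B:=\sum_{m,n}a_{mn}(\lambda+S)^m(\mu+T)^nB$, where $(\mu+T)^n$ is applied first and $(\lambda+S)^m$ afterwards, and where for $n<0$ one expands $(\mu+T)^n=\sum_{l\ge0}\binom nl\mu^{n-l}T^l$ (similarly for $\lambda+S$), with $S,T$ acting on coefficients. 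A straightforward bookkeeping of the exponents shows that each monomial $\lambda^r\mu^s$ receives contributions from only finitely many tuples $(m,n,p,q)$ and binomial indices, so this is a well-defined element of $\mc V((\lambda^{-1}))((\mu^{-1}))$; fixing the order of the two shifts is \emph{essential} here, since $S$ and $T$ need not commute. The content of the lemma is that this element belongs to $\mc V_{\lambda,\mu}$.

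Next I would record two stability properties: $\mc V_{\lambda,\mu}$ is closed under the coefficient-wise action of an arbitrary $\mb F$-linear endomorphism of $\mc V$ (which passes to the quotient defining $\mc V_{\lambda,\mu}$), and it is closed under multiplication by $\lambda^{\pm1}$ and $\mu^{\pm1}$; consequently it is closed under the operators $(\lambda+S)^{\pm1}$ and $(\mu+T)^{\pm1}$ (expanded as above — the infinite sums converge coefficient-wise). By definition every element of $\mc V_{\lambda,\mu}$ is a \emph{finite} sum $\sum_{a,b\ge0}\lambda^a\mu^b\bar F_{ab}$ with $\bar F_{ab}$ the image of an element of $\mc V[[\lambda^{-1},\mu^{-1},\nu^{-1}]]$. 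Combining this with $\mb F$-linearity in $A$ and in $B$ and the identities
\begin{gather*}
(\lambda^aA')(\lambda+S,\mu+T)B=(\lambda+S)^a\big(A'(\lambda+S,\mu+T)B\big),\\
(\mu^aA')(\lambda+S,\mu+T)B=A'(\lambda+S,\mu+T)\big((\mu+T)^aB\big),\\
A'(\lambda+S,\mu+T)\big(\lambda^p\mu^qB'\big)=\lambda^p\mu^q\,A'(\lambda+S,\mu+T)B',
\end{gather*}
(all of which follow from the Vandermonde-type identities $(\lambda+S)^{m+a}=(\lambda+S)^a(\lambda+S)^m$, $(\mu+T)^{n+a}=(\mu+T)^n(\mu+T)^a$ and from the fact that multiplication by $\lambda^p\mu^q$, $p,q\ge0$, commutes with $\lambda+S$ and $\mu+T$), I would reduce the statement to the case $A,B\in\mc V[[\lambda^{-1},\mu^{-1},\nu^{-1}]]$ (with $\nu=\lambda+\mu$). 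No convergence question arises in this step, since all the sums involved are finite.

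For the remaining case, write $A=\sum_{i,j,k\ge0}a_{ijk}\lambda^{-i}\mu^{-j}(\lambda+\mu)^{-k}$ and likewise for $B$. Substituting $(\lambda+S)^{-1}=\sum_{r\ge0}(-1)^r\lambda^{-1-r}S^r$, $(\mu+T)^{-1}=\sum_{l\ge0}(-1)^l\mu^{-1-l}T^l$, and $\iota_{\mu,\lambda}\big((\lambda+\mu)^{-1}\big)=\sum_{h\ge0}(-1)^h\lambda^h\mu^{-1-h}$ into the shifts, one sees that, after expansion, $A(\lambda+S,\mu+T)B$ is an explicit combination of terms $\lambda^\alpha\mu^\beta(S^rT^l b_{pq})$ in which the shifts have produced only extra factors $\lambda^{-1}$, $\mu^{-1}$, $(\lambda+\mu)^{-1}$ and compositions of $S$, $T$ on the coefficients. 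Checking the exponents then shows, on the one hand, that each coefficient of $\lambda^\alpha\mu^\beta$ is a finite sum (hence lies in $\mc V$), and on the other hand that the whole expression is exactly the $\iota_{\mu,\lambda}$-image of an element of $\mc V[[\lambda^{-1},\mu^{-1},\nu^{-1}]][\lambda,\mu,\nu]$ modulo $(\nu-\lambda-\mu)$, i.e.\ of an element of $\mc V_{\lambda,\mu}$. This final exponent bookkeeping — keeping track, simultaneously, of the three admissible kinds of denominator, the binomial tails, and the non-commuting operators $S^rT^l$ — is the step I expect to be the main technical obstacle; the rest is formal.

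For the last assertion, note that $\mc V_{\lambda,\mu}$ is, by construction, the quotient of the commutative associative unital algebra $\mc V[[\lambda^{-1},\mu^{-1},\nu^{-1}]][\lambda,\mu,\nu]$ by the ideal $(\nu-\lambda-\mu)$, hence it is itself a commutative associative unital algebra with the obvious product; and $\partial$, acting coefficient-wise (and trivially on $\lambda,\mu,\nu$), is a derivation of that algebra which preserves this ideal, so it descends to a derivation of $\mc V_{\lambda,\mu}$. (This part is independent of the rest.)
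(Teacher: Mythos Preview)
The paper does not prove this lemma: it is quoted verbatim from \cite[Lem.2.3]{DSK13} and used as a black box, so there is no ``paper's own proof'' to compare against. Your outline is a reasonable route to such a proof, and the reduction strategy (stability of $\mc V_{\lambda,\mu}$ under $(\lambda+S)^{\pm1}$, $(\mu+T)^{\pm1}$ and coefficient-wise endomorphisms, followed by stripping off the polynomial parts of $A$ and $B$) is sound; the three displayed identities are correct for the reasons you give.

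Two remarks. First, in the final step you reduce to $A,B$ coming from $\mc V[[\lambda^{-1},\mu^{-1},\nu^{-1}]]$ and then say the result is ``the $\iota_{\mu,\lambda}$-image of an element of $\mc V[[\lambda^{-1},\mu^{-1},\nu^{-1}]][\lambda,\mu,\nu]/(\nu-\lambda-\mu)$''. The cleanest way to see this, and to avoid the bookkeeping you flag as the main obstacle, is to observe that after the $\iota_{\mu,\lambda}$-expansion of $(\lambda+\mu)^{-k}$ and the subsequent substitution $\lambda\mapsto\lambda+S$, $\mu\mapsto\mu+T$, what one obtains is exactly the $\iota_{\mu,\lambda}$-expansion of $\big((\lambda+\mu)+(S+T)\big)^{-k}$, which in turn equals $\sum_{r\ge0}\binom{-k}{r}(\lambda+\mu)^{-k-r}(S+T)^r$ acting on the coefficients of $B$; this visibly stays in $\mc V_{\lambda,\mu}$. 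Doing the same for the $\lambda^{-i}$ and $\mu^{-j}$ factors and multiplying (which is legitimate by your stability observations) finishes the argument with essentially no exponent chase. Second, your last paragraph on the differential-algebra structure is correct and, as you note, independent of the rest.
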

\begin{lemma}[{\cite[Lem.3.9]{DSK13}}]\label{20111012:lem}
Let $\{\cdot\,_\lambda\,\cdot\}:\,\mc V\times\mc V\to\mc V((\lambda^{-1}))$ 
be a $\lambda$-bracket on 
the differential algebra $\mc V$.
Suppose that 
$C(\partial)=\big(C_{ij}(\partial)\big)_{i,j=1}^\ell\in\Mat_{\ell\times\ell}\big(\mc V((\partial^{-1}))\big)$
is an invertible $\ell\times\ell$ matrix pseudodifferential operator with coefficients in $\mc V$,
and let
$C^{-1}(\partial)=\big((C^{-1})_{ij}(\partial)\big)_{i,j=1}^\ell
\in\Mat_{\ell\times\ell}\big(\mc V((\partial^{-1}))\big)$
be its inverse.
Letting $C_{ij}=\sum_{n=-\infty}^N c_{ij;n}\partial^n$,
the following identities hold for every $a\in\mc V$ and $i,j=1,\dots,\ell$:
\begin{equation}\label{20111012:eq2a}
\begin{array}{r}
\displaystyle{
\big\{a_\lambda (C^{-1})_{ij}(\mu)\big\}
=
-\sum_{r,t=1}^\ell\sum_{n=-\infty}^N
\iota_{\mu,\lambda}(C^{-1})_{ir}(\lambda+\mu+\partial)
} \\
\displaystyle{
\{a_\lambda c_{rt;n}\} (\mu+\partial)^n (C^{-1})_{tj}(\mu)
\,\in\mc V((\lambda^{-1}))((\mu^{-1}))
\,,
}
\end{array}
\end{equation}
and
\begin{equation}\label{20111012:eq2b}
\begin{array}{l}
\displaystyle{
\big\{(C^{-1})_{ij}(\lambda) _{\lambda+\mu} a\big\}
=
-\sum_{r,t=1}^\ell\sum_{n=-\infty}^N
\{{c_{rt;n}}_{\lambda+\mu+\partial}a\}_\to
} \\
\displaystyle{
\Big((\lambda+\partial)^n (C^{-1})_{tj}(\lambda)\Big)
\iota_{\lambda,\lambda+\mu}({C^*}^{-1})_{ri}(\mu) 
\,\in\mc V(((\lambda+\mu)^{-1}))((\lambda^{-1}))
\,,
}
\end{array}
\end{equation}
where $\iota_{\mu,\lambda}:\,\mc V_{\lambda,\mu}\to\mc V((\lambda^{-1}))((\mu^{-1}))$
and $\iota_{\lambda,\lambda+\mu}:\,
\mc V_{\lambda,\mu}\to\mc V(((\lambda+\mu)^{-1}))((\lambda^{-1}))$ 
are the natural embeddings defined above.
In equations \eqref{20111012:eq2a} and \eqref{20111012:eq2b},
$C(\lambda)\in\Mat_{\ell\times\ell}\mc V((\lambda^{-1}))$ 
denotes the symbol of the matrix pseudodifferential operator $C$
(i.e. the matrix with entries 
$C_{ij}(\lambda)=\sum_{n=-\infty}^N c_{ij;n}\lambda^n$),
and $C^*$ denotes its adjoint (its inverse being $(C^{-1})^*$).
\end{lemma}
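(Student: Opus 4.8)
The plan is to differentiate, via the $\lambda$-bracket, the defining relation $\sum_{k=1}^\ell C_{ik}(\partial)\circ (C^{-1})_{kj}(\partial)=\delta_{ij}$ of the inverse matrix pseudodifferential operator, and then to solve the resulting linear equation for $\{a_\lambda(C^{-1})_{ij}(\mu)\}$ by applying $C^{-1}$ on the left. First, since $(C^{-1})_{ij}(\mu)=\sum_m d_{ij;m}\mu^m\in\mc V((\mu^{-1}))$, we read $\{a_\lambda(C^{-1})_{ij}(\mu)\}$ coefficientwise as $\sum_m\{a_\lambda d_{ij;m}\}\mu^m$, which already lies in $\mc V((\lambda^{-1}))((\mu^{-1}))$. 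The key preliminary step is a ``Leibniz-type'' identity, valid for arbitrary matrix pseudodifferential operators $A=(A_{ik})$, $B=(B_{kj})$ with coefficients in $\mc V$, $A_{ik}=\sum_n a_{ik;n}\partial^n$:
\begin{equation*}
\{a_\lambda (A\circ B)_{ij}(\mu)\}
=\sum_{k,n}\{a_\lambda a_{ik;n}\}\,(\mu+\partial)^n\big(B_{kj}(\mu)\big)
+\sum_k A_{ik}(\lambda+\mu+\partial)\{a_\lambda B_{kj}(\mu)\}\,,
\end{equation*}
negative powers of $\mu+\partial$ and $\lambda+\mu+\partial$ being expanded in nonnegative powers of $\partial$. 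This follows by a direct computation: write $(A\circ B)_{ij}(\mu)=\sum_k A_{ik}(\mu+\partial)B_{kj}(\mu)$ as a sum of monomials of the form $a_{ik;n}(\partial^p c)\mu^{\bullet}$ with $c$ a coefficient of $B_{kj}$, apply the left Leibniz rule \eqref{20110921:eq3} and the sesquilinearity relation $\{a_\lambda\partial^p b\}=(\lambda+\partial)^p\{a_\lambda b\}$ coming from \eqref{20110921:eq1}, and reassemble the binomial sums using $\sum_p\binom np\mu^{n-p}(\lambda+\partial)^p=(\lambda+\mu+\partial)^n$. Note that neither skewsymmetry nor admissibility (neither of which is assumed in the lemma) is used here.

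Applying this identity with $A=C$, $B=C^{-1}$, and using that $C\circ C^{-1}=\id$ has scalar symbol $\delta_{ij}$ while $\{a_\lambda 1\}=0$, I obtain
\begin{equation*}
\sum_k C_{ik}(\lambda+\mu+\partial)\{a_\lambda(C^{-1})_{kj}(\mu)\}
=-\sum_{k,n}\{a_\lambda c_{ik;n}\}\,(\mu+\partial)^n\big((C^{-1})_{kj}(\mu)\big)\,.
\end{equation*}
It remains to solve for the column $\big(\{a_\lambda(C^{-1})_{kj}(\mu)\}\big)_k$. The crucial observation is that the matrix pseudodifferential operator $\big(C_{ik}(\lambda+\mu+\partial)\big)$ — that is, $C(\partial)$ after the formal substitution $\partial\mapsto\lambda+\mu+\partial$ — is invertible, with inverse $\big((C^{-1})_{ri}(\lambda+\mu+\partial)\big)$: since $\lambda+\mu+\partial$ obeys the same commutation relation with elements of $\mc V$ as $\partial$ does (formally, $C(\lambda+\mu+\partial)=e^{-(\lambda+\mu)x}C(\partial)e^{(\lambda+\mu)x}$), composition of pseudodifferential operators in the variable $\lambda+\mu+\partial$ is governed by the same rule as in the variable $\partial$, whence $\sum_i(C^{-1})_{ri}(\lambda+\mu+\partial)\circ C_{ik}(\lambda+\mu+\partial)$ is the symbol of $(C^{-1}\circ C)(\partial)=\id$ with $\partial\mapsto\lambda+\mu+\partial$, i.e.\ $\delta_{rk}$. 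Applying the operator $\big((C^{-1})_{ri}(\lambda+\mu+\partial)\big)$ to both sides of the last displayed equation — it preserves $\mc V((\lambda^{-1}))((\mu^{-1}))$ by Lemma~\ref{20111006:lem}, the negative powers of $\lambda+\mu+\partial$ being expanded through the embedding $\iota_{\mu,\lambda}$ — and relabeling indices yields exactly \eqref{20111012:eq2a}, together with all the stated membership assertions. I expect this invertibility/expansion-ordering step to be the main obstacle: one must check carefully that the substitution $\partial\mapsto\lambda+\mu+\partial$ is compatible with composition of pseudodifferential operators and that every operator involved genuinely acts on $\mc V((\lambda^{-1}))((\mu^{-1}))$ consistently with $\iota_{\mu,\lambda}$. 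Everything else is bookkeeping dictated by the sesquilinearity and Leibniz axioms.

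Finally, the identity \eqref{20111012:eq2b} is proved in the same way, starting instead from $\sum_k(C^{-1})_{ik}(\partial)\circ C_{kj}(\partial)=\delta_{ij}$ and using the right Leibniz rule \eqref{20110921:eq3} in place of the left one. Here the adjoint operators $C^*$ and $(C^*)^{-1}=(C^{-1})^*$ enter because the right Leibniz rule, combined with sesquilinearity, produces $(-\lambda-\partial)$-shifts — via the relation $\{(\partial^p b)_{\lambda+\partial}c\}_\to d=\{b_{\lambda+\partial}c\}_\to\big((-\lambda-\partial)^p d\big)$ — which collect into the symbols of $C^*$ and of its inverse, now expanded through the embedding $\iota_{\lambda,\lambda+\mu}$. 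The same invertibility argument as above, applied to the shifted matrix pseudodifferential operator $\big((C^{-1})_{ij}(\lambda+\partial)\big)$, then gives \eqref{20111012:eq2b}.
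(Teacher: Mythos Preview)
The paper does not give its own proof of this lemma: it is quoted verbatim from \cite[Lem.~3.9]{DSK13} and simply cited as having been proved there. So there is no proof in the present paper to compare your attempt against.

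That said, your argument is the natural one and is essentially how such identities are established: differentiate the relation $C\circ C^{-1}=\id$ using the Leibniz and sesquilinearity axioms to obtain a linear equation for the column $\big(\{a_\lambda(C^{-1})_{kj}(\mu)\}\big)_k$, then invert using $C^{-1}(\lambda+\mu+\partial)$, with the expansion dictated by $\iota_{\mu,\lambda}$. Your ``Leibniz-type'' identity for $\{a_\lambda(A\circ B)_{ij}(\mu)\}$ is correct and is exactly what drives the computation; the invertibility of the shifted operator $C(\lambda+\mu+\partial)$ with inverse $C^{-1}(\lambda+\mu+\partial)$ is indeed the only point requiring care, and your justification (that the shift $\partial\mapsto\lambda+\mu+\partial$ is an algebra automorphism of pseudodifferential operators, since $\lambda+\mu+\partial$ satisfies the same commutation relations with $\mc V$ as $\partial$) is the right one. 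The treatment of \eqref{20111012:eq2b} via the right Leibniz rule, with the adjoint arising from the $(-\lambda-\partial)$-shifts, is likewise correct. One small point: in your last paragraph you invert the shifted matrix $\big((C^{-1})_{ij}(\lambda+\partial)\big)$, but the operator that actually needs inverting to isolate $\{(C^{-1})_{ij}(\lambda)_{\lambda+\mu}a\}$ is the one acting \emph{from the right} on the argument, which after the right Leibniz rule becomes $(C^*)^{-1}(\mu)$ applied via $\iota_{\lambda,\lambda+\mu}$; this is consistent with what you wrote but the phrasing could be sharpened.
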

\begin{corollary}\label{20130514:cor}
Let $\{\cdot\,_\lambda\,\cdot\}:\,\mc V\times\mc V\to\mc V((\lambda^{-1}))$ 
be a $\lambda$-bracket on the differential algebra $\mc V$.
Let
$C(\partial)=\big(C_{ij}(\partial)\big)_{i,j=1}^\ell\in\Mat_{\ell\times\ell}\big(\mc V((\partial^{-1}))\big)$
be an invertible $\ell\times\ell$ matrix pseudodifferential operator with coefficients in $\mc V$,
and let
$C^{-1}(\partial)=\big((C^{-1})_{ij}(\partial)\big)_{i,j=1}^\ell\in\Mat_{\ell\times\ell}\big(\mc V((\partial^{-1}))\big)$
be its inverse.
Let $a\in\mc V$, and assume that
\begin{equation}\label{20130514:eq1}
\{a_\lambda C_{ij}(\mu)\}\,\in\mc V_{\lambda,\mu}
\,\,\text{ for all } i,j=1,\dots,\ell\,.
\end{equation}
(As before, we identify $\mc V_{\lambda,\mu}$ with its image 
$\iota_{\mu,\lambda}(\mc V_{\lambda,\mu})\subset\mc V((\lambda^{-1}))((\mu^{-1}))$.)
Then, we have
$\big\{a_\lambda (C^{-1})_{ij}(\mu)\big\},
\,\big\{(C^{-1})_{ij}(\lambda) _{\lambda+\mu} a\big\}\,\in\mc V_{\lambda,\mu}$.
In fact, 
the following identities hold in the space $\mc V_{\lambda,\mu}$:
\begin{equation}\label{20111012:eq2c}
\big\{a_\lambda (C^{-1})_{ij}(\mu)\big\}
=
-\sum_{r,t=1}^\ell\sum_{n=-\infty}^N
(C^{-1})_{ir}(\lambda+\mu+\partial)
\{a_\lambda c_{rt;n}\} (\mu+\partial)^n (C^{-1})_{tj}(\mu)
\,,
\end{equation}
and
\begin{equation}\label{20111012:eq2d}
\big\{(C^{-1})_{ij}(\lambda) _{\lambda+\mu} a\big\}
=
-\!\!\sum_{r,t=1}^\ell
\!\sum_{n=-\infty}^N
\{{c_{rt;n}}_{\lambda+\mu+\partial}a\}_\to
({C^*}^{-1})_{ri}(\mu) 
(\lambda+\partial)^n (C^{-1})_{tj}(\lambda),
\end{equation}
where $C_{ij}=\sum_{n=-\infty}^N c_{ij;n}\partial^n$,
\end{corollary}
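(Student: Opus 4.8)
The plan is to derive Corollary~\ref{20130514:cor} directly from Lemma~\ref{20111012:lem} by showing that, under the extra hypothesis~\eqref{20130514:eq1}, the right-hand sides of \eqref{20111012:eq2a} and \eqref{20111012:eq2b} actually lie in the smaller space $\mc V_{\lambda,\mu}$, so that the identities can be read inside $\mc V_{\lambda,\mu}$ rather than only inside the localized Laurent-series spaces. First I would record that \eqref{20130514:eq1} says precisely that each coefficient $\{a_\lambda c_{rt;n}\}$, when assembled with the appropriate powers of $\mu$, produces an element of $\mc V_{\lambda,\mu}$; more to the point, by the definition of the $\lambda$-bracket the individual objects $\{a_\lambda c_{rt;n}\}$ lie in $\mc V((\lambda^{-1}))$, and I need to combine them with the pseudodifferential symbols $(C^{-1})_{ir}$, $(C^{-1})_{tj}$ evaluated at shifted arguments.

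The key step is an application of Lemma~\ref{20111006:lem}. Consider the expression on the right of \eqref{20111012:eq2a}. The factor $\iota_{\mu,\lambda}(C^{-1})_{ir}(\lambda+\mu+\partial)$ is the image under $\iota_{\mu,\lambda}$ of the symbol $(C^{-1})_{ir}(\lambda+\mu)\in\mc V_{\lambda,\mu}$ with $\partial$ adjoined as a shift of the $(\lambda+\mu)$ variable; similarly $(\mu+\partial)^n(C^{-1})_{tj}(\mu)$ is built from $(C^{-1})_{tj}(\mu)\in\mc V[[\mu^{-1}]][\mu]\subset\mc V_{\lambda,\mu}$ with $\mu$ shifted by $\partial$. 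The middle factor $\{a_\lambda c_{rt;n}\}\in\mc V((\lambda^{-1}))\subset\mc V_{\lambda,\mu}$ under the obvious inclusion. Thus the whole summand has the shape $A(\lambda+S,\mu+T)\,B(\lambda,\mu)\,A'(\lambda+S',\mu+T')$ with $A,A',B\in\mc V_{\lambda,\mu}$ and $S,T,\dots$ each equal to $\partial$ acting on the coefficients to the right, and Lemma~\ref{20111006:lem} (applied twice, or once after first multiplying two of the factors) guarantees the product lies in $\mc V_{\lambda,\mu}$. Summing the finitely many $r,t$ and the (convergent, since bounded above in $n$) sum over $n$ keeps us in $\mc V_{\lambda,\mu}$. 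Hence $\{a_\lambda(C^{-1})_{ij}(\mu)\}\in\mc V_{\lambda,\mu}$, and since the embedding $\iota_{\mu,\lambda}$ is injective, the identity \eqref{20111012:eq2a} of Lemma~\ref{20111012:lem}, which holds in $\mc V((\lambda^{-1}))((\mu^{-1}))$, descends to the identity \eqref{20111012:eq2c} in $\mc V_{\lambda,\mu}$; one drops the symbol $\iota_{\mu,\lambda}$ since everything is now interpreted in $\mc V_{\lambda,\mu}$. The same argument applied to \eqref{20111012:eq2b}, using instead the embedding $\iota_{\lambda,\lambda+\mu}$ and noting that $C^*$ and hence $({C^*}^{-1})_{ri}$ has the same coefficients $c_{rt;n}$ up to the standard sign/shift rule for the adjoint (so that hypothesis~\eqref{20130514:eq1} controls those brackets as well), yields \eqref{20111012:eq2d}; one should also use skewsymmetry \eqref{20110921:eq2} to rewrite $\{a_\lambda c_{rt;n}\}$-type brackets as $\{{c_{rt;n}}_{\lambda+\mu+\partial}a\}_\to$-type brackets, matching the form of \eqref{20111012:eq2b}.

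The main obstacle I anticipate is purely bookkeeping: making sure that the three (or four) shifted arguments $\lambda+\mu+\partial$, $\mu+\partial$, $\lambda$, etc., are grouped correctly so that Lemma~\ref{20111006:lem} applies with the right endomorphisms $S,T$ — in particular, that each occurrence of $\partial$ is understood as acting on everything to its right, and that after expanding negative powers of $\lambda+\mu$ by $\iota_{\mu,\lambda}$ (resp.\ $\iota_{\lambda,\lambda+\mu}$) one genuinely gets the canonical representative of an element of $\mc V_{\lambda,\mu}$. There is also a minor point in \eqref{20111012:eq2d}: the order of the two factors $({C^*}^{-1})_{ri}(\mu)$ and $(\lambda+\partial)^n(C^{-1})_{tj}(\lambda)$ has been switched relative to \eqref{20111012:eq2b}, which is legitimate because $\mc V$ is commutative, but one should check the $\partial$-shifts are unaffected by this rearrangement (they are, since $(C^{-1})_{tj}(\lambda)$ involves only $\lambda$, not $\mu$, so no $\partial$ from the other factor acts on it). Once these identifications are in place, the corollary follows with no further computation.
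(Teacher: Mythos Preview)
Your approach coincides with the paper's: its entire proof reads ``It is an immediate corollary of Lemmas~\ref{20111006:lem} and~\ref{20111012:lem}.'' Your elaboration for the first identity \eqref{20111012:eq2c} is correct and spells out exactly how Lemma~\ref{20111006:lem} is applied (the inner expression $\sum_n\{a_\lambda c_{rt;n}\}(\mu+\partial)^n(C^{-1})_{tj}(\mu)$ lies in $\mc V_{\lambda,\mu}$ by hypothesis~\eqref{20130514:eq1} and Lemma~\ref{20111006:lem}, and one then applies the outer factor $(C^{-1})_{ir}(\lambda+\mu+\partial)$ the same way).

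Your handling of the second identity \eqref{20111012:eq2d} contains a confusion, however. Lemma~\ref{20111012:lem} already delivers formula \eqref{20111012:eq2b} with the brackets in the form $\{{c_{rt;n}}_{\lambda+\mu+\partial}a\}_\to$; there is nothing to ``rewrite'' from $\{a_\lambda c_{rt;n}\}$-type brackets, so no appeal to skewsymmetry is needed---and indeed skewsymmetry is not among the hypotheses of the corollary, so you are not entitled to use it. Likewise, the factor $({C^*}^{-1})_{ri}(\mu)$ sits outside any $\lambda$-bracket, so hypothesis~\eqref{20130514:eq1} is not what controls it; it is simply a Laurent series in $\mu$, hence already an element of $\mc V_{\lambda,\mu}$. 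The argument for \eqref{20111012:eq2d} via Lemma~\ref{20111006:lem} is formally parallel to that for \eqref{20111012:eq2c}: each factor on the right-hand side of \eqref{20111012:eq2b} is a Laurent series in one of $\lambda$, $\mu$, or $\lambda+\mu$ (hence in $\mc V_{\lambda,\mu}$), and the $\partial$-shifts are handled by Lemma~\ref{20111006:lem}; the injectivity of $\iota_{\lambda,\lambda+\mu}$ then lets you read the identity inside $\mc V_{\lambda,\mu}$. Drop the skewsymmetry remark and the adjoint-coefficient remark, and your argument matches the paper's intent.
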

\begin{proof}
It is an immediate corollary of Lemmas \ref{20111006:lem} and \ref{20111012:lem}.
\end{proof}

\section{Dirac reduction for (non-local) PVAs}\label{sec:2}

Let $\mc V$ be a (non-local) Poisson vertex algebra 
with $\lambda$-bracket $\{\cdot_{\lambda}\cdot\}$.
Let $\theta_1,\ldots,\theta_m$ be elements of $\mc V$,
and let $\mc I=\langle\theta_1,\ldots,\theta_m\rangle_{\mc V}$
be the differential ideal generated by them.
If $\mc I\subset\mc V$ is a PVA ideal, then the quotient differential algebra $\mc V/\mc I$
inherits a natural structure of (non-local) PVA.
In general, we shall modify the PVA $\lambda$-bracket $\{\cdot_{\lambda}\cdot\}$
via a construction which, for the finite dimensional setup, was introduced by Dirac \cite{Dir50}.
We thus get a new PVA $\lambda$-bracket $\{\cdot_{\lambda}\cdot\}^D$ on $\mc V$,
with the property that all the elements $\theta_i$ are central with respect to the modified
$\lambda$-bracket.
Therefore, $\mc I\subset\mc V$ becomes a PVA ideal for the modified $\lambda$-bracket,
and so we can consider the quotient (non-local) PVA $\mc V/\mc I$.

Consider the matrix pseudodifferential operator
$$
C(\partial)=(C_{\alpha\beta}(\partial))_{\alpha,\beta=1}^m
\in\Mat_{m\times m}(\mc V((\partial^{-1})))
\,,
$$
whose symbol is
\begin{equation}\label{C}
C_{\alpha\beta}(\lambda)=\{\theta_{\beta}{}_{\lambda}\theta_{\alpha}\}\,.
\end{equation}
By the skew-commutativity axiom \eqref{20110921:eq2},
the pseudodifferential operator $C(\partial)$ is skewadjoint.
We shall assume that the matrix pseudodifferential operator $C(\partial)$ is invertible, 
and we denote its inverse by
$C^{-1}(\partial)=\big((C^{-1})_{\alpha\beta}(\partial)\big)_{\alpha,\beta=1}^m
\in\Mat_{m\times m}(\mc V((\partial^{-1})))$.
\begin{definition}\label{20130514:def}
The \emph{Dirac modification} of the PVA $\lambda$-bracket $\{\cdot_{\lambda}\cdot\}$,
associated to the elements $\theta_1,\dots,\theta_m$,
is the map
$\{\cdot_{\lambda}\cdot\}^D:\,\mc V\times\mc V\to\mc V((\lambda^{-1}))$
given by ($a,b\in\mc V$):
\begin{equation}\label{dirac}
\{a_{\lambda}b\}^D
=\{a_{\lambda}b\}
-\sum_{\alpha,\beta=1}^m
\{{\theta_{\beta}}_{\lambda+\partial}b\}_{\to}
(C^{-1})_{\beta\alpha}(\lambda+\partial)
\{a_{\lambda}\theta_{\alpha}\}\,.
\end{equation}
\end{definition}
\begin{theorem}\label{prop:dirac}
Let $\mc V$ be a (non-local) PVA with $\lambda$-bracket $\{\cdot\,_\lambda\,\cdot\}$.
Let $\theta_1,\dots,\theta_m\in\mc V$ be  elements such that
the corresponding matrix pseudodifferential operator 
$C(\partial)=(C_{\alpha\beta}(\partial))_{\alpha,\beta=1}^m\in\Mat_{m\times m}(\mc V((\partial^{-1})))$
given by \eqref{C} is invertible.
\begin{enumerate}[(a)]
\item
The Dirac modification $\{\cdot\,_\lambda\,\cdot\}^D$
given by equation \eqref{dirac}
is a PVA $\lambda$-bracket on $\mc V$.
\item
All the elements $\theta_i,\,i=1,\dots,m$, are central 
with respect to the Dirac modified $\lambda$-bracket:
$\{a_\lambda\theta_i\}^D=\{{\theta_i}_\lambda a\}^D=0$
for all $i=1,\dots,m$ and $a\in\mc V$.
\item
The differential ideal $\mc I=\langle\theta_1,\dots,\theta_m\rangle_{\mc V}\subset\mc V$,
generated by $\theta_1,\dots,\theta_m$,
is an ideal with respect to the Dirac modified $\lambda$-bracket $\{\cdot\,_\lambda\,\cdot\}^D$,
namely:
$\{\mc I\,_\lambda\,\mc V\}^D$,
$\{\mc V\,_\lambda\,\mc I\}^D\,
\subset\mc I((\lambda^{-1}))$.
\end{enumerate}
The quotient space $\mc V/\mc I$ is a (non-local) PVA,
with $\lambda$-bracket induced by $\{\cdot\,_\lambda\,\cdot\}^D$,
which we call the \emph{Dirac reduction} of $\mc V$ 
by the constraints $\theta_1,\dots,\theta_m$.
\end{theorem}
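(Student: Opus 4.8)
The plan is to verify, in order, each of the defining properties of a non-local PVA $\lambda$-bracket for $\{\cdot\,_\lambda\,\cdot\}^D$, and then deduce parts (b) and (c) as comparatively easy consequences. First I would record the skewsymmetry of $C(\partial)$, i.e.\ $C^*=-C$, which follows directly from \eqref{20110921:eq2} applied to $C_{\alpha\beta}(\lambda)=\{\theta_\beta{}_\lambda\theta_\alpha\}$; this gives $(C^{-1})^*=-C^{-1}$ as well, a fact that will be used repeatedly in the skewsymmetry check for $\{\cdot\,_\lambda\,\cdot\}^D$. Next I would check sesquilinearity \eqref{20110921:eq1}: this is a formal manipulation using that the original bracket is sesquilinear and that the correction term in \eqref{dirac} is built out of $\{{\theta_\beta}_{\lambda+\partial}b\}_\to$, $(C^{-1})_{\beta\alpha}(\lambda+\partial)$ and $\{a_\lambda\theta_\alpha\}$, so that replacing $a$ by $\partial a$ pulls out a $-\lambda$ and replacing $b$ by $\partial b$ pulls out a $(\lambda+\partial)$ acting on the whole expression — the $(\lambda+\partial)$'s inside the correction term are designed precisely so that they telescope correctly.

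Then I would verify the two Leibniz rules \eqref{20110921:eq3} and skewsymmetry \eqref{20110921:eq2} for $\{\cdot\,_\lambda\,\cdot\}^D$. For the left Leibniz rule $\{a_\lambda bc\}^D$, one expands using the left Leibniz rule for the original bracket in the first summand, and in the correction term one uses the left Leibniz rule on $\{{\theta_\beta}_{\lambda+\partial}bc\}_\to$, keeping careful track of where $\partial$ acts; for the right Leibniz rule one similarly expands $\{ab_\lambda c\}^D$ using the right Leibniz rule on both $\{ab_\lambda c\}$ and $\{a_\lambda\theta_\alpha\}$ replaced by $\{ab_\lambda\theta_\alpha\}$. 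Skewsymmetry is where the relation $(C^{-1})^*=-C^{-1}$ enters: writing out $\{b_{-\lambda-\partial}a\}^D$ and using skewsymmetry of the original bracket on each of the three factors, together with the adjointness relation for $C^{-1}$, the correction term maps to minus itself, so $\{b_\lambda a\}^D=-\{a_{-\lambda-\partial}b\}^D$. Admissibility \eqref{20110921:eq4} for $\{\cdot\,_\lambda\,\cdot\}^D$ should follow from admissibility of the original bracket together with Lemma~\ref{20111006:lem} and Corollary~\ref{20130514:cor}: the iterated Dirac bracket $\{a_\lambda\{b_\mu c\}^D\}^D$ is a sum of products of original iterated brackets, symbols of $C^{-1}$ evaluated at shifted arguments, and brackets of the coefficients $c_{\alpha\beta;n}$ — and Corollary~\ref{20130514:cor} is exactly the tool that guarantees $\{a_\lambda(C^{-1})_{\beta\alpha}(\mu)\}$ lies in $\mc V_{\lambda,\mu}$, so that all such products lie in $\mc V_{\lambda,\mu}$ by Lemma~\ref{20111006:lem}.

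The main obstacle is the Jacobi identity \eqref{20110922:eq3} for $\{\cdot\,_\lambda\,\cdot\}^D$: this is a lengthy computation in $\mc V_{\lambda,\mu}$. The strategy is to expand $\{a_\lambda\{b_\mu c\}^D\}^D-\{b_\mu\{a_\lambda c\}^D\}^D-\{\{a_\lambda b\}^D{}_{\lambda+\mu}c\}^D$ fully, substituting \eqref{dirac} at each occurrence; using the Jacobi identity for the original bracket, the terms with no $C^{-1}$ factor cancel. The remaining terms, organized by the number of $C^{-1}$ factors (one or two), must cancel among themselves. For the terms with one $C^{-1}$ factor, one uses the Jacobi identity for the original bracket applied to triples involving the $\theta_\alpha$'s (together with sesquilinearity to move the $(\lambda+\partial)$-type operators past the brackets). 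For the terms with two $C^{-1}$ factors, the key input is the formulas \eqref{20111012:eq2c}–\eqref{20111012:eq2d} from Corollary~\ref{20130514:cor} for the bracket of $(C^{-1})_{\beta\alpha}$, together with the identity $\sum_\gamma(C^{-1})_{\beta\gamma}(\mu+\partial)C_{\gamma\alpha}(\mu)=\delta_{\beta\alpha}$ coming from $C\,C^{-1}=\id$; substituting the formula for $\{a_\lambda(C^{-1})_{\beta\alpha}(\mu)\}$ produces terms that recombine, via this inverse relation and the definition $C_{\alpha\beta}(\lambda)=\{\theta_\beta{}_\lambda\theta_\alpha\}$, into the terms with one $C^{-1}$ factor but opposite sign, so that everything cancels. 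Once part (a) is established, part (b) is immediate from \eqref{dirac}: setting $b=\theta_i$, the correction term becomes $\sum_{\alpha,\beta}\{{\theta_\beta}_{\lambda+\partial}\theta_i\}_\to(C^{-1})_{\beta\alpha}(\lambda+\partial)\{a_\lambda\theta_\alpha\}$, and $\{{\theta_\beta}_{\mu}\theta_i\}=C_{i\beta}(\mu)$, so $\sum_\beta C_{i\beta}(\lambda+\partial)(C^{-1})_{\beta\alpha}(\lambda+\partial)=\delta_{i\alpha}$ turns the correction term into exactly $\{a_\lambda\theta_i\}$, giving $\{a_\lambda\theta_i\}^D=0$; then $\{\theta_i{}_\lambda a\}^D=0$ follows by skewsymmetry of $\{\cdot\,_\lambda\,\cdot\}^D$. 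Part (c) then follows from (b) exactly as in the remark in the text: $\{\cdot\,_\lambda\,\cdot\}^D$ is sesquilinear and satisfies the Leibniz rules, so the differential ideal generated by the central elements $\theta_1,\dots,\theta_m$ is automatically an ideal for $\{\cdot\,_\lambda\,\cdot\}^D$, and the quotient $\mc V/\mc I$ inherits the PVA structure.
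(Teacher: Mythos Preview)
Your plan is correct and follows essentially the same route as the paper: sesquilinearity, Leibniz, and skewsymmetry are short checks (the paper in fact derives the right Leibniz rule from the left one plus skewsymmetry rather than checking it directly, a minor economy); admissibility is proved exactly as you describe, via Lemma~\ref{20111006:lem} and Corollary~\ref{20130514:cor}; and parts (b) and (c) go just as you say.

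One correction about the Jacobi step. Your description of the cancellation mechanism for the higher-order terms is slightly off: applying Corollary~\ref{20130514:cor} to a term containing a bracket $\{a_\lambda(C^{-1})_{\beta\alpha}(\mu)\}$ \emph{increases} the number of $C^{-1}$ factors (it produces two $C^{-1}$'s sandwiching a bracket $\{a_\lambda\{\theta_\eta{}_\mu\theta_\zeta\}\}$), rather than recombining down to one-factor terms, and the identity $CC^{-1}=\id$ is in fact \emph{not} used in the Jacobi verification at all---only in part (b). What actually happens is closer to a direct term-by-term bookkeeping: after expanding each of $\{a_\lambda\{b_\mu c\}^D\}^D$, $\{b_\mu\{a_\lambda c\}^D\}^D$, and $\{\{a_\lambda b\}^D{}_{\lambda+\mu}c\}^D$ via \eqref{dirac}, sesquilinearity, and the Leibniz rules, one gets eight terms apiece; one then uses \eqref{20111012:eq2c}--\eqref{20111012:eq2d} to rewrite the terms involving brackets of $C^{-1}$, and the resulting $3\times 8$ terms match up in eight triples, each of which vanishes by the original Jacobi identity (applied to $(a,b,c)$, $(a,\theta,c)$, $(\theta,b,c)$, $(a,b,\theta)$, $(a,\theta,\theta)$, $(\theta,b,\theta)$, $(\theta,\theta,c)$, and $(\theta,\theta,\theta)$) together with skewsymmetry and the skewadjointness of $C$. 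Your ingredients are the right ones; just be aware that the cancellation is this pairwise matching rather than a collapse via $CC^{-1}=\id$.
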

\begin{proof}
Both sesquilinearity conditions \eqref{20110921:eq1} 
for the Dirac modified $\lambda$-bracket \eqref{dirac} are immediate to check.
%
%
The skewsymmetry condition \eqref{20110921:eq2} 
for the Dirac modified $\lambda$-bracket \eqref{dirac}
can also be easily proved:
it follows by the skewsymmetry of the $\lambda$-bracket $\{\cdot\,_\lambda\,\cdot\}$,
and by the fact that the matrix $C(\partial)$ (hence $C^{-1}(\partial)$) is skewadjoint.
%
%
The Dirac modified $\lambda$-bracket $\{\cdot\,_\lambda\,\cdot\}^D$
obviously satisfies the left Leibniz rule \eqref{20110921:eq3},
since $\{\cdot\,_\lambda\,\cdot\}$ does,
and therefore it also satisfies the right Leibniz rule,
as a consequence of the left Leibniz rule and the skewsymmetry.


Next, we prove that the Dirac modified $\lambda$-bracket $\{\cdot\,_\lambda\,\cdot\}^D$
is admissible, in the sense of equation \eqref{20110921:eq4}.
For this, we compute the triple $\lambda$-bracket $\{a_\lambda\{b_\mu c\}^D\}^D$
using the definition \eqref{dirac},
the sesquilinearity conditions \eqref{20110921:eq1}
and the left and right Leibniz rules \eqref{20110921:eq3}.
We get
\begin{eqnarray}
&&
\big\{a_\lambda{\{b_\mu c\}^D}\big\}^D
=
\big\{
a_{\lambda}
\{b_{\mu}c\}
\big\}
\label{1a}\\
&&
-\sum_{\gamma,\delta=1}^m
\big\{
a_{\lambda}
\{{\theta_{\delta}}_{y}c\}
\big\}
\Big(\Big|_{y=\mu+\partial}
(C^{-1})_{\delta\gamma}(\mu+\partial)
\{b_{\mu}\theta_{\gamma}\}
\Big)
\label{2a}\\
&&
-\sum_{\gamma,\delta=1}^m
\{{\theta_{\delta}}_{\lambda+\mu+\partial}c\}_{\to}
\big\{
a_{\lambda}
(C^{-1})_{\delta\gamma}(y)
\big\}
\Big(\Big|_{y=\mu+\partial}
\{b_{\mu}\theta_{\gamma}\}
\Big)
\label{3a}\\
&&
-\sum_{\gamma,\delta=1}^m
\{{\theta_{\delta}}_{\lambda+\mu+\partial}c\}_{\to}
(C^{-1})_{\delta\gamma}(\lambda+\mu+\partial)
\big\{
a_{\lambda}
\{b_{\mu}\theta_{\gamma}\}
\big\}
\label{4a}\\
&&
-\sum_{\alpha,\beta=1}^m
\big\{
{\theta_{\beta}}_{\lambda+\partial}
\{b_{\mu}c\}
\big\}_{\to}
(C^{-1})_{\beta\alpha}(\lambda+\partial)
\{a_{\lambda}\theta_{\alpha}\}
\label{5a}\\
&&
+\sum_{\alpha,\beta,\gamma,\delta=1}^m
\big\{
{\theta_{\beta}}_{x}
\{{\theta_{\delta}}_{y}c\}
\big\}
\Big(\Big|_{x=\lambda+\partial}
(C^{-1})_{\beta\alpha}(\lambda+\partial)
\{a_{\lambda}\theta_{\alpha}\}
\Big)
\nonumber\\
&&
\,\,\,\,\,\,\,\,\,\,\,\,\,\,\,\,\,\,\,\,\,\,\,\,\,\,\,\,\,\,\,\,\,\,\,\,\,\,\,\,\,\,\,\,\,\,\,\,\,\,\,\,\,\,
\,\,\,\,\,\,\,\,\,
\Big(\Big|_{y=\mu+\partial}
(C^{-1})_{\delta\gamma}(\mu+\partial)
\{b_{\mu}\theta_{\gamma}\}
\Big)
\label{6a}\\
&&
+\sum_{\alpha,\beta,\gamma,\delta=1}^m
\{{\theta_{\delta}}_{\lambda+\mu+\partial}c\}_{\to}
\big\{
{\theta_{\beta}}_{x}
(C^{-1})_{\delta\gamma}(y)
\big\}
\nonumber\\
&&
\,\,\,\,\,\,\,\,\,\,\,\,\,\,\,\,\,\,
\Big(\Big|_{x=\lambda+\partial}
(C^{-1})_{\beta\alpha}(\lambda+\partial)
\{a_{\lambda}\theta_{\alpha}\}
\Big)
\Big(\Big|_{y=\mu+\partial}
\{b_{\mu}\theta_{\gamma}\}
\Big)
\label{7a}\\
&&
+\sum_{\alpha,\beta,\gamma,\delta=1}^m
\{{\theta_{\delta}}_{\lambda+\mu+\partial}c\}_{\to}
(C^{-1})_{\delta\gamma}(\lambda+\mu+\partial)
\big\{
{\theta_{\beta}}_{\lambda+\partial}
\{b_{\mu}\theta_{\gamma}\}
\big\}_{\to}
\nonumber\\
&&
\,\,\,\,\,\,\,\,\,\,\,\,\,\,\,\,\,\,\,\,\,\,\,\,\,\,\,\,\,\,\,\,\,\,\,\,\,\,\,\,\,\,\,\,\,\,\,\,\,\,\,\,\,\,
\,\,\,\,\,\,\,\,\,\,\,\,\,\,\,\,\,\,\,\,\,\,\,\,\,\,\,\,\,\,\,\,\,\,\,\,
(C^{-1})_{\beta\alpha}(\lambda+\partial)
\{a_{\lambda}\theta_{\alpha}\}
\,.\label{8a}
\end{eqnarray}
Here and further we use the following notation:
given an element 
$$
P(\lambda,\mu)=\sum_{m,n,p=-\infty}^N p_{m,n,p}\lambda^m\mu^n(\lambda+\mu)^p
\in\mc V_{\lambda,\mu}\,,
$$ 
and $f,g\in\mc V$, we let
$$
\begin{array}{l}
\displaystyle{
P(x,y)\Big(\Big|_{x=\lambda+\partial}f\Big)
\Big(\Big|_{y=\mu+\partial}g\Big)
} \\
\displaystyle{
=
\sum_{m,n,p=-\infty}^N
p_{m,n,p}(\lambda+\mu+\partial)^p
\big((\lambda+\partial)^mf\big)\big((\mu+\partial)^ng\big)
\,\,\in\mc V_{\lambda,\mu}
\,.
}
\end{array}
$$
All the terms \eqref{1a}, \eqref{2a}, \eqref{4a}, \eqref{5a}, \eqref{6a}, and \eqref{8a},
lie in $\mc V_{\lambda,\mu}$
by the admissibility assumption on $\{\cdot\,_\lambda\,\cdot\}$
and Lemma \ref{20111006:lem}.
Moreover, by the admissibility of $\{\cdot\,_\lambda\,\cdot\}$
and the definition \eqref{C} of the matrix $C(\partial)$,
condition \eqref{20130514:eq1} holds.
Hence, we can use Corollary \ref{20130514:cor} and Lemma \ref{20111006:lem}
to deduce that
the terms \eqref{3a} and \eqref{7a} lie in $\mc V_{\lambda,\mu}$ as well.
Therefore, 
$\{a_\lambda{\{b_\mu c\}^D}\}^D$ lies in $\mc V_{\lambda,\mu}$
for every $a,b,c\in\mc V$,
i.e. the Dirac modification $\{\cdot\,_\lambda\,\cdot\}^D$
is admissible.


In order to complete the proof of part (a) we are left to check the Jacobi identity \eqref{20110922:eq3}
for the Dirac modified $\lambda$-bracket.
We can use equation \eqref{20111012:eq2c} in Corollary \ref{20130514:cor}
to rewrite the terms \eqref{3a} and \eqref{7a}.
As a result, we get
\begin{eqnarray}
&&
\big\{a_\lambda{\{b_\mu c\}^D}\big\}^D
=
\big\{
a_{\lambda}
\{b_{\mu}c\}
\big\}
\label{1b}\\
&&
-\sum_{\gamma,\delta=1}^m
\big\{
a_{\lambda}
\{{\theta_{\delta}}_{y}c\}
\big\}
\Big(\Big|_{y=\mu+\partial}
(C^{-1})_{\delta\gamma}(\mu+\partial)
\{b_{\mu}\theta_{\gamma}\}
\Big)
\label{2b}\\
&&
+\sum_{\gamma,\delta,\eta,\zeta=1}^m
\{{\theta_{\delta}}_{\lambda+\mu+\partial}c\}_{\to}
(C^{-1})_{\delta \zeta}(\lambda+\mu+\partial)
\big\{
a_\lambda {\{{\theta_\eta}_y\theta_\zeta\}}
\big\}
\nonumber\\
&&
\,\,\,\,\,\,\,\,\,\,\,\,\,\,\,\,\,\,\,\,\,\,\,\,\,\,\,\,\,\,\,\,\,\,\,\,\,\,\,\,\,\,\,\,\,\,\,\,\,\,\,\,\,\,
\,\,\,\,\,\,\,\,\,
\Big(\Big|_{y=\mu+\partial}
(C^{-1})_{\eta\gamma}(\mu+\partial)
\{b_{\mu}\theta_{\gamma}\}
\Big)
\label{3b}\\
&&
-\sum_{\gamma,\delta=1}^m
\{{\theta_{\delta}}_{\lambda+\mu+\partial}c\}_{\to}
(C^{-1})_{\delta\gamma}(\lambda+\mu+\partial)
\big\{
a_{\lambda}
\{b_{\mu}\theta_{\gamma}\}
\big\}
\label{4b}\\
&&
-\sum_{\alpha,\beta=1}^m
\big\{
{\theta_{\beta}}_{\lambda+\partial}
\{b_{\mu}c\}
\big\}_{\to}
(C^{-1})_{\beta\alpha}(\lambda+\partial)
\{a_{\lambda}\theta_{\alpha}\}
\label{5b}\\
%
&&
+\sum_{\alpha,\beta,\gamma,\delta=1}^m
\big\{
{\theta_{\beta}}_{x}
\{{\theta_{\delta}}_{y}c\}
\big\}
\Big(\Big|_{x=\lambda+\partial}
(C^{-1})_{\beta\alpha}(\lambda+\partial)
\{a_{\lambda}\theta_{\alpha}\}
\Big)
\nonumber\\
&&
\,\,\,\,\,\,\,\,\,\,\,\,\,\,\,\,\,\,\,\,\,\,\,\,\,\,\,\,\,\,\,\,\,\,\,\,\,\,\,\,\,\,\,\,\,\,\,\,\,\,\,\,\,\,
\,\,\,\,\,\,\,\,\,
\Big(\Big|_{y=\mu+\partial}
(C^{-1})_{\delta\gamma}(\mu+\partial)
\{b_{\mu}\theta_{\gamma}\}
\Big)
\label{6b}\\
&&
-\sum_{\alpha,\beta,\gamma,\delta,\eta,\zeta=1}^m
\{{\theta_{\delta}}_{\lambda+\mu+\partial}c\}_{\to}
(C^{-1})_{\delta\zeta}(\lambda+\mu+\partial)
\big\{
{\theta_{\beta}}_x {\{{\theta_\eta}_y\theta_\zeta\}}
\big\}
\nonumber\\
&&
\,\,\,
\Big(\Big|_{x=\lambda+\partial}
(C^{-1})_{\beta\alpha}(\lambda+\partial)
\{a_{\lambda}\theta_{\alpha}\}
\Big)
\Big(\Big|_{y=\mu+\partial}
(C^{-1})_{\eta\gamma}(\mu+\partial)
\{b_{\mu}\theta_{\gamma}\}
\Big)
\label{7b}\\
&&
+\sum_{\alpha,\beta,\gamma,\delta=1}^m
\{{\theta_{\delta}}_{\lambda+\mu+\partial}c\}_{\to}
(C^{-1})_{\delta\gamma}(\lambda+\mu+\partial)
\big\{
{\theta_{\beta}}_{\lambda+\partial}
\{b_{\mu}\theta_{\gamma}\}
\big\}_{\to}
\nonumber\\
&&
\,\,\,\,\,\,\,\,\,\,\,\,\,\,\,\,\,\,\,\,\,\,\,\,\,\,\,\,\,\,\,\,\,\,\,\,\,\,\,\,\,\,\,\,\,\,\,\,\,\,\,\,\,\,
\,\,\,\,\,\,\,\,\,\,\,\,\,\,\,\,\,\,\,\,\,\,\,\,\,\,\,\,\,\,\,\,\,\,\,\,
(C^{-1})_{\beta\alpha}(\lambda+\partial)
\{a_{\lambda}\theta_{\alpha}\}
\,.\label{8b}
\end{eqnarray}
Exchanging the roles of $a$ and $b$ and of $\lambda$ and $\mu$
we get the second term of the LHS of the Jacobi identity:
\begin{eqnarray}
&&
\big\{b_\mu{\{a_\lambda c\}^D}\big\}^D
=
\big\{
b_{\mu}
\{a_{\lambda}c\}
\big\}
\label{1c}\\
&&
-\sum_{\gamma,\delta=1}^m
\big\{
b_{\mu}
\{{\theta_{\delta}}_{x}c\}
\big\}
\Big(\Big|_{x=\lambda+\partial}
(C^{-1})_{\delta\gamma}(\lambda+\partial)
\{a_{\lambda}\theta_{\gamma}\}
\Big)
\label{2c}\\
&&
+\sum_{\gamma,\delta,\eta,\zeta=1}^m
\{{\theta_{\delta}}_{\lambda+\mu+\partial}c\}_{\to}
(C^{-1})_{\delta \zeta}(\lambda+\mu+\partial)
\big\{
b_\mu {\{{\theta_\eta}_x\theta_\zeta\}}
\big\}
\nonumber\\
&&
\,\,\,\,\,\,\,\,\,\,\,\,\,\,\,\,\,\,\,\,\,\,\,\,\,\,\,\,\,\,\,\,\,\,\,\,\,\,\,\,\,\,\,\,\,\,\,\,\,\,\,\,\,\,
\,\,\,\,\,\,\,\,\,
\Big(\Big|_{x=\lambda+\partial}
(C^{-1})_{\eta\gamma}(\lambda+\partial)
\{a_{\lambda}\theta_{\gamma}\}
\Big)
\label{3c}\\
&&
-\sum_{\gamma,\delta=1}^m
\{{\theta_{\delta}}_{\lambda+\mu+\partial}c\}_{\to}
(C^{-1})_{\delta\gamma}(\lambda+\mu+\partial)
\big\{
b_{\mu}
\{a_{\lambda}\theta_{\gamma}\}
\big\}
\label{4c}\\
&&
-\sum_{\alpha,\beta=1}^m
\big\{
{\theta_{\beta}}_{\mu+\partial}
\{a_{\lambda}c\}
\big\}_{\to}
(C^{-1})_{\beta\alpha}(\mu+\partial)
\{b_{\mu}\theta_{\alpha}\}
\label{5c}
\end{eqnarray}
\begin{eqnarray}
&&
+\sum_{\alpha,\beta,\gamma,\delta=1}^m
\big\{
{\theta_{\beta}}_{y}
\{{\theta_{\delta}}_{x}c\}
\big\}
\Big(\Big|_{y=\mu+\partial}
(C^{-1})_{\beta\alpha}(\mu+\partial)
\{b_{\mu}\theta_{\alpha}\}
\Big)
\nonumber\\
&&
\,\,\,\,\,\,\,\,\,\,\,\,\,\,\,\,\,\,\,\,\,\,\,\,\,\,\,\,\,\,\,\,\,\,\,\,\,\,\,\,\,\,\,\,\,\,\,\,\,\,\,\,\,\,
\,\,\,\,\,\,\,\,\,
\Big(\Big|_{x=\lambda+\partial}
(C^{-1})_{\delta\gamma}(\lambda+\partial)
\{a_{\lambda}\theta_{\gamma}\}
\Big)
\label{6c}\\
&&
-\sum_{\alpha,\beta,\gamma,\delta,\eta,\zeta=1}^m
\{{\theta_{\delta}}_{\lambda+\mu+\partial}c\}_{\to}
(C^{-1})_{\delta\zeta}(\lambda+\mu+\partial)
\big\{
{\theta_{\beta}}_y {\{{\theta_\eta}_x\theta_\zeta\}}
\big\}
\nonumber\\
&&
\,\,\,
\Big(\Big|_{y=\mu+\partial}
(C^{-1})_{\beta\alpha}(\mu+\partial)
\{b_{\mu}\theta_{\alpha}\}
\Big)
\Big(\Big|_{x=\lambda+\partial}
(C^{-1})_{\eta\gamma}(\lambda+\partial)
\{a_{\lambda}\theta_{\gamma}\}
\Big)
\label{7c}\\
&&
+\sum_{\alpha,\beta,\gamma,\delta=1}^m
\{{\theta_{\delta}}_{\lambda+\mu+\partial}c\}_{\to}
(C^{-1})_{\delta\gamma}(\lambda+\mu+\partial)
\big\{
{\theta_{\beta}}_{\mu+\partial}
\{a_{\lambda}\theta_{\gamma}\}
\big\}_{\to}
\nonumber\\
&&
\,\,\,\,\,\,\,\,\,\,\,\,\,\,\,\,\,\,\,\,\,\,\,\,\,\,\,\,\,\,\,\,\,\,\,\,\,\,\,\,\,\,\,\,\,\,\,\,\,\,\,\,\,\,
\,\,\,\,\,\,\,\,\,\,\,\,\,\,\,\,\,\,\,\,\,\,\,\,\,\,\,\,\,\,\,\,\,\,\,\,
(C^{-1})_{\beta\alpha}(\mu+\partial)
\{b_{\mu}\theta_{\alpha}\}
\,.\label{8c}
\end{eqnarray}
In a similar way we compute the RHS of the Jacobi identity
for the Dirac modified $\lambda$-bracket,
using the definition \eqref{dirac},
the sesquilinearity \eqref{20110921:eq1},
the right Leibniz rule \eqref{20110921:eq3},
and equation \eqref{20111012:eq2d}
(recall that the matrix $C$ is skewadjoint).
We get
\begin{eqnarray}
&&
\big\{{\{a_\lambda b\}^D}_{\lambda+\mu}c\big\}^D
=
\big\{
{\{a_{\lambda}b\}}_{\lambda+\mu}c
\big\}
\label{1d}\\
&&
-\sum_{\alpha,\beta=1}^m
\big\{
\{{\theta_{\beta}}_{x}b\}
_{\lambda+\mu+\partial}
c\big\}_\to
\Big(\Big|_{x=\lambda+\partial}
(C^{-1})_{\beta\alpha}(\lambda+\partial)
\{a_{\lambda}\theta_{\alpha}\}
\Big)
\label{2d}\\
&&
-\sum_{\alpha,\beta,\eta,\zeta=1}^m
\big\{
{\{{\theta_\eta}_x\theta_\zeta\}}
_{\lambda+\mu+\partial}c
\big\}_\to
\Big(\Big|_{x=\lambda+\partial}
(C^{-1})_{\eta\alpha}(\lambda+\partial)
\{a_{\lambda}\theta_{\alpha}\}
\Big)
\nonumber\\
&&
\,\,\,\,\,\,\,\,\,\,\,\,\,\,\,\,\,\,\,\,\,\,\,\,\,\,\,\,\,\,\,\,\,\,\,\,\,\,\,\,\,\,\,\,\,\,\,\,\,\,\,\,\,\,
\,\,\,\,\,\,\,\,\,\,\,\,\,\,\,\,\,\,\,\,\,\,\,\,\,\,\,\,\,\,
\times
(C^{-1})_{\zeta\beta}(\mu+\partial) 
\{{\theta_{\beta}}_{-\mu-\partial}b\}
\label{3d}\\
&&
-\sum_{\alpha,\beta=1}^m
\big\{
{\{a_{\lambda}\theta_{\alpha}\}}_{\lambda+\mu+\partial}c
\big\}_\to
({C^*}^{-1})_{\alpha\beta}(\mu+\partial)
\{{\theta_{\beta}}_{-\mu-\partial}b\}
\label{4d}\\
&&
-\sum_{\gamma,\delta=1}^m
\{{\theta_{\delta}}_{\lambda+\mu+\partial}c\}_{\to}
(C^{-1})_{\delta\gamma}(\lambda+\mu+\partial)
\big\{
{\{a_{\lambda}b\}}_{\lambda+\mu}\theta_{\gamma}
\big\}
\label{5d}\\
&&
+\sum_{\alpha,\beta,\gamma,\delta=1}^m
\{{\theta_{\delta}}_{\lambda+\mu+\partial}c\}_{\to}
(C^{-1})_{\delta\gamma}(\lambda+\mu+\partial)
\big\{
\{{\theta_{\beta}}_{x}b\}_{\lambda+\mu+\partial}\theta_{\gamma}
\big\}_\to
\nonumber\\
&&
\,\,\,\,\,\,\,\,\,\,\,\,\,\,\,\,\,\,\,\,\,\,\,\,\,\,\,\,\,\,\,\,\,\,\,\,\,\,\,\,\,\,\,\,\,\,\,\,\,\,\,\,\,\,
\,\,\,\,\,\,\,\,\,\,\,\,\,\,\,\,\,\,\,\,\,\,\,\,\,\,\,
\Big(\Big|_{x=\lambda+\partial}
(C^{-1})_{\beta\alpha}(\lambda+\partial)
\{a_{\lambda}\theta_{\alpha}\}
\Big)
\label{6d}\\
&&
+\sum_{\alpha,\beta,\gamma,\delta,\eta,\zeta=1}^m
\{{\theta_{\delta}}_{\lambda+\mu+\partial}c\}_{\to}
(C^{-1})_{\delta\gamma}(\lambda+\mu+\partial)
\big\{
{\{{\theta_\eta}_x\theta_\zeta\}}_{\lambda+\mu+\partial}\theta_{\gamma}
\big\}_\to
\nonumber\\
&&
\,\,\,\,\,\,\,\,\,\,\,\,\,\,\,\,\,\,
\Big(\Big|_{x=\lambda+\partial}
(C^{-1})_{\eta\alpha}(\lambda+\partial)
\{a_{\lambda}\theta_{\alpha}\}
\Big)
(C^{-1})_{\zeta\beta}(\mu+\partial)
\{{\theta_{\beta}}_{-\mu-\partial}b\}
\label{7d}
\end{eqnarray}
\begin{eqnarray}
&&
+\sum_{\alpha,\beta,\gamma,\delta=1}^m
\{{\theta_{\delta}}_{\lambda+\mu+\partial}c\}_{\to}
(C^{-1})_{\delta\gamma}(\lambda+\mu+\partial)
\big\{
\{a_{\lambda}\theta_{\alpha}\}_{\lambda+\mu+\partial}\theta_{\gamma}
\big\}_\to
\nonumber\\
&&
\,\,\,\,\,\,\,\,\,\,\,\,\,\,\,\,\,\,\,\,\,\,\,\,\,\,\,\,\,\,\,\,\,\,\,\,\,\,\,\,\,\,\,\,\,\,\,\,\,\,\,\,\,\,
\,\,\,\,\,\,\,\,\,\,\,\,\,\,\,\,\,\,\,\,\,\,\,\,\,\,\,\,\,\,\,\,\,\,\,\,
({C^*}^{-1})_{\alpha\beta}(\mu+\partial)
\{{\theta_{\beta}}_{-\mu-\partial}b\}
\,.
\label{8d}
\end{eqnarray}
The following equations hold
due to the skewsymmetry \eqref{20110921:eq2},
the Jacobi identity \eqref{20110922:eq3},
and the fact that the matrix $C$ is skewadjoint:
$$
\begin{array}{l}
\text{RHS}\eqref{1b}-\text{RHS}\eqref{1c}=\text{RHS}\eqref{1d}
\,\,,\,\,\,\,
\eqref{2b}-\eqref{5c}=\eqref{4d}
\,,\\
\eqref{5b}-\eqref{2c}=\eqref{2d}
\,\,,\,\,\,\,
\eqref{4b}-\eqref{4c}=\eqref{5d}
\,\,,\,\,\,\,
\eqref{3b}-\eqref{8c}=\eqref{8d}
\,, \\
\eqref{8b}-\eqref{3c}=\eqref{6d}
\,\,,\,\,\,\,
\eqref{6b}-\eqref{6c}=\eqref{3d}
\,\,,\,\,\,\,
\eqref{7b}-\eqref{7c}=\eqref{7d}
\,.
\end{array}
$$
This concludes the proof of the Jacobi identity for the Dirac modified 
$\lambda$-bracket, and of part (a).


Note that the identities $C(\partial)C^{-1}(\partial)=C^{-1}(\partial)C(\partial)=1$ read, 
in terms of the symbols of the pseudodifferential operators $C(\partial)$ and $C^{-1}(\partial)$,
as
$$
\sum_{\beta=1}^m
\{{\theta_\beta}_{\lambda+\partial}\theta_\alpha\}_\to
(C^{-1})_{\beta\gamma}(\lambda)
=\delta_{\alpha,\gamma}
\,\,,\,\,\,\,
\sum_{\beta=1}^m
(C^{-1})_{\alpha\beta}(\lambda+\partial)
\{{\theta_\gamma}_\lambda\theta_\beta\}
=\delta_{\alpha,\gamma}
\,.
$$
Part (b) is an immediate consequence of these identities and the definition \eqref{dirac}
of the Dirac modified $\lambda$-bracket.
%
%
Part (c) is an obvious corollary of part (b), due to the sesquilinearity conditions
and the left and right Leibniz rules
for the Dirac modified $\lambda$-bracket.
%
%
Finally, the last statement of the theorem obviously follows.
\end{proof}
%

\subsection{Compatibility after Dirac reduction}\label{sec:2.3}

Recall that two PVA $\lambda$-brackets 
$\{\cdot\,_\lambda\,\cdot\}_0$ and $\{\cdot\,_\lambda\,\cdot\}_1$
on the same differential algebra $\mc V$
are said to be compatible if any of their linear combinations 
$\alpha\{\cdot\,_\lambda\,\cdot\}_0+\beta\{\cdot\,_\lambda\,\cdot\}_1$,
or, equivalently,  their sum $\{\cdot\,_\lambda\,\cdot\}_0+\{\cdot\,_\lambda\,\cdot\}_1$,
is again a PVA $\lambda$-bracket.
This amounts to the following admissibility condition ($a,b,c\in\mc V$):
\begin{equation}\label{20130516:eq1}
\big\{a_\lambda{\{b_\mu c\}_1}\big\}_0
+\big\{a_\lambda{\{b_\mu c\}_0}\big\}_1
\,
\in\mc V_{\lambda,\mu}
\,,
\end{equation}
and the Jacobi compatibility condition ($a,b,c\in\mc V$):
\begin{equation}\label{20130516:eq2}
\begin{array}{l}
\displaystyle{
\vphantom{\Big(}
\big\{a_\lambda{\{b_\mu c\}_1}\big\}_0-\big\{b_\mu{\{a_\lambda c\}_1}\big\}_0
-\big\{{\{a_\lambda b\}_0}_{\lambda+\mu} c\big\}_1
} \\
\displaystyle{
\vphantom{\Big(}
+\big\{a_\lambda{\{b_\mu c\}_0}\big\}_1-\big\{b_\mu{\{a_\lambda c\}_0}\big\}_1
-\big\{{\{a_\lambda b\}_1}_{\lambda+\mu} c\big\}_0=0
\,.}
\end{array}
\end{equation}

In general, if we have two compatible PVA $\lambda$-brackets
$\{\cdot\,_\lambda\,\cdot\}_0$ and $\{\cdot\,_\lambda\,\cdot\}_1$ on $\mc V$,
and we take their Dirac reductions by a finite number of constraints
$\theta_1,\dots,\theta_m$,
we do NOT get compatible PVA $\lambda$-brackets 
on $\mc V/\mc I$, where $\mc I=\langle\theta_1,\dots,\theta_m\rangle_{\mc V}$.
In Theorem \ref{20130516:thm1}
below we show
that the Dirac reduced PVA $\lambda$-brackets on $\mc V/\mc I$
are in fact compatible
in the special case when the constraints $\theta_1,\dots,\theta_m$
are central with respect to the first $\lambda$-bracket $\{\cdot\,_\lambda\,\cdot\}_0$.

\begin{theorem}\label{20130516:thm1}
Let $\mc V$ be a differential algebra,
endowed with two compatible PVA $\lambda$-brackets 
$\{\cdot\,_\lambda\,\cdot\}_0$, $\{\cdot\,_\lambda\,\cdot\}_1$.
Let $\theta_1,\dots,\theta_m\in\mc V$ be central elements
with respect to the first $\lambda$-bracket:
$\{a_\lambda\theta_i\}_0=0$ for all $i=1,\dots,m$, $a\in\mc V$.
Let $C(\partial)=\big(C_{\alpha,\beta}(\partial)\big)_{\alpha,\beta=1}^m$
be the matrix pseudodifferential operator
given by \eqref{C} for the second $\lambda$-bracket:
$C_{\alpha,\beta}(\lambda)=\{{\theta_\beta}_\lambda{\theta_\alpha}\}_1$.
Suppose that the matrix $C(\partial)$ is invertible,
and consider the Dirac modified PVA $\lambda$-bracket
$\{\cdot\,_\lambda\,\cdot\}_1^D$ given by \eqref{dirac}.
Then,
$\{\cdot\,_\lambda\,\cdot\}_0$
and $\{\cdot\,_\lambda\,\cdot\}_1^D$
are compatible PVA $\lambda$-brackets on $\mc V$.
Moreover, the differential algebra ideal
$\mc I=\langle\theta_1,\dots,\theta_m\rangle_{\mc V}$
is a PVA ideal for both the $\lambda$-brackets 
$\{\cdot\,_\lambda\,\cdot\}_0$
and $\{\cdot\,_\lambda\,\cdot\}_1^D$,
and we have the induced compatible PVA $\lambda$-brackets on $\mc V/\mc I$.
\end{theorem}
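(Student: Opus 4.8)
The plan is to reduce everything to Theorem \ref{prop:dirac} together with the definition of compatibility via the single $\lambda$-bracket $\{\cdot\,_\lambda\,\cdot\}_0+\{\cdot\,_\lambda\,\cdot\}_1$. Since the $\theta_i$ are central for $\{\cdot\,_\lambda\,\cdot\}_0$, the matrix associated by \eqref{C} to the sum bracket $\{\cdot\,_\lambda\,\cdot\}_0+\{\cdot\,_\lambda\,\cdot\}_1$ is $0+C(\partial)=C(\partial)$, which is invertible by hypothesis. Moreover, again by centrality of the $\theta_i$ for the first bracket, the extra Dirac correction term in \eqref{dirac} computed for the sum bracket involves only $C^{-1}$ and the expressions $\{{\theta_\beta}_{\lambda+\partial}b\}_\to$ and $\{a_\lambda\theta_\alpha\}$, and each of these is unchanged if we replace the sum bracket by $\{\cdot\,_\lambda\,\cdot\}_1$ alone. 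Hence the Dirac modification of the sum bracket equals $\{\cdot\,_\lambda\,\cdot\}_0+\{\cdot\,_\lambda\,\cdot\}_1^D$, where $\{\cdot\,_\lambda\,\cdot\}_1^D$ is the Dirac modification of the second bracket alone.

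Granting this, I would argue as follows. The pair $\{\cdot\,_\lambda\,\cdot\}_0,\{\cdot\,_\lambda\,\cdot\}_1$ being compatible means precisely that $\{\cdot\,_\lambda\,\cdot\}_0+\{\cdot\,_\lambda\,\cdot\}_1$ is a PVA $\lambda$-bracket on $\mc V$. By Theorem \ref{prop:dirac}(a) applied to this PVA $\lambda$-bracket and the constraints $\theta_1,\dots,\theta_m$ (whose matrix \eqref{C} is $C(\partial)$, invertible), its Dirac modification is again a PVA $\lambda$-bracket. By the identification in the previous paragraph, that Dirac modification is exactly $\{\cdot\,_\lambda\,\cdot\}_0+\{\cdot\,_\lambda\,\cdot\}_1^D$. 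On the other hand, $\{\cdot\,_\lambda\,\cdot\}_0$ is a PVA $\lambda$-bracket by hypothesis and $\{\cdot\,_\lambda\,\cdot\}_1^D$ is one by Theorem \ref{prop:dirac}(a) applied to $\{\cdot\,_\lambda\,\cdot\}_1$. Since sesquilinearity and the Leibniz rules are linear conditions, the only content of compatibility is that the sum is admissible and satisfies the Jacobi identity — and we have just shown the sum is a PVA $\lambda$-bracket. Hence $\{\cdot\,_\lambda\,\cdot\}_0$ and $\{\cdot\,_\lambda\,\cdot\}_1^D$ are compatible.

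For the last assertion, Theorem \ref{prop:dirac}(b)--(c) applied to $\{\cdot\,_\lambda\,\cdot\}_1$ gives that the $\theta_i$ are central for $\{\cdot\,_\lambda\,\cdot\}_1^D$ and that $\mc I=\langle\theta_1,\dots,\theta_m\rangle_{\mc V}$ is a PVA ideal for $\{\cdot\,_\lambda\,\cdot\}_1^D$; the $\theta_i$ are central for $\{\cdot\,_\lambda\,\cdot\}_0$ by hypothesis, so $\mc I$ is a PVA ideal for $\{\cdot\,_\lambda\,\cdot\}_0$ as well (a differential ideal generated by central elements is automatically a PVA ideal, as noted after Definition \ref{20130513:def}). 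Therefore both brackets descend to $\mc V/\mc I$, and since the compatibility conditions \eqref{20130516:eq1}--\eqref{20130516:eq2} pass to the quotient, the induced $\lambda$-brackets on $\mc V/\mc I$ are compatible.

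The main obstacle is the first paragraph: one must check carefully that the Dirac correction term in \eqref{dirac} really is insensitive to whether one starts from $\{\cdot\,_\lambda\,\cdot\}_1$ or from $\{\cdot\,_\lambda\,\cdot\}_0+\{\cdot\,_\lambda\,\cdot\}_1$. This hinges on two points: that $C(\partial)$ (and hence $C^{-1}(\partial)$) is the same for both, which uses $\{{\theta_\beta}_\lambda\theta_\alpha\}_0=0$; and that the outer factors $\{{\theta_\beta}_{\lambda+\partial}b\}_\to$ and $\{a_\lambda\theta_\alpha\}$ appearing in \eqref{dirac} are the same, which uses centrality of the $\theta_i$ for $\{\cdot\,_\lambda\,\cdot\}_0$ once more (via skewsymmetry, $\{{\theta_\beta}_\lambda b\}_0=-\{b_{-\lambda-\partial}\theta_\beta\}_0=0$). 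Everything else is a formal consequence of Theorem \ref{prop:dirac} and of the definition of compatibility, so no new lengthy computation is needed.
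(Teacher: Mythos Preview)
Your proof is correct and takes a genuinely different, more conceptual route than the paper's.

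The paper proves compatibility by a direct computation: it expands $\{a_\lambda\{b_\mu c\}_1^D\}_0+\{a_\lambda\{b_\mu c\}_0\}_1^D$ (and the analogous pieces) using sesquilinearity, the Leibniz rules, and Corollary~\ref{20130514:cor}, obtaining the terms labeled \eqref{a1}--\eqref{e1}, \eqref{a2}--\eqref{e2}, \eqref{a3}--\eqref{e3}, and then matches and cancels these term by term using the mixed Jacobi identity \eqref{20130516:eq2} and the centrality of the $\theta_i$ for $\{\cdot\,_\lambda\,\cdot\}_0$. Along the way it uses the auxiliary fact that each $\{{\theta_\alpha}_\lambda\theta_\beta\}_1$ is central for $\{\cdot\,_\lambda\,\cdot\}_0$.

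Your argument bypasses this computation entirely by the single observation
\[
\big(\{\cdot\,_\lambda\,\cdot\}_0+\{\cdot\,_\lambda\,\cdot\}_1\big)^D
=\{\cdot\,_\lambda\,\cdot\}_0+\{\cdot\,_\lambda\,\cdot\}_1^D,
\]
which follows immediately from \eqref{dirac} since centrality of the $\theta_i$ for $\{\cdot\,_\lambda\,\cdot\}_0$ kills every $0$-bracket appearing in the correction term and leaves the matrix $C(\partial)$ unchanged. Theorem~\ref{prop:dirac}(a), applied to the PVA $\lambda$-bracket $\{\cdot\,_\lambda\,\cdot\}_0+\{\cdot\,_\lambda\,\cdot\}_1$, then yields compatibility in one stroke. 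It is worth noting that the paper itself uses exactly this trick later, in the Poisson-structure setting (Proposition~\ref{20130604:prop2}(a), where it writes $\widetilde{(H_0+H_1)}^D=H_0+\widetilde{H_1}^D$), but does not exploit it at the PVA level. Your approach is shorter and more transparent; the paper's explicit expansion, on the other hand, makes visible precisely which pieces of the mixed Jacobi identity are being used and where the centrality hypothesis enters at each step.
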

\begin{proof}
By assumption, $\{\cdot\,_\lambda\,\cdot\}_0$
is a PVA $\lambda$-bracket on $\mc V$,
and, by Theorem \ref{prop:dirac}(a),
$\{\cdot\,_\lambda\,\cdot\}_1^D$
is a PVA $\lambda$-bracket on $\mc V$ as well.
Hence, in order to prove the first assertion of the theorem,
we only need to check that they are compatible,
i.e. that they satisfy the admissibility condition \eqref{20130516:eq1}
and the Jacobi compatibility condition \eqref{20130516:eq2}.


Note that, since the elements $\theta_i$ are central 
with respect to $\{\cdot\,_\lambda\,\cdot\}_0$,
we have, for every $a\in\mc V$:
$$
\big\{a_\lambda{C_{\beta\alpha}(\mu)}\big\}_0
=\big\{a_\lambda{\{{\theta_\alpha}_\mu{\theta_\beta}\}_1}\big\}_0
=\big\{a_\lambda{\{{\theta_\alpha}_\mu{\theta_\beta}\}_1}\big\}_0
+\big\{a_\lambda{\{{\theta_\alpha}_\mu{\theta_\beta}\}_0}\big\}_1
\,,
$$
and the RHS lies in $\mc V_{\lambda,\mu}$
by the admissibility condition \eqref{20130516:eq1}.
Hence, condition \eqref{20130514:eq1} holds 
for the $\lambda$-bracket $\{\cdot\,_\lambda\,\cdot\}_0$,
and we can use Corollary \ref{20130514:cor}.
By the definition \eqref{dirac} of the Dirac modified 
$\lambda$-bracket $\{\cdot\,_\lambda\,\cdot\}_1^D$,
we therefore get, using sesquilinearity, the left and right Leibniz rules,
and equation \eqref{20111012:eq2c} 
for $\{\cdot\,_\lambda\,\cdot\}_0$:
\begin{eqnarray}
&&
\big\{a_\lambda{\{b_\mu c\}_1^D}\big\}_0
+\big\{a_\lambda{\{b_\mu c\}_0}\big\}_1^D=
\nonumber \\
&&
\big\{
a_\lambda{\{b_{\mu}c\}_1}
\big\}_0
+
\big\{
a_{\lambda}{\{b_\mu c\}_0}
\big\}_1
\label{a1}\\
&&
-\sum_{\alpha,\beta=1}^m
\big\{
a_\lambda
{\{{\theta_{\beta}}_{y}c\}_1}
\big\}_0
\Big(\Big|_{y=\mu+\partial}
(C^{-1})_{\beta\alpha}(\mu+\partial)
\{b_{\mu}\theta_{\alpha}\}_1
\Big)
\label{b1}\\
&&
+\sum_{\alpha,\beta,\gamma,\delta=1}^m
{\{{\theta_{\beta}}_{\lambda+\mu+\partial}c\}_1}_{\to}
(C^{-1})_{\beta\delta}(\lambda+\mu+\partial)
\big\{
a_\lambda {\{{\theta_\gamma}_y{\theta_\delta}\}_1}
\big\}_0 
\nonumber\\
&&
\,\,\,\,\,\,\,\,\,\,\,\,\,\,\,\,\,\,\,\,\,\,\,\,\,\,\,\,\,\,\,\,\,\,\,\,\,\,\,\,\,\,\,\,\,\,\,\,\,\,\,\,\,\,
\,\,\,\,\,\,\,\,\,\,\,\,
\times\Big(\Big|_{y=\mu+\partial}
(C^{-1})_{\gamma\alpha}(\mu+\partial)
\{b_{\mu}\theta_{\alpha}\}_1
\Big)
\label{c1}\\
&&
-\sum_{\alpha,\beta=1}^m
{\{{\theta_{\beta}}_{\lambda+\mu+\partial}c\}_1}_{\to}
(C^{-1})_{\beta\alpha}(\lambda+\mu+\partial)
\big\{
a_\lambda{\{b_{\mu}\theta_{\alpha}\}_1}
\big\}_0
\label{d1}\\
&&
-\sum_{\alpha,\beta=1}^m
{\big\{
{\theta_{\beta}}_{\lambda+\partial}{\{b_\mu c\}_0}
\big\}_1}_{\to}
(C^{-1})_{\beta\alpha}(\lambda+\partial)
\{a_{\lambda}\theta_{\alpha}\}_1
\,. \label{e1}
\end{eqnarray}
The term \eqref{a1} lies in $\mc V_{\lambda,\mu}$
by the admissibility assumption \eqref{20130516:eq1}.
Moreover, 
since the $\theta_i$'s are central for $\{\cdot\,_\lambda\,\cdot\}_0$,
we have, again by  \eqref{20130516:eq1}, that
$\big\{a_\lambda{\{{\theta_{\beta}}_{\mu}c\}_1}\big\}_0$
lies in $\mc V_{\lambda,\mu}$.
Therefore, by Lemma \ref{20111006:lem},
the term \eqref{b1} lies in $\mc V_{\lambda,\mu}$ as well.
With the same argument, we show that all the terms
\eqref{c1}, \eqref{d1} and \eqref{e1} lie in $\mc V_{\lambda,\mu}$.
Therefore, the admissibility condition \eqref{20130516:eq1}
holds for the pair of $\lambda$-brackets
$\{\cdot\,_\lambda\,\cdot\}_0$ and $\{\cdot\,_\lambda\,\cdot\}_1^D$.


Next, we prove the Jacobi compatibility condition \eqref{20130516:eq2}
for the pair of $\lambda$-brackets
$\{\cdot\,_\lambda\,\cdot\}_0$ and $\{\cdot\,_\lambda\,\cdot\}_1^D$.
Exchanging the roles of $a$ and $b$ and of $\lambda$ and $\mu$
in the above equation we get
\begin{eqnarray}
&&
\big\{b_\mu{\{a_\lambda c\}_1^D}\big\}_0
+\big\{b_\mu{\{a_\lambda c\}_0}\big\}_1^D=
\nonumber \\
&&
\big\{
b_\mu{\{a_{\lambda}c\}_1}
\big\}_0
+
\big\{
b_{\mu}{\{a_\lambda c\}_0}
\big\}_1
\label{a2}\\
&&
-\sum_{\alpha,\beta=1}^m
\big\{
b_\mu
{\{{\theta_{\beta}}_{x}c\}_1}
\big\}_0
\Big(\Big|_{x=\lambda+\partial}
(C^{-1})_{\beta\alpha}(\lambda+\partial)
\{a_{\lambda}\theta_{\alpha}\}_1
\Big)
\label{b2} \\
&&
+\sum_{\alpha,\beta,\gamma,\delta=1}^m
{\{{\theta_{\beta}}_{\lambda+\mu+\partial}c\}_1}_{\to}
(C^{-1})_{\beta\delta}(\lambda+\mu+\partial)
\big\{
b_\mu {\{{\theta_\gamma}_x{\theta_\delta}\}_1}
\big\}_0 
\nonumber\\
&&
\,\,\,\,\,\,\,\,\,\,\,\,\,\,\,\,\,\,\,\,\,\,\,\,\,\,\,\,\,\,\,\,\,\,\,\,\,\,\,\,\,\,\,\,\,\,\,\,\,\,\,\,\,\,
\,\,\,\,\,\,\,\,\,\,\,\,
\times\Big(\Big|_{x=\lambda+\partial}
(C^{-1})_{\gamma\alpha}(\lambda+\partial)
\{a_{\lambda}\theta_{\alpha}\}_1
\Big)
\label{c2}\\
&&
-\sum_{\alpha,\beta=1}^m
{\{{\theta_{\beta}}_{\lambda+\mu+\partial}c\}_1}_{\to}
(C^{-1})_{\beta\alpha}(\lambda+\mu+\partial)
\big\{
b_\mu{\{a_{\lambda}\theta_{\alpha}\}_1}
\big\}_0
\label{d2}\\
&&
-\sum_{\alpha,\beta=1}^m
{\big\{
{\theta_{\beta}}_{\mu+\partial}{\{a_\lambda c\}_0}
\big\}_1}_{\to}
(C^{-1})_{\beta\alpha}(\mu+\partial)
\{b_{\mu}\theta_{\alpha}\}_1
\,. \label{e2}
\end{eqnarray}
Furthermore,
a similar computation involving the definition \eqref{dirac} of the Dirac modified 
$\lambda$-bracket $\{\cdot\,_\lambda\,\cdot\}_1^D$,
the sesquilinearity conditions, the left and right Leibniz rules,
and equation \eqref{20111012:eq2d} 
for $\{\cdot\,_\lambda\,\cdot\}_0$,
gives
\begin{eqnarray}
&&
\big\{{\{a_\lambda b\}_0}_{\lambda+\mu} c\big\}_1^D
+\big\{{\{a_\lambda b\}_1^D}_{\lambda+\mu} c\big\}_0=
\nonumber \\
&&
\big\{
{\{a_\lambda b\}_0}_{\lambda+\mu}c
\big\}_1
+\big\{
{\{a_{\lambda}b\}_1}_{\lambda+\mu} c
\big\}_0
\label{a3}\\
&&
-\sum_{\alpha,\beta=1}^m
{\{{\theta_{\beta}}_{\lambda+\mu+\partial}c\}_1}_{\to}
(C^{-1})_{\beta\alpha}(\lambda+\mu+\partial)
\big\{
{\{a_\lambda b\}_0}_{\lambda+\mu}\theta_{\alpha}
\big\}_1
\label{b3}\\
&&
-\sum_{\alpha,\beta=1}^m
{\big\{
{\{{\theta_{\beta}}_{x}b\}_1}
_{\lambda+\mu+\partial} c
\big\}_0}_\to
\Big(\Big|_{x=\lambda+\partial}
(C^{-1})_{\beta\alpha}(\lambda+\partial)
\{a_{\lambda}\theta_{\alpha}\}_1
\Big)
\label{c3}\\
&&
+\sum_{\alpha,\beta,\gamma,\delta=1}^m
{\big\{
{\{{\theta_\gamma}_x{\theta_\delta}\}_1}_{\lambda+\mu+\partial}c
\big\}_0}_\to
\nonumber\\
&&
\times
\Big(\Big|_{x=\lambda+\partial}
(C^{-1})_{\gamma\alpha}(\lambda+\partial)
\{a_{\lambda}\theta_{\alpha}\}_1
\Big)
({C^*}^{-1})_{\delta\beta}(\mu+\partial)
{\{{\theta_{\beta}}_{-\mu-\partial}b\}_1}
\label{d3}\\
&&
-\sum_{\alpha,\beta=1}^m
{\big\{
{\{a_{\lambda}\theta_{\alpha}\}_1}
_{\lambda+\mu+\partial} c
\big\}_0}_\to
({C^*}^{-1})_{\alpha\beta}(\mu+\partial)
{\{{\theta_{\beta}}_{-\mu-\partial}b\}_1}
\,. \label{e3}
\end{eqnarray}
By the Jacobi compatibility condition \eqref{20130516:eq2}
we have
$$
\eqref{a1}-\eqref{a2}-\eqref{a3}=0\,.
$$
Moreover, by the skewadjointness of the matrix $C$ and by the skewsymmetry
of the $\lambda$-bracket $\{\cdot\,_\lambda\,\cdot\}_1$, we have
$$
\begin{array}{l}
\eqref{b1}-\eqref{e2}-\eqref{e3}=
\displaystyle{
-\sum_{\alpha,\beta=1}^m
\Big(
\big\{
a_\lambda
{\{{\theta_{\beta}}_{y}c\}_1}
\big\}_0
-
{\big\{
{\theta_{\beta}}_{y}{\{a_\lambda c\}_0}
\big\}_1}
} \\
\displaystyle{
-
{\big\{
{\{a_{\lambda}\theta_{\beta}\}_1}
_{\lambda+y} c
\big\}_0}
\Big)
\Big(\Big|_{y=\mu+\partial}
(C^{-1})_{\beta\alpha}(\mu+\partial)
\{b_{\mu}\theta_{\alpha}\}_1
\Big)
\,,}
\end{array}
$$
and this expression is zero by the Jacobi compatibility condition \eqref{20130516:eq2}
and the assumption that all the elements $\theta_i$'s are central 
with respect to $\{\cdot\,_\lambda\,\cdot\}_0$.
By similar arguments we conclude that
$$
\eqref{e1}-\eqref{b2}-\eqref{c3}=0
\,\,\text{ and }\,\,
\eqref{d1}-\eqref{d2}-\eqref{b3}=0
\,.
$$
Furthermore, again by equation \eqref{20130516:eq2}
and the fact that all the $\theta_i$'s are central 
with respect to $\{\cdot\,_\lambda\,\cdot\}_0$,
we get that, for all $\alpha,\beta=1,\dots,m$,
$\{{\theta_\alpha}_\lambda{\theta_\beta}\}_1$
is central with respect to $\{\cdot\,_\lambda\,\cdot\}_0$.
Therefore,
$$
\eqref{c1}=0
\,\,,\,\,\,\,
\eqref{c2}=0
\,\,,\,\,\,\,
\eqref{d3}=0
\,.
$$
In conclusion, 
the Jacobi compatibility condition \eqref{20130516:eq2}
holds for the pair of $\lambda$-brackets
$\{\cdot\,_\lambda\,\cdot\}_0$ and $\{\cdot\,_\lambda\,\cdot\}_1^D$.


Since, by assumption,
the elements $\theta_1,\dots,\theta_m$ are central with respect to $\{\cdot\,_\lambda\,\cdot\}_0$,
the differential ideal $\mc I$ generated by them is a PVA ideal
for this $\lambda$-bracket.
On the other hand, $\mc I$ is also a PVA ideal for $\{\cdot\,_\lambda\,\cdot\}_1^D$
by Theorem \ref{prop:dirac}(c).
The last assertion of the theorem follows.
\end{proof}

\section{Non-local Poisson structures and Hamiltonian equations}\label{sec:3}

\subsection{Algebras of differential functions}\label{sec:3.1}

Let $R_\ell=\mb F [u_i^{(n)}\, |\, i \in I,n \in \mb Z_+]$,
where $I=\{1,\dots,\ell\}$, be the algebra of differential polynomials
with derivation (uniquely) determined by $\partial(u_i^{(n)})=u_i^{(n+1)}$.
Recall from \cite{BDSK09} that an \emph{algebra of differential functions} 
in the variables $u_1,\dots,u_\ell$
is a differential algebra extension $\mc V$ of $R_\ell$
endowed with commuting derivations
$$
\frac{\partial}{\partial u_i^{(n)}}:\,\mc V\to\mc V
\,\,,\,\,\,\,
i \in I ,\,n \in \mb Z_+\,,
$$
extending the usual partial derivatives on $R_\ell$,
such that only a finite number of
$\frac{\partial f}{\partial u_i^{(n)}}$ are non-zero for each $f\in \mc V$,
and satisfying the following commutation relations:
\begin{equation}\label{eq:0.4}
\left[  \frac{\partial}{\partial u_i^{(n)}}, \partial \right] = \frac{\partial}{\partial u_i^{(n-1)}}
\,\,\,\,
\text{ (the RHS is $0$ if $n=0$) }\,.
\end{equation}
It is useful to write this commutation relation in terms of generating series:
\begin{equation}\label{eq:0.4b}
\sum_{n\in\mb Z_+} z^n  \frac{\partial}{\partial u_i^{(n)}}\circ \partial
=
(z+\partial)\circ \sum_{n\in\mb Z_+} z^n  \frac{\partial}{\partial u_i^{(n)}}
\,.
\end{equation}
We denote by $\mc C=\big\{c\in\mc V\,\big|\,\partial c=0\big\}\subset\mc V$ the subalgebra of \emph{constants},
and by
$$
\mc F=\Big\{f\in\mc V\,\Big|\,\frac{\partial f}{\partial u_i^{(n)}}=0
\,\,\text{ for all }\, i\in I,n\in\mb Z_+\Big\}\subset\mc V
$$
the subalgebra of \emph{quasiconstants}. It is easy to see that $\mc C\subset\mc F$.

Note that if $\mc V$ is an algebra of differential functions
and it is a domain,
then its field of fractions $\mc K$
is also an algebra of differential functions,
with the obvious extension of all the partial derivatives.

Recall that for $P\in\mc V^\ell$ we have the associated \emph{evolutionary vector field}
$$
X_P=\sum_{i\in I,n\in\mb Z_+}(\partial^nP_i)\frac{\partial}{\partial u_i^{(n)}}\,\in\Der(\mc V)\,.
$$
This makes $\mc V^\ell$ into a Lie algebra, with Lie bracket
$[X_P,X_Q]=X_{[P,Q]}$, given by
$$
[P,Q]=X_P(Q)-X_Q(P)
=D_Q(\partial)P-D_P(\partial)Q
\,,
$$
where $D_P(\partial)$ and $D_Q(\partial)$ denote the Frechet derivatives of $P,Q\in\mc V^\ell$.

In general,
for $\theta=\big(\theta_\alpha\big)_{\alpha=1}^m\in\mc V^m$,
the \emph{Frechet derivative} 
$D_{\theta}(\partial)\in\Mat_{m\times\ell}\mc V[\partial]$ 
is defined by
\begin{equation}\label{20120126:eq2}
D_{\theta}(\partial)_{\alpha i}=
\sum_{n\in\mb Z_+}\frac{\partial \theta_\alpha}{\partial u_i^{(n)}}\partial^n
\,\,,\,\,\,\,
\alpha=1,\dots,m\,,\,\,i=1,\dots,\ell
\,.
\end{equation}
Its adjoint $D_{\theta}^*(\partial)\in\Mat_{\ell\times m}\mc V[\partial]$ is then given by
$$
D_{\theta}^*(\partial)_{i\alpha}=
\sum_{n\in\mb Z_+}(-\partial)^n
\frac{\partial \theta_\alpha}{\partial u_i^{(n)}}
\,\,,\,\,\,\,
\alpha=1,\dots,m\,,\,\,i=1,\dots,\ell
\,.
$$

\subsection{Rational matrix pseudodifferential operators}\label{sec:3.2}

Let $\mc V$ be a differential algebra with derivation $\partial$.
We assume that $\mc V$ is a domain, and we denote by $\mc K$
its field of fractions.
Consider the skewfield $\mc K((\partial^{-1}))$ of pseudodifferential operators 
with coefficients in $\mc K$,
and the subalgebra $\mc V[\partial]$ of differential operators on $\mc V$.

A \emph{rational pseudodifferential operator} with coefficients in $\mc V$
is a pseudodifferential operator $L(\partial)\in\mc V((\partial^{-1}))$
which admits a fractional decomposition
$L(\partial)=A(\partial)B(\partial)^{-1}$,
for some $A(\partial),B(\partial)\in\mc V[\partial]$, $B(\partial)\neq0$.
We denote by $\mc V(\partial)$
the space of all rational pseudodifferential operators with coefficients in $\mc V$.
It is well known that
$\mc K(\partial)$ is the smallest subskewfield of $\mc K((\partial^{-1}))$ containing $\mc V[\partial]$,
see e.g. \cite{CDSK12}.

The algebra of \emph{rational matrix pseudodifferential operators} with coefficients in $\mc V$
is, by definition,  $\Mat_{\ell\times\ell}\mc V(\partial)$.

A matrix differential operator $B(\partial)\in\Mat_{\ell\times\ell}\mc V[\partial]$
is called \emph{non-degenerate}
if it is invertible in $\Mat_{\ell\times\ell}\mc K((\partial^{-1}))$.
Any matrix $M\in\Mat_{\ell\times\ell}\mc V(\partial)$
can be written as a ratio of two matrix differential operators:
$M=A(\partial) B^{-1}(\partial)$,
with $A(\partial),B(\partial)\in\Mat_{\ell\times\ell}\mc V[\partial]$,
and $B(\partial)$ non-degenerate,
see e.g. \cite{CDSK12}.

\subsection{Non-local Poisson structures}\label{sec:3.3}

Let $\mc V$ be an algebra of differential functions in $u_1,\dots,u_\ell$.
Assume that $\mc V$ is a domain, and let $\mc K$ be the corresponding
field of fractions.
To a matrix pseudodifferential operator
$H=\big(H_{ij}(\partial)\big)_{i,j\in I}\in\Mat_{\ell\times\ell}\mc V((\partial^{-1}))$
we associate a map 
$\{\cdot\,_\lambda\,\cdot\}_H:\,\mc V\times\mc V\to\mc V((\lambda^{-1}))$,
given by the following \emph{Master Formula} (cf. \cite{DSK06}):
\begin{equation}\label{20110922:eq1}
\{f_\lambda g\}_H
=
\sum_{\substack{i,j\in I \\ m,n\in\mb Z_+}} 
\frac{\partial g}{\partial u_j^{(n)}}
(\lambda+\partial)^n
H_{ji}(\lambda+\partial)
(-\lambda-\partial)^m
\frac{\partial f}{\partial u_i^{(m)}}
\,\in\mc V((\lambda^{-1}))
\,.
\end{equation}
In particular,
\begin{equation}\label{20130613:eq2}
H_{ji}(\partial)
=
{\{{u_i}_\partial{u_j}\}_H}_\to
\,.
\end{equation}
\begin{theorem}[{\cite[Thm.4.8]{DSK13}}]\label{20110923:prop}
Let $H\in\Mat_{\ell\times\ell}\mc V((\partial^{-1}))$.
Then:
\begin{enumerate}[(a)]
\item
Formula \eqref{20110922:eq1} gives a well-defined non-local $\lambda$-bracket on $\mc V$.
\item
The non-local $\lambda$-bracket $\{\cdot\,_\lambda\,\cdot\}_H$ is skewsymmetric 
if and only if $H$
is a skew-adjoint matrix pseudodifferential operator.
\item
If $H$ is a rational matrix pseudodifferential operator with coefficients in $\mc V$,
then $\{\cdot\,_\lambda\,\cdot\}_H$ is admissible in the sense of equation \eqref{20110921:eq4}.
\item
Let $H$
be a skewadjoint rational matrix pseudodifferential operator with coefficients in $\mc V$.
Then the non-local $\lambda$-bracket $\{\cdot\,_\lambda\,\cdot\}_H$ defined by \eqref{20110922:eq1}
is a Poisson non-local $\lambda$-bracket, i.e. it satisfies the Jacobi identity \eqref{20110922:eq3},
if and only if the Jacobi identity holds on generators ($i,j,k\in I$):
\begin{equation}\label{20110922:eq4}
\{{u_i}_\lambda\{{u_j}_\mu {u_k}\}_H\}_H-\{{u_j}_\mu\{{u_i}_\lambda {u_k}\}_H\}_H
-\{{\{{u_i}_\lambda {u_j}\}_H}_{\lambda+\mu} {u_k}\}_H=0\,,
\end{equation}
where the equality holds in the space $\mc V_{\lambda,\mu}$.
\end{enumerate}
\end{theorem}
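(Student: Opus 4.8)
The plan is to reduce each of the four assertions to the defining axioms of a non-local $\lambda$-bracket and of a PVA, exploiting that the Master Formula \eqref{20110922:eq1} is built directly out of the operators $\partial/\partial u_i^{(n)}$ and their commutation relation \eqref{eq:0.4b}. For part (a), I would check sesquilinearity \eqref{20110921:eq1} and the two Leibniz rules \eqref{20110921:eq3} by direct substitution into \eqref{20110922:eq1}. The left Leibniz rule and the left sesquilinearity relation $\{\partial f_\lambda g\}_H=-\lambda\{f_\lambda g\}_H$ are immediate, since $f$ enters \eqref{20110922:eq1} only through $(-\lambda-\partial)^m\frac{\partial f}{\partial u_i^{(m)}}$; the relation $\frac{\partial(\partial f)}{\partial u_i^{(m)}}=\frac{\partial f}{\partial u_i^{(m-1)}}+\partial\frac{\partial f}{\partial u_i^{(m)}}$ coming from \eqref{eq:0.4} does the job after re-indexing. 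The right sesquilinearity and right Leibniz rules involve $g$, which enters through $\frac{\partial g}{\partial u_j^{(n)}}(\lambda+\partial)^n$, and here one uses \eqref{eq:0.4b} to move the extra $\partial$ through; the finiteness of $\{\partial f/\partial u_i^{(m)}\}$ guarantees the output lies in $\mc V((\lambda^{-1}))$. I expect all of part (a) to be routine bookkeeping.

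For part (b), I would compute $\{g_\lambda f\}_H$ from \eqref{20110922:eq1}, then apply skewsymmetry \eqref{20110921:eq2}, i.e. formally replace $\lambda$ by $-\lambda-\partial$ and move it to the left past the coefficients. Comparing with $-\{f_\lambda g\}_H$ term by term, the two expressions agree, for all $f,g$, precisely when $H_{ij}(\partial)=-H_{ji}^*(\partial)$, i.e. when $H$ is skewadjoint; the "only if" direction follows by specializing to $f=u_i$, $g=u_j$ and using \eqref{20130613:eq2}. Part (c) is essentially a citation to \cite{DSK13}: when $H$ is rational one writes $H=AB^{-1}$ with $A,B$ matrix differential operators, $B$ non-degenerate, and the admissibility \eqref{20110921:eq4} of the triple bracket reduces — via the identities of Lemma \ref{20111012:lem} / Corollary \ref{20130514:cor} governing $\{a_\lambda (C^{-1})_{ij}(\mu)\}$ — to the fact that brackets of elements of $\mc V$ land in $\mc V((\lambda^{-1}))$ and that inverses of non-degenerate differential operators contribute only the controlled denominators $(\lambda+\mu)^{-1}$.

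The substantive point is part (d), and this is where I expect the main work. The "only if" direction is trivial: if the Jacobi identity \eqref{20110922:eq3} holds for all $a,b,c\in\mc V$, then in particular it holds on generators, giving \eqref{20110922:eq4}. For the converse, the strategy is to show that the trilinear map
$$
\Phi(a,b,c):=\{a_\lambda\{b_\mu c\}_H\}_H-\{b_\mu\{a_\lambda c\}_H\}_H-\{{\{a_\lambda b\}_H}_{\lambda+\mu}c\}_H
$$
vanishes identically once it vanishes on $a=u_i$, $b=u_j$, $c=u_k$. The idea is to propagate the identity from generators to arbitrary elements using the derivation-type properties already established: sesquilinearity lets one pass from $u_i$ to arbitrary differential polynomials in the $u_i$, and then to all of $\mc V$, in each of the three slots $a$, $b$, $c$ separately; the left and right Leibniz rules let one pass from generators to products. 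Concretely, one fixes $b,c$ and shows $a\mapsto\Phi(a,b,c)$ is, modulo the already-verified identities, a "derivation in $a$" so that vanishing on a generating set propagates; one then repeats in the $b$-slot and finally in the $c$-slot. The delicate part — the main obstacle — is bookkeeping the non-local terms: when $\{b_\mu c\}_H$ is only a Laurent series in $\mu^{-1}$, the composition $\{a_\lambda\{b_\mu c\}_H\}_H$ and the reassociated term $\{{\{a_\lambda b\}_H}_{\lambda+\mu}c\}_H$ a priori live in different completions, and one must invoke admissibility (part (c)) together with Lemma \ref{20111006:lem} and Corollary \ref{20130514:cor} to make sense of the identity in the common space $\mc V_{\lambda,\mu}$ before the propagation argument can be carried out. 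Once the spaces are matched, the Leibniz-rule induction goes through and $\Phi\equiv 0$, which is exactly the Jacobi identity \eqref{20110922:eq3}.
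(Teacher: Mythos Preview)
The paper does not actually supply a proof of this theorem: it is quoted verbatim as \cite[Thm.4.8]{DSK13} and used as a black box, with no proof environment following the statement. So there is no ``paper's own proof'' against which to compare your attempt.

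That said, your sketch is essentially the standard argument one finds in \cite{DSK13}. Parts (a) and (b) are indeed routine verifications from the Master Formula and the commutation rule \eqref{eq:0.4b}, exactly as you describe. Part (c) is precisely the content of the admissibility machinery (Lemma \ref{20111006:lem}, Lemma \ref{20111012:lem}) developed in \cite{DSK13} for rational operators. For part (d), your propagation scheme --- showing that the Jacobiator $\Phi(a,b,c)$ behaves like a derivation in each slot once sesquilinearity and the Leibniz rules are in hand, and hence vanishes everywhere once it vanishes on generators --- is exactly the mechanism used there; the admissibility hypothesis is needed for precisely the reason you identify, namely to ensure all three terms of $\Phi$ live in the common space $\mc V_{\lambda,\mu}$ so that the inductive comparison is meaningful. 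Nothing in your outline is wrong or missing; it just reproduces a proof the present paper chose to cite rather than repeat.
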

\begin{definition}\label{20111007:def}
A \emph{non-local Poisson structure} on $\mc V$
is  a skewadjoint rational matrix pseudodifferential operator $H$ with coefficients in $\mc V$,
satisfying equation \eqref{20110922:eq4} for every $i,j,k\in I$
(which is equivalent to \cite[eq.(6.14)]{DSK13} $H=AB^{-1}$, see Prop.6.11 there).
\end{definition}

\subsection{Hamiltonian equations and integrability}\label{sec:3.4}

Let $\mc V$ be an algebra of differential functions,
which is assumed to be a domain.
We have a non-degenerate pairing 
$(\cdot\,|\,\cdot):\,\mc V^\ell\times\mc V^{\ell}\to\mc V/\partial\mc V$
given by
$(P|\xi)=\tint P\cdot\xi$.
(See e.g. \cite{BDSK09} for a proof of non-degeneracy of this pairing.)
Recall that $\tint$ stands for the canonical projection $\mc V\to\mc V/\partial\mc V$.
Let $H\in\Mat_{\ell\times\ell}\mc V(\partial)$ be a non-local Poisson structure.
We say that $\tint h\in\mc V/\partial\mc V$ and $P\in\mc V^\ell$ are $H$-\emph{associated},
and we denote it by
$$
\tint h\ass{H}P\,,
$$
if there exist
a fractional decomposition $H=AB^{-1}$ 
with $A,B\in\Mat_{\ell\times\ell}\mc V[\partial]$ 
and $B$ non-degenerate,
and an element $F\in\mc K^\ell$ such that
$\frac{\delta h}{\delta u}=BF,\,P=AF$.
Recall that $\frac{\delta h}{\delta u}$ is the vector with coordinates
$\frac{\delta h}{\delta u_i}=\sum_{n\in\mb Z_+}(-\partial)^n\frac{\partial h}{\partial u_i^{(n)}}$.
An evolution equation on the variables $u=\big(u_i\big)_{i\in I}$,
\begin{equation}\label{20120124:eq5}
\frac{du}{dt}
=P\,,
\end{equation}
is called \emph{Hamiltonian} with respect to the Poisson structure $H$
and the Hamiltonian functional $\tint h\in\mc V/\partial\mc V$
if $\tint h\ass{H}P$.

Equation \eqref{20120124:eq5} is called \emph{bi-Hamiltonian}
if there are two compatible non-local Poisson structures $H_0$ and $H_1$,
with fractional decompositions $H_1=AB^{-1}$ and $H_0=CD^{-1}$,
and two local functionals $\tint h_0,\tint h_1\in\mc V/\partial\mc V$,
such that
$\tint h_0\ass{H_1}P$ and $\tint h_1\ass{H_0}P$.

By the chain rule, any element $f\in\mc V$ evolves according to the equation
$$
\frac{df}{dt}=\sum_{i\in I}\sum_{n\in\mb Z_+}(\partial^nP_i)\frac{\partial f}{\partial u_i^{(n)}}
=D_f(\partial)P\,,
$$
and, integrating by parts,
a local functional $\tint f\in\mc V/\partial\mc V$
evolves according to
$$
\frac{d\tint f}{dt}=\int P\cdot\frac{\delta f}{\delta u}
\quad\bigg(=\big(P\big|\frac{\delta f}{\delta u}\big)\bigg)\,.
$$
An \emph{integral of motion} for the Hamiltonian equation \eqref{20120124:eq5}
is a local functional $\tint f\in\mc V/\partial\mc V$
which is constant in time, i.e. such that $(P|\frac{\delta f}{\delta u})=0$.
The usual requirement for \emph{integrability}
is to have infinite linearly independent (over $\mc C$)
sequences $\{\tint h_n\}_{n\in\mb Z_+}\subset\mc V/\partial\mc V$ 
and $\{P_n\}_{n\in\mb Z_+}\subset\mc V^\ell$,
starting with $\tint h_0=\tint h$ and $P_0=P$,
such that
\begin{enumerate}[(i)]
\item
$\frac{\delta h_n}{\delta u}\ass{H}P_n$ for every $n\in\mb Z_+$,
\item
$[P_m,P_n]=0$ for all $m,n\in\mb Z_+$,
\item
$(P_m\,|\,\frac{\delta h_n}{\delta u})=0$ for all $m,n\in\mb Z_+$.
\end{enumerate}
In this case, we have an \emph{integrable hierarchy} of Hamiltonian equations
$$
\frac{du}{dt_n} = P_n\,,\,\,n\in\mb Z_+\,.
$$
Elements $\tint h_n$'s are called \emph{higher Hamiltonians},
the $P_n$'s are called \emph{higher symmetries},
and the condition $(P_m\,|\,\frac{\delta h_n}{\delta u})=0$
says that $\tint h_m$ and $\tint h_n$ are \emph{in involution}.

\section{Quotient algebra of differential functions}\label{sec:4.2a}

Let $\mc V$ be an algebra of differential functions in the variables $u_1,\dots,u_\ell$.
Let $\theta_1,\dots,\theta_m$ be some elements in $\mc V$,
and let $\mc I=\langle\theta_1,\dots,\theta_m\rangle_{\mc V}\subset{\mc V}$
be the differential ideal generated by them.
In general, the quotient differential algebra $\mc V/\mc I$
does not have an induced structure of an algebra of differential functions.
For this, we need, in particular, that the differential ideal $\mc I$ is preserved 
by all partial derivatives $\frac{\partial}{\partial u_i^{(n)}}$,
and this happens only if the elements $\theta_1,\dots,\theta_m$
are of some special form.

The simplest situation is when 
the constraints are some of the differential variables:
$\theta_1=u_{\ell-m+1},\dots,\theta_m=u_\ell$ ($m\leq\ell$).
Since the $\theta_\alpha$'s are in the kernel of  
$\frac{\partial}{\partial u_i^{(n)}}$, for $i=1,\dots,\ell-m$ and $n\in\mb Z_+$,
the differential ideal generated by them
$\mc I=\langle u_{\ell-m+1},\dots,u_{\ell}\rangle_{\mc V}\subset\mc V$
is preserved by all these partial derivatives.
Therefore, 
if $\mc I\cap R_{\ell-m}=0$,
the quotient space $\mc V/\mc I$
is naturally an algebra of differential functions in the variables $u_1,\dots,u_{\ell-m}$.

A more general situation is when
the constraints have the form:
\begin{equation}\label{20130530:eq7}
\theta_\alpha=u_{\ell-m+\alpha}+p_\alpha
\,\,,\,\,\,\,
\alpha=1,\dots,m
\,,
\end{equation}
for some $p_\alpha\in\mc V$,
such that:
\begin{equation}\label{20130530:eq8}
\frac{\partial p_\alpha}{\partial u^{(n)}_{\ell-m+\beta}}=0
\,\,\text{ for all }\,\,
\alpha,\beta=1,\dots,m \text{ and } n\in\mb Z_+
\,.
\end{equation}
In this case we have the following:
\begin{proposition}\label{20130530:prop}
Let $\mc V$ be an algebra of differential functions in the differential variables $u_1,\dots,u_\ell$.
Let $\theta_1,\dots,\theta_m\in\mc V$ ($m\leq\ell$) be elements of the form \eqref{20130530:eq7}.
\begin{enumerate}[(a)]
\item
We have a structure of an algebra of differential functions in the variables $u_1,\dots,u_{\ell-m}$,
which we denote by $\widetilde{\mc V}$,
on the differential algebra $\mc V$,
with the following modified partial derivatives
\begin{equation}\label{20130530:eq9}
\frac{\widetilde{\partial}}{\widetilde{\partial} u_i^{(s)}}
=
\frac{\partial}{\partial u_i^{(s)}}
- \sum_{\alpha=1}^m
\sum_{n=0}^\infty
\frac{\partial (\partial^np_\alpha)}{\partial u_i^{(s)}}
\frac{\partial}{\partial u_{\ell-m+\alpha}^{(n)}}
\,,
\end{equation}
for $i=1,\dots,\ell-m$ and $s\in\mb Z_+$.
\item
All the elements $\theta_\alpha$'s are in the kernel of all partial derivatives
$\frac{\widetilde{\partial}}{\widetilde{\partial} u_i^{(s)}}$.
\item
If, moreover, $\mc I\cap R_{\ell-m}=0$,
we have an induced structure of an algebra of differential functions
on the quotient space $\mc V/\mc I=\widetilde{\mc V}/\mc I$,
with differential variables $u_1,\dots,u_{\ell-m}$,
and with partial derivatives $\frac{\widetilde{\partial}}{\widetilde{\partial} u_i^{(s)}}$,
for $i=1,\dots,\ell-m$ and $s\in\mb Z_+$.
\end{enumerate}
\end{proposition}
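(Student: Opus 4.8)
The plan is to verify the three stated properties in order, treating part (a) as the technical core and parts (b), (c) as essentially formal consequences.

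\medskip

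\textbf{Part (a).} The goal is to check that formula \eqref{20130530:eq9} defines a collection of commuting derivations extending the usual partial derivatives on $R_{\ell-m}$, that for each $f\in\mc V$ only finitely many $\frac{\widetilde\partial f}{\widetilde\partial u_i^{(s)}}$ are nonzero, and that the commutation relation \eqref{eq:0.4} holds with $\widetilde\partial$ in place of $\partial$ (note $\partial$ itself is unchanged). The finiteness is immediate since each $\frac{\widetilde\partial}{\widetilde\partial u_i^{(s)}}$ is a finite $\mc V$-linear combination of the original partial derivatives $\frac{\partial}{\partial u_j^{(n)}}$ (using that $p_\alpha\in\mc V$ depends on finitely many $u_j^{(n)}$, hence so does $\partial^n p_\alpha$, plus \eqref{20130530:eq8}), and each of those kills all but finitely many things. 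That $\frac{\widetilde\partial}{\widetilde\partial u_i^{(s)}}$ is a derivation is clear since it is a $\mc V$-linear combination of derivations. To see it extends the ordinary partial derivative on $R_{\ell-m}$, note that on a monomial in $u_1,\dots,u_{\ell-m}$ and their derivatives the correction terms $\frac{\partial}{\partial u_{\ell-m+\alpha}^{(n)}}$ act as zero. The substantive computations are: (i) the commutation relation with $\partial$, which I would verify using the generating-series form \eqref{eq:0.4b} — writing $\widetilde D_i(z)=\sum_s z^s\frac{\widetilde\partial}{\widetilde\partial u_i^{(s)}}$, expand $\widetilde D_i(z)\circ\partial$ and use \eqref{eq:0.4b} for the original $\frac{\partial}{\partial u_j^{(n)}}$'s together with the identity $\sum_s z^s\frac{\partial(\partial^n p_\alpha)}{\partial u_i^{(s)}} = \frac{1}{n!}\partial_z^{\,\ast}(\dots)$-type telescoping coming from $\partial^n p_\alpha$ versus $\partial^{n\pm1}p_\alpha$; and (ii) the commutativity $\big[\frac{\widetilde\partial}{\widetilde\partial u_i^{(s)}},\frac{\widetilde\partial}{\widetilde\partial u_j^{(t)}}\big]=0$, which expands into four types of terms and the cross terms cancel precisely because of hypothesis \eqref{20130530:eq8} (so that $\frac{\partial}{\partial u_{\ell-m+\alpha}^{(n)}}$ commutes through the coefficients $\frac{\partial(\partial^k p_\beta)}{\partial u_j^{(t)}}$) combined with the commutativity of the original partials. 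I expect step (ii), the vanishing of the commutator, to be the main obstacle: it is the place where \eqref{20130530:eq8} is genuinely used, and bookkeeping the index ranges and the $\partial^n p_\alpha$ Frechet-type coefficients is where an error could creep in; presenting it via generating series should keep it manageable.

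\medskip

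\textbf{Part (b).} For each $\alpha$ we compute $\frac{\widetilde\partial \theta_\alpha}{\widetilde\partial u_i^{(s)}}$ from \eqref{20130530:eq7}: it equals $\frac{\partial u_{\ell-m+\alpha}}{\partial u_i^{(s)}} + \frac{\partial p_\alpha}{\partial u_i^{(s)}} - \sum_{\beta,n}\frac{\partial(\partial^n p_\beta)}{\partial u_i^{(s)}}\frac{\partial(u_{\ell-m+\alpha}+p_\alpha)}{\partial u_{\ell-m+\beta}^{(n)}}$. The first term is $0$ since $i\le\ell-m$. In the sum, $\frac{\partial u_{\ell-m+\alpha}}{\partial u_{\ell-m+\beta}^{(n)}}=\delta_{\alpha\beta}\delta_{n,0}$ and $\frac{\partial p_\alpha}{\partial u_{\ell-m+\beta}^{(n)}}=0$ by \eqref{20130530:eq8}, so the $\beta=\alpha,n=0$ term contributes $\frac{\partial p_\alpha}{\partial u_i^{(s)}}$, which exactly cancels the remaining second term. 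Hence $\frac{\widetilde\partial\theta_\alpha}{\widetilde\partial u_i^{(s)}}=0$ for all $i=1,\dots,\ell-m$ and $s\in\mb Z_+$.

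\medskip

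\textbf{Part (c).} By (b) every generator $\theta_\alpha$ of $\mc I$ is a $\widetilde\partial$-quasiconstant, hence by the Leibniz and sesquilinearity-type rules (i.e.\ since $\frac{\widetilde\partial}{\widetilde\partial u_i^{(s)}}$ are derivations commuting appropriately with $\partial$, as established in (a)) the differential ideal $\mc I=\langle\theta_1,\dots,\theta_m\rangle_{\mc V}$ is preserved by every $\frac{\widetilde\partial}{\widetilde\partial u_i^{(s)}}$: indeed $\frac{\widetilde\partial}{\widetilde\partial u_i^{(s)}}(\sum_\alpha f_\alpha\,\partial^{k_\alpha}\theta_\alpha) = \sum_\alpha \big(\frac{\widetilde\partial f_\alpha}{\widetilde\partial u_i^{(s)}}\big)\partial^{k_\alpha}\theta_\alpha + \sum_\alpha f_\alpha\,\frac{\widetilde\partial}{\widetilde\partial u_i^{(s)}}(\partial^{k_\alpha}\theta_\alpha)$, and the second sum lies in $\mc I$ because the commutation relation \eqref{eq:0.4} for $\widetilde\partial$ lets us rewrite $\frac{\widetilde\partial}{\widetilde\partial u_i^{(s)}}\partial^{k}\theta_\alpha$ as a $\mc V[\partial]$-combination of $\frac{\widetilde\partial}{\widetilde\partial u_i^{(s-j)}}\theta_\alpha=0$. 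Therefore each $\frac{\widetilde\partial}{\widetilde\partial u_i^{(s)}}$ descends to a well-defined derivation on $\mc V/\mc I$. These descended derivations still commute (from (a)), still satisfy \eqref{eq:0.4} with the induced $\partial$, and still have the finiteness property. Finally, the assumption $\mc I\cap R_{\ell-m}=0$ guarantees that the natural map $R_{\ell-m}\to\mc V/\mc I$ is injective, so $\mc V/\mc I$ is a genuine differential algebra extension of $R_{\ell-m}$, and the induced derivations restrict to the standard partial derivatives on $R_{\ell-m}$. This exhibits $\mc V/\mc I=\widetilde{\mc V}/\mc I$ as an algebra of differential functions in $u_1,\dots,u_{\ell-m}$, completing the proof.
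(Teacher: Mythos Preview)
Your proposal is correct and follows essentially the same approach as the paper: verify well-definedness and locality of the modified partials, their mutual commutativity (via \eqref{20130530:eq8}), and the commutation relation \eqref{eq:0.4} with $\partial$; then part (b) is the direct cancellation you describe, and part (c) follows. Two minor remarks: your expectation that step (ii) (commutativity) is the main obstacle is backwards---in the paper's proof the commutativity is a two-line consequence of \eqref{20130530:eq8} and equality of mixed partials, whereas the commutation with $\partial$ is where the actual telescoping identity $\frac{\partial(\partial^{n+1}p_\alpha)}{\partial u_i^{(s)}}=\partial\frac{\partial(\partial^{n}p_\alpha)}{\partial u_i^{(s)}}+\frac{\partial(\partial^{n}p_\alpha)}{\partial u_i^{(s-1)}}$ is needed; and your generating-series sketch for that step is too vague as written (the ``$\frac{1}{n!}\partial_z^{\ast}(\dots)$-type telescoping'' phrase doesn't correspond to an actual identity), though it would work once made precise via the identity just displayed.
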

\begin{proof}
Note that when we apply $\frac{\widetilde{\partial}}{\widetilde{\partial} u_i^{(s)}}$
to an element $f\in\mc V$,
we only have finitely many non-zero terms,
hence $\frac{\widetilde{\partial}}{\widetilde{\partial} u_i^{(s)}}$ 
is a well-defined derivation of $\mc V$.
Also,
this derivation $\frac{\widetilde{\partial}}{\widetilde{\partial} u_i^{(s)}}$ satisfies the  locality condition
$\frac{\widetilde{\partial} f}{\widetilde{\partial} u_i^{(s)}}=0$
for $s$ sufficiently large.
Indeed,
if $\frac{\partial f}{\partial u_j^{(n)}}=\frac{\partial p_\alpha}{\partial u_j^{(n)}}=0$
for all $j=1,\dots,\ell$ and $n\geq N$,
then 
$\frac{\widetilde{\partial} f}{\widetilde{\partial} u_i^{(s)}}=0$ for all $s>2N$.
Next, all the derivations $\frac{\widetilde{\partial}}{\widetilde{\partial} u_i^{(s)}}$
commute.
Indeed, by the assumption \eqref{20130530:eq8} on the $\theta_\alpha$'s we have
$$
\begin{array}{l}
\displaystyle{
\Big[\frac{\widetilde{\partial}}{\widetilde{\partial} u_i^{(s)}},
\frac{\widetilde{\partial}}{\widetilde{\partial} u_j^{(t)}}\Big]
=
- \sum_{\beta=1}^m
\sum_{r=0}^\infty
\bigg(
\frac{\partial}{\partial u_i^{(s)}}
\frac{\partial (\partial^rp_\beta)}{\partial u_j^{(t)}}
\bigg)
\frac{\partial}{\partial u_{\ell-m+\beta}^{(r)}}
} \\
\displaystyle{
+ \sum_{\alpha=1}^m
\sum_{n=0}^\infty
\bigg(
\frac{\partial}{\partial u_j^{(t)}}
\frac{\partial (\partial^np_\alpha)}{\partial u_i^{(s)}}
\bigg)
\frac{\partial}{\partial u_{\ell-m+\alpha}^{(n)}}
=0\,,}
\end{array}
$$
since $\frac{\partial}{\partial u_i^{(s)}}$ and $\frac{\partial}{\partial u_j^{(t)}}$ commute.
In order to complete the proof of (a),
we are left to prove that the derivations $\frac{\widetilde{\partial}}{\widetilde{\partial} u_i^{(s)}}$
satisfy the commutation rules \eqref{eq:0.4}.
We have
$$
\begin{array}{l}
\displaystyle{
\Big[\frac{\widetilde{\partial}}{\widetilde{\partial} u_i^{(s)}},\partial\Big]
=
\Big[
\frac{\partial}{\partial u_i^{(s)}}
,\partial\Big]
- \sum_{\alpha=1}^m
\sum_{n=0}^\infty
\Big[
\frac{\partial (\partial^np_\alpha)}{\partial u_i^{(s)}}
\frac{\partial}{\partial u_{\ell-m+\alpha}^{(n)}}
,\partial\Big]
} \\
\displaystyle{
=
\frac{\partial}{\partial u_i^{(s-1)}}
- \sum_{\alpha=1}^m
\sum_{n=0}^\infty
\frac{\partial (\partial^np_\alpha)}{\partial u_i^{(s)}}
\Big[
\frac{\partial}{\partial u_{\ell-m+\alpha}^{(n)}}
,\partial\Big]
+ \sum_{\alpha=1}^m
\sum_{n=0}^\infty
\Big(\partial
\frac{\partial (\partial^np_\alpha)}{\partial u_i^{(s)}}
\Big)
\frac{\partial}{\partial u_{\ell-m+\alpha}^{(n)}}
} \\
\displaystyle{
=
\frac{\partial}{\partial u_i^{(s-1)}}
- \sum_{\alpha=1}^m
\sum_{n=0}^\infty
\frac{\partial (\partial^{n+1}p_\alpha)}{\partial u_i^{(s)}}
\frac{\partial}{\partial u_{\ell-m+\alpha}^{(n)}}
+ \sum_{\alpha=1}^m
\sum_{n=0}^\infty
\Big(\partial
\frac{\partial (\partial^np_\alpha)}{\partial u_i^{(s)}}
\Big)
\frac{\partial}{\partial u_{\ell-m+\alpha}^{(n)}}
} \\
\displaystyle{
=
\frac{\partial}{\partial u_i^{(s-1)}}
- \sum_{\alpha=1}^m
\sum_{n=0}^\infty
\frac{\partial (\partial^{n}p_\alpha)}{\partial u_i^{(s-1)}}
\frac{\partial}{\partial u_{\ell-m+\alpha}^{(n)}}
=
\frac{\widetilde{\partial}}{\widetilde{\partial} u_i^{(s-1)}}
\,.}
\end{array}
$$
In the fourth equality we used the identity
$$
\frac{\partial (\partial^{n+1}p_\alpha)}{\partial u_i^{(s)}}
=
\partial\frac{\partial (\partial^{n}p_\alpha)}{\partial u_i^{(s)}}
+
\frac{\partial (\partial^{n}p_\alpha)}{\partial u_i^{(s-1)}}
\,,
$$
which holds due to \eqref{eq:0.4}.

Next, let us prove part (b).
Since, by assumption \eqref{20130530:eq8},
$p_\alpha$ is independent of the variables $u_{\ell-m+1},\dots,u_\ell$,
we have, for $i=1,\dots,\ell-m$ and $s\in\mb Z_+$,
$$
\begin{array}{l}
\displaystyle{
\frac{\widetilde{\partial}\theta_\alpha}{\widetilde{\partial} u_i^{(s)}}
=
\Big(\frac{\partial}{\partial u_i^{(s)}}
- \sum_{\beta=1}^m
\sum_{n=0}^\infty
\frac{\partial (\partial^np_\beta)}{\partial u_i^{(s)}}
\frac{\partial}{\partial u_{\ell-m+\beta}^{(n)}}
\Big)(u_{\ell-m+\alpha}+p_\alpha)
} \\
\displaystyle{
=
\frac{\partial p_\alpha}{\partial u_i^{(s)}}
- \sum_{\beta=1}^m
\sum_{n=0}^\infty
\frac{\partial (\partial^np_\beta)}{\partial u_i^{(s)}}
\frac{\partial u_{\ell-m+\alpha}}{\partial u_{\ell-m+\beta}^{(n)}}
=0
\,.}
\end{array}
$$

Finally, since, by assumption, $\mc I\cap R_{\ell-m}=0$,
we have a natural embedding $R_{\ell-m}\subset\mc V/\mc I$.
Hence, part (c) is an immediate consequence of part (b).
\end{proof}

Note that the Frechet derivative \eqref{20120126:eq2}
of a collection of elements as in \eqref{20130530:eq7} has the form
\begin{equation}\label{20130604:eq4}
D_{\theta}(\partial)=(D_p(\partial)\,\,\id_{m})
\,,
\end{equation}
where
\begin{equation}\label{20130604:eq2}
D_p(\partial)
=\Big(
\sum_{n\in\mb Z_+}\frac{\partial p_\alpha}{u_i^{(n)}}\partial^n
\Big)_{\substack{\alpha=1,\dots,m \\ i=1,\dots,\ell-m}}
\in\Mat_{m\times(\ell-m)}\mc V[\partial]
\,,
\end{equation}
and $\id_m$ is the $m\times m$ identity matrix.

We can find the formula for the variational derivatives
in the algebra of differential functions $\widetilde{\mc V}$ 
(with the modified partial derivatives \eqref{20130530:eq9}).
Namely, for $i=1,\dots,\ell-m$, we have
$$
\frac{\widetilde{\delta}}{\widetilde{\delta} u_i}
:=
\sum_{n\in\mb Z_+}(-\partial)^n\frac{\widetilde{\partial}}{\widetilde{\partial} u_i^{(n)}}
=
\frac{\delta}{\delta u_i}
-
\sum_{\alpha=1}^m
\sum_{s\in\mb Z_+}
(-\partial)^s
\frac{\partial p_\alpha}{\partial u_i^{(s)}}
\frac{\delta}{\delta u_{\ell-m+\alpha}}
\,.
$$
In the last identity we used the commutation relation \eqref{eq:0.4b}.
We can rewrite the above equation in matrix form as follows:
\begin{equation}\label{20130603:eq4}
\frac{\widetilde{\delta}}{\widetilde{\delta} u}
=
\big(\id_{\ell-m}\,,\,-D_p^*(\partial)\big)
\circ
\frac{\delta}{\delta u}
\,.
\end{equation}
The variational derivatives $\frac{\widetilde{\delta}}{\widetilde{\delta} u_i}$
on the quotient algebra $\mc V/\mc I$ are induced by \eqref{20130603:eq4}:
$$
\frac{\widetilde{\delta}\overline{f}}{\widetilde{\delta} u_i}
=
\overline{
\frac{\widetilde{\delta}f}{\widetilde{\delta} u_i}
}
\,\,,\,\,\,\,
i=1,\dots,\ell-m
\,.
$$
Here and further, for $f\in\mc V$, we let $\bar{f}$ be its coset in the quotient space $\mc V/\mc I$.

\section{Central reduction for (non-local) Poisson structures and Hamiltonian equations}\label{sec:4a}

\subsection{Central elements and constant densities}\label{sec:4.2}

\begin{definition}\label{20130530:def}
Let $\mc V$ be an algebra of differential functions, which is a domain.
Let $H\in\Mat_{\ell\times\ell}\mc V((\partial^{-1}))$ be a (non-local) Poisson structure 
on $\mc V$.
\begin{enumerate}[(i)]
\item
An element $\theta\in\mc V$ is called \emph{central} for $H$ if
$D_\theta(\partial)\circ H(\partial)=0$.
\item
An element $\theta\in\mc V$ is called a \emph{constant density}
for the evolution equation $\frac{du}{dt}=P\,\in\mc V^\ell$
if $\frac{d\theta}{dt}\Big(=D_\theta(\partial)P\Big)=0$.
\item
An element $\tint\theta\in\mc V/\partial\mc V$ is called an \emph{integral of motion}
(and $\theta\in\mc V$ is the corresponding \emph{conserved density})
for the evolution equation $\frac{du}{dt}=P\,\in\mc V^\ell$
if $\frac{d}{dt}\tint \theta\Big(=\tint \frac{\delta\theta}{\delta u}\cdot P\Big)=0$.
\end{enumerate}
\end{definition}
\begin{lemma}\phantomsection\label{20130530:lem}
\begin{enumerate}[(a)]
\item
An element $\theta\in\mc V$ is central for the Poisson structure $H(\partial)$
if and only if it is a central element for the 
corresponding PVA $\lambda$-bracket $\{\cdot\,_\lambda\,\cdot\}_H$
given by the Master Formula \eqref{20110922:eq1}.
\item
If $\theta\in\mc V$ is a central element for the Poisson structure $H$,
then it is a constant density for every evolution equation
which is Hamiltonian with respect to the Poisson structure $H$.
\item
If $\theta\in\mc V$ is a constant density for the evolution equation $\frac{du}{dt}=P$,
then $\tint\theta\in\mc V/\partial\mc V$ is an integral of motion for the same evolution equation.
\end{enumerate}
\end{lemma}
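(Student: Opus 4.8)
The plan is to prove the three statements of Lemma \ref{20130530:lem} in order, each reducing to a short computation with the Master Formula \eqref{20110922:eq1} and the definitions from Section \ref{sec:3.4}.

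For part (a), I would start from Definition \ref{20130530:def}(i): $\theta$ is central for $H$ iff $D_\theta(\partial)\circ H(\partial)=0$, i.e. $\sum_{i}\sum_{m}\frac{\partial\theta}{\partial u_i^{(m)}}\partial^m H_{ij}(\partial)=0$ for all $j$. On the other side, by the Master Formula, $\theta$ is central for $\{\cdot\,_\lambda\,\cdot\}_H$ iff $\{a_\lambda\theta\}_H=0$ for all $a\in\mc V$, which by skewsymmetry is equivalent to $\{\theta_\lambda a\}_H=0$ for all $a$. Writing out $\{\theta_\lambda a\}_H=\sum_{i,j}\sum_{m,n}\frac{\partial a}{\partial u_j^{(n)}}(\lambda+\partial)^nH_{ji}(\lambda+\partial)(-\lambda-\partial)^m\frac{\partial\theta}{\partial u_i^{(m)}}$, one sees that vanishing for all $a$ (equivalently, for all the generators $u_j$, using that the $\frac{\partial a}{\partial u_j^{(n)}}$ can be made arbitrary and the pairing on $\mc V^\ell$ is non-degenerate) is exactly the condition $\sum_{i,m}H_{ji}(\lambda+\partial)(-\lambda-\partial)^m\frac{\partial\theta}{\partial u_i^{(m)}}=0$ for all $j$, i.e. $H^*(\partial)\circ D_\theta^*(\partial)=0$. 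Since $H$ is skewadjoint, $H^*=-H$, so this is the adjoint of $D_\theta(\partial)\circ H(\partial)=0$; applying $*$ gives the equivalence. The one point to be careful about is matching the two directions of centrality via skewsymmetry and confirming that testing on generators suffices, which follows because every $\frac{\partial}{\partial u_j^{(n)}}$ acts as it does on $u_j^{(n)}$ itself.

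For part (b), suppose $\theta$ is central for $H$ and the equation $\frac{du}{dt}=P$ is Hamiltonian for $H$ with Hamiltonian functional $\tint h$, so $\tint h\ass{H}P$: there is a fractional decomposition $H=AB^{-1}$ and $F\in\mc K^\ell$ with $\frac{\delta h}{\delta u}=BF$ and $P=AF$. Then $\frac{d\theta}{dt}=D_\theta(\partial)P=D_\theta(\partial)AF$. Now $D_\theta(\partial)\circ H=0$ means $D_\theta(\partial)\circ A\circ B^{-1}=0$, hence $D_\theta(\partial)\circ A=0$ as an operator in $\mc K(\partial)$ (multiply on the right by $B$, which is non-degenerate). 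Therefore $D_\theta(\partial)AF=0$, i.e. $\frac{d\theta}{dt}=0$, so $\theta$ is a constant density for the equation. The only subtlety is that centrality is stated for the operator $H$, and one must pass to the fractional decomposition; this is handled by noting that $D_\theta\circ H=0$ in $\mc K(\partial)$ forces $D_\theta\circ A=0$ since $B$ is invertible over $\mc K((\partial^{-1}))$.

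For part (c), this is the most elementary: if $\theta$ is a constant density, $D_\theta(\partial)P=0$. Then $\frac{d}{dt}\tint\theta=\tint\frac{\delta\theta}{\delta u}\cdot P$. Integrating by parts, $\tint\frac{\delta\theta}{\delta u}\cdot P=\big(P\,\big|\,\frac{\delta\theta}{\delta u}\big)$, and by the standard identity relating the variational derivative, the Frechet derivative and the pairing (for any $g\in\mc V$, $\tint\frac{\delta\theta}{\delta u}\cdot P=\tint (D_\theta(\partial)P)=0$ modulo $\partial\mc V$), one gets $\frac{d}{dt}\tint\theta=\tint D_\theta(\partial)P=\tint 0=0$, so $\tint\theta$ is an integral of motion. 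Here the identity $\tint\frac{\delta\theta}{\delta u}\cdot P=\tint D_\theta(\partial)P$ in $\mc V/\partial\mc V$ is just integration by parts on each $\partial^n$ and is standard from \cite{BDSK09}. I expect part (a) to be the main obstacle, because it requires carefully tracking the adjoint and the two forms of centrality through the Master Formula and verifying that testing against generators is enough; parts (b) and (c) are then short consequences.
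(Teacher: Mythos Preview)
Your proposal is correct and follows essentially the same route as the paper. The only cosmetic difference is in part (a): the paper writes the single operator identity $\{f_\partial\theta\}_{H\to}=D_\theta(\partial)\circ H(\partial)\circ D_f^*(\partial)$ and reads off the equivalence directly, whereas you pass through skewsymmetry to $\{\theta_\lambda a\}_H$, test on generators, and then take adjoints to return to $D_\theta(\partial)\circ H(\partial)=0$; parts (b) and (c) are argued identically in both.
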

\begin{proof}
By the Master Formula \eqref{20110922:eq1} we have
$$
{\{f_\partial\theta\}_H}_\to=
D_\theta(\partial)\circ H(\partial)\circ D_f^*(\partial)
\,.
$$
Hence, $\theta$ is central for the PVA $\lambda$-bracket $\{\cdot\,_\lambda\,\cdot\}_H$
if and only if the RHS of this equation is zero for every $f\in\mc V$,
which is equivalent to the equation $D_\theta(\partial)\circ H(\partial)=0$.
This proves part (a).


Next, let us prove part (b).
Recall from Section \ref{sec:3.4}
that the evolution equation $\frac{du}{dt}=P$ is Hamiltonian
with respect to the Poisson structure $H$ and the Hamiltonian functional $\tint h$,
if $\tint h\ass{H}P$,
i.e. if there exist a fractional decomposition $H(\partial)=A(\partial)\circ B^{-1}(\partial)$,
with $A(\partial),B(\partial)\in\Mat_{\ell\times\ell}\mc V[\partial]$
and $B(\partial)$ non-degenerate,
and an element $F\in\mc K^\ell$,
such that $P=A(\partial)F$ and $\frac{\delta h}{\delta u}=B(\partial)F$.
By assumption, $\theta$ is central for $H$,
i.e. $D_\theta(\partial)\circ H(\partial)=0$,
and, since $B(\partial)$ is non-degenerate,
it follows that $D_\theta(\partial)\circ A(\partial)=0$.
But then $D_\theta(\partial)P=D_\theta(\partial)A(\partial)F=0$.
Therefore, $\theta$ is a constant density for the given Hamiltonian equation, as claimed.

Part (c) is immediate. Indeed, applying integral to the condition $D_\theta(\partial)P=0$
and integrating by parts, we get $\tint\frac{\delta\theta}{\delta u}\cdot P=0$.
\end{proof}
\begin{remark}\label{20130605:rem}
Recall that a local functional $\tint h\in\mc V/\partial\mc V$
is called a \emph{Casimir element} 
for the local Poisson structure $H(\partial)\in\Mat_{\ell\times\ell}\mc V[\partial]$
if $H(\partial)\frac{\delta h}{\delta u}=0$.
Since $\frac{\delta h}{\delta u}=D_h^*(\partial)(\id_\ell)$,
it is immediate to check that every central element for $H$ is a Casimir element.
More generally,
a local functional $\tint h\in\mc V/\partial\mc V$
is called a \emph{Casimir element} for 
a non-local Poisson structure $H$ if $\tint h\ass{H}0$.
It is easy to check, using Lemma \ref{20130611:lem} below,
that if $\tint h$ is a central element for $H$,
then it is still a Casimir element.
\end{remark}

\subsection{Central reduction of a Poisson structure}\label{sec:4.3a}

Let $H(\partial)\in\Mat_{\ell\times\ell}\mc V((\partial^{-1}))$ 
be a (non-local) Poisson structure on the algebra of differential functions $\mc V$,
and let $\theta_1,\dots,\theta_m\in\mc V$ be central elements for $H(\partial)$.
Let $\mc I=\langle\theta_1,\dots,\theta_m\rangle_{\mc V}\subset\mc V$
be the differential ideal generated by $\theta_1,\dots,\theta_m$.
Due to Lemma \ref{20130530:lem}(a),
$\mc I$ is an ideal for the PVA $\lambda$-bracket $\{\cdot\,_\lambda\,\cdot\}_H$
associated to the Poisson structure $H(\partial)$,
and therefore we have an induced PVA structure on $\mc V/\mc I$.
The corresponding PVA $\lambda$-bracket on $\mc V/\mc I$
is given by
\begin{equation}\label{20130530:eq5b}
\{\bar{f}_\lambda\bar{g}\}_H^{\overline{\phantom{II}}}
=
\sum_{\substack{i,j\in I \\ m,n\in\mb Z_+}} 
\overline{\frac{\partial g}{\partial u_j^{(n)}}}
(\lambda+\partial)^n
\overline{
H_{ji}}(\lambda+\partial)
(-\lambda-\partial)^m
\overline{\frac{\partial f}{\partial u_i^{(m)}}}
\,.
\end{equation}
As before, for $f\in\mc V$, we let $\bar{f}$ be its coset in the quotient space $\mc V/\mc I$,
and also, for $h(\partial)\in\mc V((\partial^{-1}))$,
we denote by $\overline{h}(\partial)\in(\mc V/\mc I)((\partial^{-1}))$
the pseudodifferential operator obtained by taking 
the cosets of all coefficients of $h(\partial)$.
We would like to prove that, in fact,
this PVA $\lambda$-bracket is associated to a (non-local) Poisson structure
on $\mc V/\mc I$.
For this we need, in particular, that the quotient differential algebra $\mc V/\mc I$
is an algebra of differential functions.
By Proposition \ref{20130530:prop},
this happens, for example, if the elements $\theta_\alpha$'s
are of the special form \eqref{20130530:eq7}, with $\mc I\cap R_{\ell-m}=0$,
and in this case we need to take the modified partial derivatives \eqref{20130530:eq9}.
We want to prove that, in this case,
we indeed have an induced Poisson structure $H^C(\partial)$ on $\mc V/\mc I$,
which we call the \emph{central reduction} of $H(\partial)$.
\begin{proposition}\label{20130530:propb}
Let $\mc V$ be an algebra of differential functions in the differential variables $u_1,\dots,u_\ell$,
which is a domain.
Let $\theta_1,\dots,\theta_m\in\mc V$ be as in \eqref{20130530:eq7},
and let $\mc I=\langle\theta_1,\dots,\theta_m\rangle_\mc V\subset\mc V$
be the differential ideal generated by them.
Let $H(\partial)\in\Mat_{\ell\times\ell}\mc V((\partial^{-1}))$ be a Poisson structure on $\mc V$.
\begin{enumerate}[(a)]
\item
The elements $\theta_1,\dots,\theta_m$ are central for $H(\partial)$
if and only if the matrix $H(\partial)$ has the following form
\begin{equation}\label{20130604:eq1}
H(\partial)=
\left(\begin{array}{c}
\id_{\ell-m} \\
-D_p(\partial)
\end{array}
\right)
\circ A(\partial)\circ
\big(\id_{\ell-m}\,,\,\,-D_p^*(\partial)\big)
\,,
\end{equation}
where $D_p(\partial)\in\Mat_{m\times(\ell-m)}\mc V[\partial]$
is the matrix \eqref{20130604:eq2},
and $A(\partial)$
is a rational $(\ell-m)\times(\ell-m)$ matrix pseudodifferential operator with coefficients in $\mc V$.
\item
If $H(\partial)$ has the form \eqref{20130604:eq1},
then
\begin{equation}\label{20130604:eq3}
\begin{array}{rcl}
\displaystyle{
\{f_\lambda g\}_H^{\mc V}
}
&=&
\displaystyle{
\sum_{i,j=1}^\ell \sum_{m,n=0}^\infty
\frac{\partial g}{\partial u_j^{(n)}}
(\lambda+\partial)^n
H_{ji}(\lambda+\partial)
(-\lambda-\partial)^m
\frac{\partial f}{\partial u_i^{(m)}}
} \\
&=&
\displaystyle{
\sum_{i,j=1}^{\ell-m} \sum_{s,t=0}^\infty
\frac{\widetilde{\partial} g}{\widetilde{\partial} u_j^{(t)}}
(\lambda+\partial)^t
A_{ts}(\lambda+\partial)
(-\lambda-\partial)^s
\frac{\widetilde{\partial} f}{\widetilde{\partial} u_i^{(m)}}
=\{f_\lambda g\}_{A}^{\widetilde{\mc V}}
\,.}
\end{array}
\end{equation}
In other words,
the matrix $A(\partial)$ is a Poisson structure 
on the algebra of differential functions $\widetilde{\mc V}$
defined in Proposition \ref{20130530:prop}(a),
and the $H$-$\lambda$-bracket on $\mc V$
(with usual partial derivatives)
coincides with 
the $A$-$\lambda$-bracket on $\widetilde{\mc V}$
($=\mc V$ with modified partial derivatives).
\item
Assume that $\mc I\cap R_{\ell-m}=0$,
that the quotient algebra of differential function $\mc V/\mc I$
(cf. Proposition \ref{20130530:prop}) is a domain,
and, for $H(\partial)$ of the form \eqref{20130604:eq1},
assume that the induced matrix 
$\overline{A}(\partial)\in\Mat_{(\ell-m)\times(\ell-m)}(\mc V/\mc I)((\partial^{-1}))$
is rational.
Then, we have a well defined \emph{central reduction} of $H$ 
by the central elements $\theta_\alpha$'s,
given by 
$$
H^C(\partial)=\overline{A}(\partial)\,.
$$
In other words,
the induced PVA $\lambda$-bracket \eqref{20130530:eq5b} on $\mc V/\mc I$
is associated to the \emph{centrally reduced} Poisson structure $H^C(\partial)$ on $\mc V/\mc I$.
\end{enumerate}
\end{proposition}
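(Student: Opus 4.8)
The plan is to prove (a) by a block-matrix computation, deduce (b) from (a) together with a chain rule for the modified Frechet derivative, and obtain (c) by descending (b) to the quotient.

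For part (a), write $H$ in $2\times2$ block form relative to the splitting $\ell=(\ell-m)+m$, say $H=\left(\begin{smallmatrix}H_{11}&H_{12}\\ H_{21}&H_{22}\end{smallmatrix}\right)$ with $H_{11}$ of size $(\ell-m)\times(\ell-m)$. By \eqref{20130604:eq4} the Frechet derivative of $\theta=(\theta_\alpha)_{\alpha=1}^m$ is $D_\theta(\partial)=\big(D_p(\partial)\ \ \id_m\big)$, so the condition that $\theta_1,\dots,\theta_m$ be central for $H$ (Definition \ref{20130530:def}(i)) reads $D_\theta(\partial)\circ H(\partial)=0$, which is equivalent to $H_{21}=-D_p(\partial)\circ H_{11}$ and $H_{22}=-D_p(\partial)\circ H_{12}$. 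Combining this with the skewadjointness of $H$ (which gives $H_{11}^*=-H_{11}$ and $H_{12}=-H_{21}^*$) forces $H_{12}=-H_{11}\circ D_p^*(\partial)$ and $H_{22}=D_p(\partial)\circ H_{11}\circ D_p^*(\partial)$; hence, setting $A:=H_{11}$, the operator $H$ is exactly of the form \eqref{20130604:eq1}, and $A$ is rational since its entries are entries of the rational operator $H$ (cf.\ Section \ref{sec:3.2}). Conversely, if $H$ has the form \eqref{20130604:eq1}, then $D_\theta(\partial)\circ H(\partial)=0$ because $\big(D_p(\partial)\ \ \id_m\big)\left(\begin{smallmatrix}\id_{\ell-m}\\ -D_p(\partial)\end{smallmatrix}\right)=0$, so each $\theta_\alpha$ is central.

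For part (b), let $E(\partial)\in\Mat_{\ell\times(\ell-m)}\mc V[\partial]$ and $E^*(\partial)\in\Mat_{(\ell-m)\times\ell}\mc V[\partial]$ denote the left and right factors appearing in \eqref{20130604:eq1}, so that $H=E(\partial)\circ A(\partial)\circ E^*(\partial)$ by part (a). The key algebraic input is the chain rule
\[
\widetilde D_f(\partial):=\Big(\sum_{s\in\mb Z_+}\frac{\widetilde\partial f}{\widetilde\partial u_i^{(s)}}\,\partial^s\Big)_{i=1}^{\ell-m}=D_f(\partial)\circ E(\partial)\,,\qquad f\in\mc V\,,
\]
where $D_f(\partial)$ is the full $1\times\ell$ Frechet derivative; this follows directly from \eqref{20130530:eq9} and the identity $D_{\partial^n p_\alpha}(\partial)=\partial^n\circ D_{p_\alpha}(\partial)$ (a consequence of \eqref{eq:0.4}), and it is the operator refinement of \eqref{20130603:eq4}, with adjoint $\widetilde D_f^*(\partial)=E^*(\partial)\circ D_f^*(\partial)$. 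Next, collecting the sums over $n$ and $m$ in the Master formula \eqref{20110922:eq1}, one rewrites it in the operator form $\{f_\lambda g\}_H=\big[D_g(\lambda+\partial)\circ H(\lambda+\partial)\circ D_f^*(\lambda+\partial)\big](1)$, valid for any matrix pseudodifferential operator $H$ and any algebra-of-differential-functions structure on $\mc V$, with the corresponding Frechet derivatives. Substituting $H=E\circ A\circ E^*$ and using that $\partial\mapsto\lambda+\partial$ is compatible with composition of (pseudo)differential operators, the right-hand side becomes $\big[\widetilde D_g(\lambda+\partial)\circ A(\lambda+\partial)\circ\widetilde D_f^*(\lambda+\partial)\big](1)=\{f_\lambda g\}_A^{\widetilde{\mc V}}$, which is \eqref{20130604:eq3}. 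Finally, since $\{\cdot\,_\lambda\,\cdot\}_H^{\mc V}=\{\cdot\,_\lambda\,\cdot\}_A^{\widetilde{\mc V}}$ as maps $\mc V\times\mc V\to\mc V((\lambda^{-1}))$ and the first of these is a PVA $\lambda$-bracket (as $H$ is a Poisson structure, by Theorem \ref{20110923:prop}), so is the second; as $A$ is moreover skewadjoint and rational, Theorem \ref{20110923:prop} gives that $A$ is a non-local Poisson structure on $\widetilde{\mc V}$.

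For part (c), by Proposition \ref{20130530:prop}(c) the hypothesis $\mc I\cap R_{\ell-m}=0$ makes $\mc V/\mc I=\widetilde{\mc V}/\mc I$ an algebra of differential functions in $u_1,\dots,u_{\ell-m}$ with the induced modified partial derivatives, and it is a domain by assumption. Since each $\theta_\alpha$ is central for $H$, it is central for $\{\cdot\,_\lambda\,\cdot\}_H$ (Lemma \ref{20130530:lem}(a)), so $\mc I$ is a PVA ideal and the induced $\lambda$-bracket \eqref{20130530:eq5b} on $\mc V/\mc I$ is well defined. Applying the quotient map to the identity \eqref{20130604:eq3} of part (b), and using that the modified partial derivatives descend to $\mc V/\mc I$ (Proposition \ref{20130530:prop}(c)), one finds that \eqref{20130530:eq5b} equals $\overline{\{f_\lambda g\}_A^{\widetilde{\mc V}}}=\{\overline{f}_\lambda\overline{g}\}_{\overline A}^{\mc V/\mc I}$, i.e.\ \eqref{20130530:eq5b} is precisely the Master-formula $\lambda$-bracket on $\mc V/\mc I$ attached to $\overline A(\partial)$. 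As $\overline A(\partial)$ is skewadjoint (cosets of the coefficients of the skewadjoint $A$) and rational by assumption, and since \eqref{20130530:eq5b}, being induced from a PVA, satisfies the Jacobi identity, Theorem \ref{20110923:prop} shows that $\overline A(\partial)$ is a non-local Poisson structure on $\mc V/\mc I$; hence $H^C(\partial):=\overline A(\partial)$ is well defined and the induced bracket \eqref{20130530:eq5b} is associated to it.

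The step I expect to be the main obstacle is part (b): proving the chain rule $\widetilde D_f=D_f\circ E$ and, more delicately, verifying that the operator form of the Master formula transforms functorially under $\partial\mapsto\lambda+\partial$, so that the factorization $H=E\circ A\circ E^*$ can be split off and reabsorbed into the (modified) Frechet derivatives. Once (b) is established, (a) is elementary linear algebra with the block decomposition and skewadjointness, and (c) is a routine descent to the quotient using Proposition \ref{20130530:prop}, Lemma \ref{20130530:lem} and Theorem \ref{20110923:prop}.
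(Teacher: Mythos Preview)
Your proposal is correct and follows essentially the same strategy as the paper: parts (a) and (c) are identical to the paper's arguments (block decomposition plus skewadjointness, then descent to the quotient). For part (b) you package the computation more conceptually---via the chain rule $\widetilde D_f=D_f\circ E$ (with $E=\left(\begin{smallmatrix}\id_{\ell-m}\\ -D_p(\partial)\end{smallmatrix}\right)$) and the observation that $\{f_\lambda g\}_H$ is the symbol of the scalar operator $D_g(\partial)\circ H(\partial)\circ D_f^*(\partial)$---whereas the paper carries out the same computation by expanding the Master formula into the four blocks of $H$ and regrouping term by term using the identity $D_{\partial^n p_\alpha}=\partial^n\circ D_{p_\alpha}$; the underlying content is the same. (One small notational point: your expression $[D_g(\lambda+\partial)\circ H(\lambda+\partial)\circ D_f^*(\lambda+\partial)](1)$ is cleaner if stated as ``the symbol at $\lambda$ of $D_g\circ H\circ D_f^*$'', which makes the functoriality under composition immediate and avoids ambiguity about what $\partial$ acts on.)
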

\begin{proof}
Write the matrix $H(\partial)$
in block form as follows:
\begin{equation}\label{20130530:eq2}
H(\partial)=
\left(\begin{array}{cc}
A(\partial) & B(\partial) \\
-B^*(\partial) & D(\partial)
\end{array}
\right)
\,,
\end{equation}
where 
$A(\partial)$ is an $(\ell-m)\times(\ell-m)$ matrix,
$B(\partial)$ is an $(\ell-m)\times m$ matrix,
and $D(\partial)$ is an $m\times m$ matrix.
Recalling equation \eqref{20130604:eq4},
we have that, by definition, the elements $\theta_1,\dots,\theta_m$
are central for $H$ if and only if
$$
D_\theta(\partial)\circ H(\partial)
=
(D_p(\partial)\,\,\id_{m})
\circ
\left(\begin{array}{cc}
A(\partial) & B(\partial) \\
-B^*(\partial) & D(\partial)
\end{array}
\right)
=0\,,
$$
namely $B^*(\partial)=D_p(\partial)\circ A(\partial)$
(which is the same as $B(\partial)=-A(\partial)\circ D_p^*(\partial)$,
since $A(\partial)$ is skewadjoint),
and $D(\partial)=-D_p(\partial)\circ B(\partial)=D_p(\partial)\circ A(\partial)\circ D_p^*(\partial)$.
Part (a) follows.

Next, we prove part (b).
By the Master Formula \eqref{20110922:eq1}
and the block form \eqref{20130604:eq1} for $H$, 
we have
\begin{equation}\label{20130603:eq2}
\begin{array}{l}
\displaystyle{
\{f_\lambda g\}_H
=
\sum_{i,j=1}^\ell\sum_{s,t\in\mb Z_+}
\frac{\partial g}{\partial u_j^{(t)}}
(\lambda+\partial)^t
H_{ji}(\lambda+\partial)
(-\lambda-\partial)^s
\frac{\partial f}{\partial u_i^{(s)}}
} \\
\displaystyle{
=
\sum_{i,j=1}^{\ell-m}\sum_{s,t\in\mb Z_+}
\frac{\partial g}{\partial u_j^{(t)}}
(\lambda+\partial)^t
A_{ji}(\lambda+\partial)
(-\lambda-\partial)^s
\frac{\partial f}{\partial u_i^{(s)}}
} \\
\displaystyle{
-
\sum_{i,j=1}^{\ell-m}\sum_{\beta=1}^m\sum_{s,q\in\mb Z_+}
\frac{\partial g}{\partial u_{\ell-m+\beta}^{(q)}}
(\lambda+\partial)^q
(D_p)_{\beta j}(\lambda+\partial)
A_{ji}(\lambda+\partial)
(-\lambda-\partial)^s
\frac{\partial f}{\partial u_i^{(s)}}
} \\
\displaystyle{
-
\sum_{i,j=1}^{\ell-m}\sum_{\alpha=1}^m\sum_{p,t\in\mb Z_+}
\frac{\partial g}{\partial u_j^{(t)}}
(\lambda+\partial)^t
A_{ji}(\lambda+\partial) (D_p^*)_{i\alpha}(\lambda+\partial)
(-\lambda-\partial)^p
\frac{\partial f}{\partial u_{\ell-m+\alpha}^{(p)}}
} \\
\displaystyle{
+
\sum_{i,j=1}^{\ell-m}\sum_{\alpha,\beta=1}^m\sum_{p,q\in\mb Z_+}
\frac{\partial g}{\partial u_{\ell-m+\beta}^{(q)}}
(\lambda+\partial)^q
(D_p)_{\beta j}(\lambda+\partial)A_{ji}(\lambda+\partial)
} \\
\displaystyle{
\,\,\,\,\,\,\,\,\,\,\,\,\,\,\,\,\,\,\,\,\,\,\,\,\,\,\,\,\,\,\,\,\,\,\,\,\,\,\,\,\,\,\,\,\,\,\,\,\,\,\,\,\,\,
\,\,\,\,\,\,\,\,\,\,\,\,\,\,\,\,\,\,\,\,\,\,\,\,\,\,\,
\times
(D_p^*)_{i\alpha}(\lambda+\partial)
(-\lambda-\partial)^p
\frac{\partial f}{\partial u_{\ell-m+\alpha}^{(p)}}
\,.}
\end{array}
\end{equation}
By the definition of Frechet derivative and the commutation relation \eqref{eq:0.4b},
we have
$$
\begin{array}{l}
\displaystyle{
\frac{\partial g}{\partial u_{\ell-m+\beta}^{(q)}}
(\lambda+\partial)^q
(D_p)_{\beta j}(\lambda+\partial)
=
\sum_{t\in\mb Z_+}
\frac{\partial(\partial^qp_\beta)}{\partial u_j^{(t)}}
\frac{\partial g}{\partial u_{\ell-m+\beta}^{(q)}}
(\lambda+\partial)^t
\,,} \\
\displaystyle{
(D_p^*)_{i\alpha}(\lambda+\partial)
(-\lambda-\partial)^p
\frac{\partial f}{\partial u_{\ell-m+\alpha}^{(p)}}
=
\sum_{s\in\mb Z_+}
(-\lambda-\partial)^s
\frac{\partial(\partial^pp_\alpha)}{\partial u_i^{(s)}}
\frac{\partial f}{\partial u_{\ell-m+\alpha}^{(p)}}
\,.}
\end{array}
$$
We can thus rewrite equation \eqref{20130603:eq2} as
$$
\begin{array}{l}
\displaystyle{
\{f_\lambda g\}_H
=
\sum_{i,j=1}^{\ell-m}\sum_{s,t\in\mb Z_+}
\frac{\partial g}{\partial u_j^{(t)}}
(\lambda+\partial)^t
A_{ji}(\lambda+\partial)
(-\lambda-\partial)^s
\frac{\partial f}{\partial u_i^{(s)}}
} \\
\displaystyle{
-
\sum_{i,j=1}^{\ell-m}\sum_{\beta=1}^m\sum_{s,t,q\in\mb Z_+}
\frac{\partial(\partial^qp_\beta)}{\partial u_j^{(t)}}
\frac{\partial g}{\partial u_{\ell-m+\beta}^{(q)}}
(\lambda+\partial)^t
A_{ji}(\lambda+\partial)
(-\lambda-\partial)^s
\frac{\partial f}{\partial u_i^{(s)}}
} \\
\displaystyle{
-
\sum_{i,j=1}^{\ell-m}\sum_{\alpha=1}^m\sum_{s,t,p\in\mb Z_+}
\frac{\partial g}{\partial u_j^{(t)}}
(\lambda+\partial)^t
A_{ji}(\lambda+\partial) 
(-\lambda-\partial)^s
\frac{\partial(\partial^pp_\alpha)}{\partial u_i^{(s)}}
\frac{\partial f}{\partial u_{\ell-m+\alpha}^{(p)}}
} \\
\displaystyle{
+
\sum_{i,j=1}^{\ell-m}\sum_{\alpha,\beta=1}^m\sum_{s,t,p,q\in\mb Z_+}
\frac{\partial(\partial^qp_\beta)}{\partial u_j^{(t)}}
\frac{\partial g}{\partial u_{\ell-m+\beta}^{(q)}}
(\lambda+\partial)^t
A_{ji}(\lambda+\partial)
} \\
\displaystyle{
\,\,\,\,\,\,\,\,\,\,\,\,\,\,\,\,\,\,\,\,\,\,\,\,\,\,\,\,\,\,\,\,\,\,\,\,\,\,\,\,\,\,\,\,\,\,\,\,\,\,\,\,\,\,
\,\,\,\,\,\,\,\,\,\,\,\,\,\,\,\,\,\,\,\,\,\,\,\,\,\,\,
\times
(-\lambda-\partial)^s
\frac{\partial(\partial^pp_\alpha)}{\partial u_i^{(s)}}
\frac{\partial f}{\partial u_{\ell-m+\alpha}^{(p)}}
} \\
\displaystyle{
=
\sum_{i,j=1}^{\ell-m}\sum_{s,t\in\mb Z_+}
\frac{\widetilde{\partial} g}{\widetilde{\partial} u_j^{(t)}}
(\lambda+\partial)^t
A_{ji}(\lambda+\partial)
(-\lambda-\partial)^s
\frac{\widetilde{\partial} f}{\widetilde{\partial} u_i^{(s)}}
\,.}
\end{array}
$$
Hence, equation \eqref{20130604:eq3} holds.
Part (c) is an immediate consequence of part (b).
\end{proof}

\begin{remark}\label{20130622:rem1}
We believe that the assumption that
$\overline{A}(\partial)$ is rational in part (c) of Theorem \ref{20130530:propb} is 
automatically satisfied.
First, note that we can prove it entrywise, so we reduce to the scalar case. 
Then, we would need to prove the following.
Let $\mc V$ be a differential domain, and
let $a(\partial),b(\partial)\in\mc V[\partial]$, with $b(\partial)\neq0$,
be such that $h(\partial)=a(\partial)\circ b^{-1}(\partial)\in\mc V((\partial^{-1}))$.
Then, there should exist
$f\in\mc V$ such that $a(\partial)\circ\frac1f, b(\partial)\circ\frac1f\in\mc V[\partial]$,
and $fb^{-1}(\partial)\in\mc V((\partial^{-1}))$.
\end{remark}

\subsection{Central reduction of a Hamiltonian equation}\label{sec:4.4a}

Let $\mc V$ be an algebra of differential functions, which is a domain,
and let $H(\partial)\in\Mat_{\ell\times\ell}\mc V((\partial^{-1}))$
be a (non-local) Poisson structure on $\mc V$.
Let
\begin{equation}\label{20130530:eq1}
\frac{du}{dt}=P\,\in\mc V^\ell\,,
\end{equation}
be a Hamiltonian equation associated to the Poisson structure $H$
and to a Hamiltonian functional $\tint h\in\mc V/\partial\mc V$.
In other words, $\tint h\ass{H}P$ (cf. Section \ref{sec:3.4}).

Suppose that $\theta_1,\dots,\theta_m\in\mc V$
are central elements for $H$, of the form \eqref{20130530:eq7}.
By Proposition \ref{20130530:prop}(c),
if $\mc I\cap R_{\ell-m}=0$,
the quotient space $\mc V/\mc I$ has a natural structure of
algebra of differential functions in the variables $u_1,\dots,u_{\ell-m}$,
with modified partial derivatives $\frac{\widetilde{\partial}}{\widetilde{\partial} u_i^{(s)}}$,
and in Proposition \ref{20130530:propb}(c)
we define a central reduction $H^C(\partial)$,
which is a Poisson structure on $\mc V/\mc I$.
The following proposition
says that equation \eqref{20130530:eq1}
can be ``reduced'' to an evolution equation in $\mc V/\mc I$,
which is Hamiltonian for the Poisson structure $H^C(\partial)$.

\begin{proposition}\label{20130604:prop}
Let $\mc V$ be an algebra of differential functions in $u_1,\dots,u_\ell$,
which is a domain.
Let $\theta_1,\dots,\theta_m\in\mc V$ be elements of the form \eqref{20130530:eq7},
and let $H(\partial)$ be a Poisson structure on $\mc V$ of the form \eqref{20130604:eq1}.
(In particular, by Proposition \ref{20130530:propb}(a), $\theta_1,\dots,\theta_m$ are central for $H$.)
Let $\tint h\in\mc V/\partial\mc V$ and $P\in\mc V^\ell$ 
be $H$-associated over $\mc V$, $\tint h\ass{H}P$,
so that equation \eqref{20130530:eq1}
is Hamiltonian with respect to the Poisson structure $H$ and the Hamiltonian functional $\tint h$.
Let $P_1\in\mc V^{\ell-m}$ be obtained by taking the first $\ell-m$ entries of $P\in\mc V^\ell$.
Then:
\begin{enumerate}[(a)]
\item
The elements $\tint h\in\mc V/\partial\mc V$ and $P_1\in\mc V^{\ell-m}$ 
are $A$-associated over $\widetilde{\mc V}$: $\tint h\ass{A}P_1$.
Hence, equation \eqref{20130530:eq1}
can also be viewed as a  Hamiltonian equation
on the modified algebra of differential functions $\widetilde{\mc V}$,
for the Poisson structure $A(\partial)$ and the same Hamiltonian functional $\tint h$.
\item
Assume that $\mc I\cap R_{\ell-m}=0$,
and that the quotient algebra of differential functions $\mc V/\mc I$
(cf. Proposition \ref{20130530:prop}) is a domain.
Assume moreover that,
for $H(\partial)$ of the form \eqref{20130604:eq1},
the rational matrix $A(\partial)$ has a fractional decomposition
$A(\partial)=M_1(\partial)\circ N_1(\partial)^{-1}$,
with $M_1(\partial),N_1(\partial)\in\Mat_{(\ell-m)\times(\ell-m)}\mc V[\partial]$
and $N_1(\partial)$ non-degenerate,
such that $\overline{N_1}(\partial)\in\Mat_{(\ell-m)\times(\ell-m)}(\mc V/\mc I)[\partial]$
is also non-degenerate.
Hence, $\overline{A}(\partial)$ is rational,
and by Proposition \ref{20130530:propb}(c),
we have a centrally reduced 
Poisson structure $H^C(\partial)=\overline{A}(\partial)$ on $\mc V/\mc I$.
Then,
$\tint \overline{h}\ass{H^C}\overline{P_1}$.
In particular, the following evolution equation in $\mc V/\mc I$,
$$
\frac{du}{dt}=\overline{P_1}\,\in(\mc V/\mc I)^\ell\,,
$$
is a  Hamiltonian equation
on the quotient algebra $\mc V/\mc I$,
for the centrally reduced Poisson structure $H^C(\partial)$,
and the Hamiltonian functional $\tint \overline{h}$.
\end{enumerate}
\end{proposition}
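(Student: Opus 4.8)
\emph{Proof strategy.}
The plan is to reduce the whole statement to the structural picture behind \eqref{20130604:eq1} and \eqref{20130603:eq4}. First I would set
\[
R(\partial)=\binom{\id_{\ell-m}}{-D_p(\partial)}\in\Mat_{\ell\times(\ell-m)}\mc V[\partial]\,,
\qquad
L=\big(\id_{\ell-m}\,,\,0\big)\,,
\]
so that $LR=\id_{\ell-m}$ and, taking adjoints, $R^*(\partial)L^*=\id_{\ell-m}$. In this notation \eqref{20130604:eq1} reads $H(\partial)=R(\partial)A(\partial)R^*(\partial)$, \eqref{20130603:eq4} reads $\frac{\widetilde{\delta} h}{\widetilde{\delta} u}=R^*(\partial)\frac{\delta h}{\delta u}$, one has $P_1=LP$ and $A(\partial)=LH(\partial)L^*$, and — directly from the shape of $R^*$ — $R^*(\partial)\binom{D_p^*(\partial)}{\id_m}=0$, hence also $H(\partial)\binom{D_p^*(\partial)}{\id_m}=0$. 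The one external tool I will use is the characterization of $H$-association through coprime fractional decompositions (cf.\ Section \ref{sec:3.4} and \cite{CDSK12,DSK13}): for a rational matrix pseudodifferential operator $H$, the relation $\tint h\ass{H}P$ holds precisely when, taking a coprime fractional decomposition $H=A_0B_0^{-1}$ (so that there is a B\'ezout identity $XA_0+YB_0=\id$ with $X,Y$ over $\mc K$), one has $B_0(\partial)^{-1}\frac{\delta h}{\delta u}\in\mc K^\ell$ and $A_0(\partial)B_0(\partial)^{-1}\frac{\delta h}{\delta u}=P$ — briefly, ``$P=H(\partial)\frac{\delta h}{\delta u}$ in $\mc K^\ell$''; the nontrivial half is recovered by taking $F:=B_0^{-1}\frac{\delta h}{\delta u}=XP+Y\frac{\delta h}{\delta u}\in\mc K^\ell$.

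For part (a): by hypothesis and the above, $H(\partial)\frac{\delta h}{\delta u}=P\in\mc V^\ell$. Let $\pi_m$ denote the projection of $\mc V^\ell$ onto its last $m$ components; reading \eqref{20130603:eq4} one gets $L^*R^*(\partial)\frac{\delta h}{\delta u}=\frac{\delta h}{\delta u}-\binom{D_p^*(\partial)}{\id_m}\pi_m\big(\frac{\delta h}{\delta u}\big)$, and therefore
\[
A(\partial)\tfrac{\widetilde{\delta} h}{\widetilde{\delta} u}
=LH(\partial)L^*R^*(\partial)\tfrac{\delta h}{\delta u}
=LH(\partial)\tfrac{\delta h}{\delta u}-L\,H(\partial)\binom{D_p^*(\partial)}{\id_m}\pi_m\big(\tfrac{\delta h}{\delta u}\big)
=LP-0=P_1\,.
\]
By the same characterization, now applied to the Poisson structure $A$ on $\widetilde{\mc V}$ (whose field of fractions is again $\mc K$), this gives $\tint h\ass{A}P_1$. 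The same computation also gives $P=R(\partial)\big(A(\partial)\frac{\widetilde{\delta} h}{\widetilde{\delta} u}\big)=R(\partial)P_1$, i.e.\ the last $m$ components of $P$ are $-D_p(\partial)P_1$. The final ``hence'' of (a) is then immediate from \eqref{20130604:eq3}: the $H$-$\lambda$-bracket on $\mc V$ equals the $A$-$\lambda$-bracket on $\widetilde{\mc V}$, so \eqref{20130530:eq1} is precisely the Hamiltonian equation on $\widetilde{\mc V}$ for the Poisson structure $A$ and the same Hamiltonian functional $\tint h$.

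For part (b): since $\overline{N_1}(\partial)$ is non-degenerate, $\overline A(\partial)=\overline{M_1}(\partial)\,\overline{N_1}(\partial)^{-1}$ is a genuine fractional decomposition over $\mc V/\mc I$, hence $\overline A$ is rational and, by Proposition \ref{20130530:propb}(c), $H^C(\partial)=\overline A(\partial)$. To obtain $\tint\overline h\ass{H^C}\overline{P_1}$ I would transport the data of part (a) through the quotient map $\mc V\to\mc V/\mc I$. Passing to a coprime fractional decomposition $A=A_0B_0^{-1}$ and writing $(M_1,N_1)=(A_0W,B_0W)$ for a matrix differential operator $W$: since $\overline{N_1}=\overline{B_0}\,\overline{W}$ is non-degenerate, so are $\overline{B_0}$ and $\overline{W}$ (over the matrix skewfield a one-sided inverse is two-sided), so I may take $(M_1,N_1)=(A_0,B_0)$. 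Then part (a) together with the B\'ezout identity gives $G:=N_1(\partial)^{-1}\frac{\widetilde{\delta} h}{\widetilde{\delta} u}=XP_1+Y\frac{\widetilde{\delta} h}{\widetilde{\delta} u}\in\mc K^{\ell-m}$, with $N_1G=\frac{\widetilde{\delta} h}{\widetilde{\delta} u}$ and $M_1G=P_1$ over $\mc V$; reducing these modulo $\mc I$ yields $\overline{N_1}\,\overline{G}=\frac{\widetilde{\delta}\overline h}{\widetilde{\delta} u}$ and $\overline{M_1}\,\overline{G}=\overline{P_1}$, where $\overline{G}$ lies in the field of fractions of $\mc V/\mc I$. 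Hence $\tint\overline h\ass{\overline A}\overline{P_1}$, i.e.\ $\tint\overline h\ass{H^C}\overline{P_1}$, and the ``in particular'' assertion follows.

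I expect the main obstacle to be exactly this last descent: one must verify that the coprime fractional decomposition of $A$, its B\'ezout identity, and the intermediate vector $G=N_1(\partial)^{-1}\frac{\widetilde{\delta} h}{\widetilde{\delta} u}$ can all be reduced modulo $\mc I$ to honest objects over $\mc V/\mc I$ — i.e.\ that the various denominators involved do not enter $\mc I$ — so that the resulting identities over $\mc V/\mc I$ are genuine and yield a true $H^C$-association rather than a formal one. The hypotheses $\mc I\cap R_{\ell-m}=0$, $\mc V/\mc I$ a domain, and $\overline{N_1}$ non-degenerate are precisely what makes this work. Everything else — the reductions modulo $\mc I$ of the Master Formula, of the $\lambda$-brackets, and of the variational derivatives, including the identity $\overline{\widetilde{\delta} h/\widetilde{\delta} u}=\widetilde{\delta}\overline h/\widetilde{\delta} u$ — is routine.
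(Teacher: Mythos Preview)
Your overall strategy is more conceptual than the paper's, but part (a) has a genuine gap that you have not closed. You correctly observe that $AR^*=LH$ as $(\ell-m)\times\ell$ pseudodifferential operators, and that the formal identity ``$H(\partial)\frac{\delta h}{\delta u}=P$'' yields ``$A(\partial)\frac{\widetilde\delta h}{\widetilde\delta u}=P_1$''. But your ``characterization'' of association only runs one way here: from $\tint h\ass{H}P$ you may indeed extract, for a coprime $H=A_0B_0^{-1}$, an $F\in\mc K^\ell$ with $B_0F=\frac{\delta h}{\delta u}$ and $A_0F=P$, via the B\'ezout identity. The converse step --- from the formal equality ``$A\frac{\widetilde\delta h}{\widetilde\delta u}=P_1$'' back to $\tint h\ass{A}P_1$ --- is \emph{not} automatic: you must \emph{produce} a fractional decomposition $A=M_1N_1^{-1}$ and a vector $G\in\mc K^{\ell-m}$ with $N_1G=\frac{\widetilde\delta h}{\widetilde\delta u}$ and $M_1G=P_1$. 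Defining $G:=X_AP_1+Y_A\frac{\widetilde\delta h}{\widetilde\delta u}$ from a B\'ezout identity for a coprime $A=M_1N_1^{-1}$ does \emph{not} give $N_1G=\frac{\widetilde\delta h}{\widetilde\delta u}$ unless you already know that $N_1^{-1}\frac{\widetilde\delta h}{\widetilde\delta u}\in\mc K^{\ell-m}$, which is exactly the point at issue.

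The fix is short and uses the paper's Lemma~\ref{20130611:lem}. From $AR^*=LH$ and a coprime $H=A_0B_0^{-1}$ one gets $M_1N_1^{-1}(R^*B_0)=LA_0$; since $LA_0$ and $R^*B_0$ are differential and $M_1,N_1$ are right coprime, Lemma~\ref{20130611:lem} forces $W:=N_1^{-1}R^*B_0$ to be a matrix differential operator. Then $G:=WF\in\mc K^{\ell-m}$ satisfies $N_1G=R^*B_0F=R^*\frac{\delta h}{\delta u}=\frac{\widetilde\delta h}{\widetilde\delta u}$ and $M_1G=LA_0F=LP=P_1$, which is the association relation you need. The paper takes a different route to the same construction: it first brings $N$ to upper-triangular block form, reads off $A=M_1N_1^{-1}$ from the blocks of $M$ and $N$, and then uses Lemma~\ref{20130611:lem} to show that the off-diagonal combination $X(\partial)=N_1^{-1}(N_2-D_p^*N_4)$ is differential, from which it builds $F_1=(\id_{\ell-m},X)F$ explicitly. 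Your approach is cleaner once the missing step is supplied; the paper's is more concrete and makes the fractional decomposition of $A$ visible from that of $H$.

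For part (b), once (a) is established for a coprime $A=M_1N_1^{-1}$, your reduction modulo $\mc I$ is essentially the paper's argument; your concern about whether the denominators of $G$ survive the quotient is well-placed (the paper writes $F_1\in\mc V^{\ell-m}$ without justification), and the hypothesis that $\overline{N_1}$ is non-degenerate together with the B\'ezout identity is indeed what one uses to control this.
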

\begin{proof}
By assumption we have the association relation $\tint h\ass{H}P$.
This means that we have a fractional decomposition 
$H(\partial)=M(\partial)N^{-1}(\partial)$,
with $M(\partial),N(\partial)\in\Mat_{\ell\times\ell}\mc V[\partial]$
and $N(\partial)$ non-degenerate,
and an element $F\in\mc K^\ell$,
such that
\begin{equation}\label{20130607:eq1}
\frac{\delta h}{\delta u}=N(\partial)F
\,\,,\,\,\,\,
P=M(\partial)F\,.
\end{equation}


The first observation is that,
without loss of generality,
we can assume that $N(\partial)$ is upper triangular.
Indeed, by \cite[Lem.3.1]{CDSK13b},
we have $N(\partial)=T(\partial)\circ \frac1fU(\partial)$,
where $T(\partial)\in\Mat_{\ell\times\ell}\mc V[\partial]$ is upper triangular non-degenerate,
$f\in\mc V$ is a non-zero element,
and $U(\partial)\in\Mat_{\ell\times\ell}\mc V[\partial]$
is invertible in $\Mat_{\ell\times\ell}\mc K[\partial]$.
Clearing the denominators, 
we can write $U^{-1}(\partial)\circ g\in\Mat_{\ell\times\ell}\mc V[\partial]$,
for some $g\in\mc V$.
Then $H(\partial)$ admits the fractional decomposition
$H(\partial)=\tilde{M}(\partial)\tilde{N}^{-1}(\partial)$,
where
$$
\tilde{M}(\partial)=
M(\partial)\circ U^{-1}(\partial)\circ fg
\,\,,\,\,\,\,
\tilde{N}(\partial)=
N(\partial)\circ U^{-1}(\partial)\circ fg
=T(\partial)\circ g
\,,
$$
and equations \eqref{20130607:eq1} imply
$$
\frac{\delta h}{\delta u}
=
\tilde{N}(\partial)\tilde{F}
\,\,,\,\,\,\,
P
=
\tilde{M}(\partial)\tilde{F}
\,,
$$
where $\tilde{F}=\frac1{fg}U(\partial)F\in\mc K^\ell$.
So, we can replace the original fractional decomposition $H=MN^{-1}$
by the new fractional decomposition $H=\tilde{M}\tilde{N}^{-1}$,
where $\tilde{N}$ is upper triangular.


Since, by assumption, $H(\partial)$ has the form \eqref{20130604:eq1},
we have $(D_p(\partial)\,,\,\id_m)\circ H(\partial)=0$,
and therefore $(D_p(\partial)\,,\,\id_{\ell-m})\circ M(\partial)=0$.
Hence, $M(\partial)$ has the following form:
\begin{equation}\label{20130607:eq2}
M(\partial)=
\left(\begin{array}{c}
\id_{\ell-m} \\ -D_p(\partial)
\end{array}
\right)
\circ
\big(
M_1(\partial) \, M_2(\partial)
\big)
\,,
\end{equation}
where $M_1(\partial)\in\Mat_{(\ell-m)\times(\ell-m)}\mc V[\partial]$
and $M_2(\partial)\in\Mat_{(\ell-m)\times m}\mc V[\partial]$.
Also, we let
$$
N(\partial)=
\left(\begin{array}{cc}
N_1(\partial) & N_2(\partial) \\
0 & N_4(\partial)
\end{array}
\right)
\,,
$$
where 
$N_1(\partial)\in\Mat_{(\ell-m)\times(\ell-m)}\mc V[\partial]$,
$N_2(\partial)\in\Mat_{(\ell-m)\times m}\mc V[\partial]$,
and $N_4(\partial)\in\Mat_{m\times m}\mc V[\partial]$,
with $N_1(\partial)$ and $N_4(\partial)$ non-degenerate.
Its inverse is
\begin{equation}\label{20130607:eq4}
N^{-1}(\partial)=
\left(\begin{array}{cc}
N_1^{-1}(\partial) & -N_1^{-1}(\partial)\circ N_2(\partial)\circ N_4^{-1}(\partial) \\
0 & N_4^{-1}(\partial)
\end{array}
\right)
\,\in\Mat_{\ell\times\ell}\mc V((\partial^{-1}))
\,.
\end{equation}

It is easy to deduce that
the matrix $A(\partial)\in\Mat_{(\ell-m)\times(\ell-m)}\mc V((\partial^{-1}))$
in \eqref{20130604:eq1} admits the fractional decomposition
\begin{equation}\label{20130607:eq7}
A(\partial)=
M_1(\partial)\circ N_1^{-1}(\partial)
\,.
\end{equation}

Since, by assumption, $H(\partial)$ has the form \eqref{20130604:eq1},
we have 
$$
M(\partial)\circ N^{-1}(\partial)\circ
\left(\begin{array}{c}
D_p^*(\partial) \\ \id_m
\end{array}
\right)=0
\,.
$$
Substituting the expressions \eqref{20130607:eq2} and \eqref{20130607:eq4} of $M$ and $N^{-1}$
in this equation, we get
$$
M_1(\partial)\circ N_1^{-1}(\partial)\circ D_p^*(\partial)
-M_1(\partial)\circ N_1^{-1}(\partial)\circ N_2(\partial)\circ N_4^{-1}(\partial)
+M_2(\partial)\circ N_4^{-1}(\partial)=0
\,,
$$
and multiplying both sides by $N_4(\partial)$ on the right,
we get
\begin{equation}\label{20130607:eq5}
M_2(\partial)
=
M_1(\partial)\circ N_1^{-1}(\partial)\circ 
\big(
N_2(\partial)-D_p^*(\partial)\circ N_4(\partial)
\big)
\,.
\end{equation}


If $M_1(\partial)$ and $N_1(\partial)$
have a common right factor $Q_1(\partial)\in\Mat_{(\ell-m)\times(\ell-m)}\mc V[\partial]$,
then $M(\partial)$ and $N(\partial)$ have the common right factor
$$
Q(\partial)
=
\left(\begin{array}{cc}
Q(\partial)(\partial) & 0 \\
0 & \id_m
\end{array}
\right)
\,\in\Mat_{\ell\times\ell}\mc V[\partial]
\,.
$$
On the other hand, 
if $M(\partial)=\tilde{M}(\partial)\circ Q(\partial)$
and $N(\partial)=\tilde{N}(\partial)\circ Q(\partial)$,
and the association relation \eqref{20130607:eq1} holds,
then it also holds after replacing $M$ by $\tilde{M}$,
$N$ by $\tilde{N}$, and $F$ by $Q(\partial)F\in\mc K^\ell$.
Therefore, we can assume, without loss of generality, that $M_1(\partial)$
and $N_1(\partial)$ are right coprime in the principal ideal ring 
$\Mat_{(\ell-m)\times(\ell-m)}\mc K[\partial]$, \cite{CDSK13a}.


We next use the following simple result:
\begin{lemma}\label{20130611:lem}
Let $A(\partial),B(\partial)\in\Mat_{n\times n}\mc K[\partial]$ 
be right coprime matrix differential operators,
with $B(\partial)$ non-degenerate,
and let $C(\partial)\in\Mat_{n\times r}\mc K[\partial]$.
Then
$A(\partial)\circ B^{-1}(\partial)\circ C(\partial)\in\Mat_{n\times r}\mc K[\partial]$
if and only if
$B^{-1}(\partial)\circ C(\partial)\in\Mat_{n\times r}\mc K[\partial]$.
\end{lemma}
\begin{proof}
The if part is obvious. Conversely, suppose that $A(\partial)\circ B^{-1}(\partial)\circ C(\partial)$
is a matrix differential operator.
Since, by assumption, $A(\partial)$ and $B(\partial)$ are right coprime,
and since $\Mat_{n\times n}\mc K[\partial]$ is a principal ideal ring,
we have the Bezout identity
$$
U(\partial)\circ A(\partial)+V(\partial)\circ B(\partial)=\id_{n}
\,,
$$
for some
$U(\partial),V(\partial)\in\Mat_{n\times n}\mc K[\partial]$.
Multiplying both sides of this equation on the right by $B^{-1}(\partial)\circ C(\partial)$,
we get
$$
\begin{array}{l}
\displaystyle{
\vphantom{\Big(}
B^{-1}(\partial)\circ C(\partial)
} \\
\displaystyle{
\vphantom{\Big(}
=
U(\partial)\circ A(\partial)\circ B^{-1}(\partial)\circ C(\partial)
+
V(\partial)\circ C(\partial)
\,,}
\end{array}
$$
which obviously lies in $\Mat_{n\times n}\mc K[\partial]$.
\end{proof}

Applying Lemma \ref{20130611:lem} to
$A(\partial)=M_1(\partial)$, $B(\partial)=N_1(\partial)\,\in\Mat_{(\ell-m)\times(\ell-m)}\mc V[\partial]$,
and 
$C(\partial)=N_2(\partial)-D_p^*(\partial)\circ N_4(\partial)\in\Mat_{(\ell-m)\times m}\mc V[\partial]$,
and recalling equation \eqref{20130607:eq5},
we deduce that
\begin{equation}\label{20130611:eq1}
X(\partial)
:=
M_1^{-1}(\partial)\circ M_2(\partial)
=
N_1^{-1}(\partial)\circ 
\big(
N_2(\partial)-D_p^*(\partial)\circ N_4(\partial)
\big)
\,\in\Mat_{(\ell-m)\times m}\mc V[\partial]
\,.
\end{equation}


By the first equation in \eqref{20130607:eq1} 
and equation \eqref{20130603:eq4}, we have
\begin{equation}\label{20130607:eq8}
\begin{array}{l}
\displaystyle{
\vphantom{\Big(}
\frac{\widetilde{\delta}h}{\widetilde{\delta} u}
=
\big(\id_{\ell-m}\,,\,-D_p^*(\partial)\big)
\circ
\frac{\delta h}{\delta u}
=
\big(\id_{\ell-m}\,,\,-D_p^*(\partial)\big)
\circ
\left(\begin{array}{cc}
N_1(\partial) & N_2(\partial) \\
0 & N_4(\partial)
\end{array}
\right) F
} \\
\displaystyle{
\vphantom{\Big(}
=
\big(N_1(\partial)\,,\,N_2(\partial)-D_p^*(\partial)N_4(\partial)\big) F
=
N_1(\partial)
F_1
\,,
}
\end{array}
\end{equation}
where, by \eqref{20130611:eq1}
\begin{equation}\label{20130607:eq10}
F_1
=
\big(\id_{\ell-m}\,,\,X(\partial)\big) F
\in\mc K^{\ell-m}
\,.
\end{equation}
Moreover, by the second equation in \eqref{20130607:eq1} we have
\begin{equation}\label{20130607:eq9}
\vphantom{\Big(}
P_1
=
\big(M_1(\partial)\,,\,M_2(\partial)\big)F
=
M_1(\partial)F_1
\,.
\end{equation}
In the last equality we used \eqref{20130611:eq1} and \eqref{20130607:eq10}.
Equations \eqref{20130607:eq7}, \eqref{20130607:eq8}, and \eqref{20130607:eq9},
imply that $\tint h\ass{A}P_1$ in $\widetilde{\mc V}$, proving (a).


By part (a) we have that $\tint h\ass{A}P_1$ in $\widetilde{\mc V}$,
i.e. there exists $F_1\in\mc V^{\ell-m}$ such that 
$\frac{\widetilde{\delta} h}{\widetilde{\delta} u}=N_1(\partial)F_1$
and $P_1=M_1(\partial)F_1$,
where \eqref{20130607:eq7} is a fractional decomposition  for $A(\partial)$.
In fact, we can assume that \eqref{20130607:eq7} is minimal, \cite{DSK13}.
By passing to the quotient $\mc V/\mc I$, we thus get
\begin{equation}\label{20130611:eq2}
\frac{\widetilde{\delta}\overline{h}}{\widetilde{\delta} u}=\overline{N_1}(\partial)\overline{F_1}
\,\,\text{ and }
\overline{P_1}=\overline{M_1}(\partial)\overline{F_1}
\,.
\end{equation}
On the other hand, by our assumption,
if \eqref{20130607:eq7} is a minimal fractional decomposition,
then $\overline{N_1}(\partial)\in\Mat_{(\ell-m)\times(\ell-m)}(\mc V/\mc I)[\partial]$ 
is non-degenerate,
and therefore 
$H^C(\partial)=\overline{A}(\partial)=\overline{M_1}(\partial)\circ \overline{N_1}^{-1}(\partial)$
is a fractional decomposition for $H^C(\partial)$.
Therefore, by equation \eqref{20130611:eq2} we get that $\tint \overline{h}\ass{H^C}\overline{P_1}$,
proving (b).
\end{proof}
\begin{remark}\label{20130622:rem2}
We believe that the assumption that
$\overline{N_1}(\partial)$ non-degenerate in part (b) of Theorem \ref{20130530:prop} 
is automatically satisfied
(see Remark \ref{20130622:rem1}).
\end{remark}

\section{Dirac reduction for (non-local) Poisson structures and Hamiltonian equations}\label{sec:4}

\subsection{Dirac modified Poisson structure}\label{sec:4.1}

In the present section we describe how the Dirac modification \eqref{dirac}
of a PVA $\lambda$-bracket
becomes in the special case of a non-local Poisson structure on an algebra of differential functions.

Let $\mc V$ be an algebra of differential functions in the variables $u_1,\dots,u_\ell$,
which is a domain,
and let $H(\partial)\in\Mat_{\ell\times\ell}\mc V((\partial^{-1}))$
be a non-local Poisson structure on $\mc V$.
Let $\{\cdot\,_\lambda\,\cdot\}_H$ be the corresponding PVA $\lambda$-bracket on $\mc V$
given by the Master Formula \eqref{20110922:eq1}.
Let, as in Section \ref{sec:2}, $\theta_1,\dots,\theta_m$ be some elements of $\mc V$,
and let $\mc I=\langle\theta_1,\dots,\theta_m\rangle_{\mc V}\subset\mc V$ 
be the differential ideal generated by them.
Consider the following rational matrix pseudodifferential operator
\begin{equation}\label{20130529:eq1}
C(\partial)=D_\theta(\partial)\circ H(\partial)\circ D_\theta^*(\partial)\,
\in\Mat_{m\times m}\mc V((\partial^{-1}))\,,
\end{equation}
where $D_\theta(\partial)$ is the $m\times\ell$ matrix differential operator
of Frechet derivatives of the elements $\theta_i$'s:
\begin{equation}\label{20130529:eq2}
{D_\theta(\partial)}_{\alpha,i}
=\sum_{n\in\mb Z_+}\frac{\partial\theta_\alpha}{\partial u_i^{(n)}}\partial^n
\,\,,\,\,\,\,
\alpha=1,\dots,m,\,i=1,\dots,\ell\,,
\end{equation}
and $D_\theta^*(\partial)\in\Mat_{\ell\times m}\mc V[\partial]$ is its formal adjoint.
We assume that the matrix $C(\partial)$ in \eqref{20130529:eq1}
is invertible in $\Mat_{m\times m}\mc V((\partial^{-1}))$.
\begin{definition}\label{20130529:def}
The \emph{Dirac modification} of the Poisson structure $H\in\Mat_{\ell\times\ell}\mc V((\partial^{-1}))$
by the constraints $\theta_1,\dots,\theta_m$
is the following skewadjoint $\ell\times\ell$ matrix pseudodifferential operator:
\begin{equation}\label{20130529:eq3}
\widetilde{H}^D(\partial)
=
H(\partial)-H(\partial)\circ D_\theta^*(\partial)\circ C^{-1}(\partial)\circ D_\theta(\partial)\circ H(\partial)
\,.
\end{equation}
\end{definition}
\begin{proposition}\phantomsection\label{20130529:prop}
\begin{enumerate}[(a)]
\item
The Dirac modified $\lambda$-bracket \eqref{dirac}
is related to the Dirac modification $\widetilde{H}^D$ 
of the Poisson structure defined in \eqref{20130529:eq3}
by the Master Formula \eqref{20110922:eq1}:
\begin{equation}\label{20130529:eq4}
\{f_\lambda g\}^D_H=\{f_\lambda g\}_{\widetilde{H}^D}
\,\,\,\,
\text{ for all } f,g\in\mc V
\,.
\end{equation}
In particular, the Dirac modification $\widetilde{H}^D(\partial)$
is a non-local Poisson structure on $\mc V$.
\item
All the elements $\theta_\alpha$'s are central for 
the Dirac modified Poisson structure $\widetilde{H}^D$,
i.e. $\widetilde{H}^D(\partial)\circ D_\theta^*(\partial)=0$, 
and $D_\theta(\partial)\circ \widetilde{H}^D(\partial)=0$.
\end{enumerate}
\end{proposition}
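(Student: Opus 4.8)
The plan is to prove part (a) by a direct computation using the Master Formula \eqref{20110922:eq1}, and then derive part (b) as a consequence of the definition \eqref{20130529:eq3} together with the identity $D_\theta(\partial)\circ H(\partial)\circ D_\theta^*(\partial)=C(\partial)$.

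\medskip

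\textbf{Part (a).} First I would record the two basic identities relating the Master Formula to Frechet derivatives. From \eqref{20110922:eq1}, for any $f,g\in\mc V$ one has the ``integrated'' formula
\[
{\{f_\lambda g\}_H}=D_g(\lambda+\partial)\circ H(\lambda+\partial)\circ D_f^*(-\lambda-\partial)(\id)\,,
\]
and, more to the point, the two identities that appear in the proof of Lemma \ref{20130530:lem},
\[
\{f_\lambda\theta_\alpha\}_H=\sum_i\{f_\lambda u_i\}_H\,\text{-type expansion}\quad\Longrightarrow\quad
\{f_\lambda\theta_\alpha\}_H=\big(D_\theta(\lambda+\partial)\circ H(\lambda+\partial)\circ D_f^*(-\lambda-\partial)\big)_\alpha(\id)\,,
\]
and its skew-counterpart
\[
{\{{\theta_\beta}_{\lambda+\partial}g\}_H}_\to=D_g(\lambda+\partial)\circ H(\lambda+\partial)\circ D_{\theta_\beta}^*(\lambda+\partial)\,.
\]
With these, I would substitute into the Dirac modified bracket \eqref{dirac}. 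The first term $\{f_\lambda g\}_H$ is, by definition, $\{f_\lambda g\}_H$. For the second term, I would plug the two displayed identities into
\[
\sum_{\alpha,\beta=1}^m{\{{\theta_\beta}_{\lambda+\partial}g\}_H}_\to(C^{-1})_{\beta\alpha}(\lambda+\partial)\{f_\lambda\theta_\alpha\}_H\,,
\]
obtaining
\[
D_g(\lambda+\partial)\circ H(\lambda+\partial)\circ D_\theta^*(\lambda+\partial)\circ C^{-1}(\lambda+\partial)\circ D_\theta(\lambda+\partial)\circ H(\lambda+\partial)\circ D_f^*(-\lambda-\partial)(\id)\,.
\]
Comparing with the Master Formula \eqref{20110922:eq1} applied to the operator $H\circ D_\theta^*\circ C^{-1}\circ D_\theta\circ H$, this is precisely $\{f_\lambda g\}_{H\circ D_\theta^*\circ C^{-1}\circ D_\theta\circ H}$. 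Hence $\{f_\lambda g\}_H^D=\{f_\lambda g\}_H-\{f_\lambda g\}_{H\circ D_\theta^*\circ C^{-1}\circ D_\theta\circ H}=\{f_\lambda g\}_{\widetilde H^D}$, which is \eqref{20130529:eq4}. Since by Theorem \ref{prop:dirac}(a) $\{\cdot\,_\lambda\,\cdot\}_H^D$ is a PVA $\lambda$-bracket, and $\widetilde H^D$ is manifestly skewadjoint (being $H-H D_\theta^* C^{-1} D_\theta H$ with $H$ skewadjoint and $C^{-1}$ skewadjoint, the second summand is also skewadjoint), it follows from Definition \ref{20111007:def} and Theorem \ref{20110923:prop}(d) — noting that $\widetilde H^D$ is a rational matrix pseudodifferential operator, being built from rational operators — that $\widetilde H^D$ is a non-local Poisson structure on $\mc V$.

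\medskip

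\textbf{Part (b).} This is a short algebraic verification. Using $C(\partial)=D_\theta(\partial)\circ H(\partial)\circ D_\theta^*(\partial)$ from \eqref{20130529:eq1}, I compute directly
\[
\widetilde H^D(\partial)\circ D_\theta^*(\partial)
=H\circ D_\theta^*-H\circ D_\theta^*\circ C^{-1}\circ\big(D_\theta\circ H\circ D_\theta^*\big)
=H\circ D_\theta^*-H\circ D_\theta^*\circ C^{-1}\circ C
=0\,,
\]
and symmetrically, using skewadjointness of $\widetilde H^D$ (so that $D_\theta\circ\widetilde H^D=-(\widetilde H^D\circ D_\theta^*)^*=0$), or by the same direct computation,
\[
D_\theta(\partial)\circ\widetilde H^D(\partial)
=D_\theta\circ H-\big(D_\theta\circ H\circ D_\theta^*\big)\circ C^{-1}\circ D_\theta\circ H
=D_\theta\circ H-C\circ C^{-1}\circ D_\theta\circ H=0\,.
\]
By Definition \ref{20130530:def}(i) this says exactly that each $\theta_\alpha$ is central for $\widetilde H^D$. (Alternatively, part (b) also follows from part (a) combined with Theorem \ref{prop:dirac}(b) and Lemma \ref{20130530:lem}(a): centrality of $\theta_\alpha$ for the $\lambda$-bracket $\{\cdot\,_\lambda\,\cdot\}_H^D=\{\cdot\,_\lambda\,\cdot\}_{\widetilde H^D}$ is equivalent to centrality for the Poisson structure $\widetilde H^D$.)

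\medskip

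\textbf{Main obstacle.} The only delicate point is the bookkeeping in part (a): one must be careful about where the shifts $\lambda+\partial$ versus $-\lambda-\partial$ land, and in particular verify that the composition of symbols in the Master Formula, with the prescribed expansions (all negative powers expanded in $\mc V((\lambda^{-1}))$, and the arrow-subscript convention for $\{{\theta_\beta}_{\lambda+\partial}g\}_\to$), reproduces exactly the symbol of $H\circ D_\theta^*\circ C^{-1}\circ D_\theta\circ H$ rather than some reordered variant. This is where the identities in the proof of Lemma \ref{20130530:lem} (namely ${\{f_\partial\theta\}_H}_\to=D_\theta(\partial)\circ H(\partial)\circ D_f^*(\partial)$) and standard manipulations of the Master Formula do the real work; once they are in place the rest is formal.
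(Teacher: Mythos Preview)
Your proof is correct and follows essentially the same approach as the paper. The only minor difference is that for part (a) the paper verifies the identity \eqref{20130529:eq4} just on generators $u_i,u_j$ (which suffices since both $\{\cdot\,_\lambda\,\cdot\}_H^D$ and $\{\cdot\,_\lambda\,\cdot\}_{\widetilde H^D}$ satisfy sesquilinearity and the Leibniz rules, hence are determined by their values on generators), while you carry out the computation for arbitrary $f,g$; your treatment of part (b) is exactly what the paper does, and your remark on rationality and skewadjointness of $\widetilde H^D$ makes explicit what the paper leaves implicit.
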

\begin{proof}
By the Master Formula \eqref{20110922:eq1} and the definition \eqref{20130529:eq2}
of the Frechet derivative $D_\theta$, we have that 
the matrix elements $C_{\beta\alpha}(\lambda)$ in \eqref{C} are
$$
\begin{array}{l}
\displaystyle{
C_{\beta\alpha}(\lambda)
=\{{\theta_\alpha}_\lambda\theta_\beta\}_H
=
\sum_{\substack{i,j\in I \\ m,n\in\mb Z_+}} 
\frac{\partial\theta_\beta}{\partial u_j^{(n)}}
(\lambda+\partial)^n
H_{ji}(\lambda+\partial)
(-\lambda-\partial)^m
\frac{\partial\theta_\alpha}{\partial u_i^{(m)}}
} \\
\displaystyle{
=
\sum_{i,j\in I} 
{D_\theta(\lambda+\partial)}_{\beta j}
H_{ji}(\lambda+\partial)
{D_\theta^*(\lambda)}_{i\alpha}
\,.}
\end{array}
$$
Hence, the matrix pseudodifferential operator $C(\partial)$ 
defined by \eqref{C} coincides with the matrix $C(\partial)$ in \eqref{20130529:eq1}.
Furthermore, by the definition \eqref{dirac} of the Dirac modified $\lambda$-bracket, we have
$$
\begin{array}{l}
\displaystyle{
\{{u_i}_{\lambda}u_j\}_H^D
=\{{u_i}_{\lambda}u_j\}_H
-\sum_{\alpha,\beta=1}^m
{\{{\theta_{\beta}}_{\lambda+\partial}u_j\}_H}_{\to}
(C^{-1})_{\beta\alpha}(\lambda+\partial)
\{{u_i}_{\lambda}\theta_{\alpha}\}_H
} \\
\displaystyle{
=H_{ji}(\lambda)
-\sum_{\alpha,\beta=1}^m
\sum_{h,k\in I}
H_{jk}(\lambda+\partial)
{D_\theta^*(\lambda+\partial)}_{k\beta}
(C^{-1})_{\beta\alpha}(\lambda+\partial)
} \\
\displaystyle{
\vphantom{\Big(}
{D_\theta(\lambda+\partial)}_{\alpha h}
H_{hi}(\lambda)
= \widetilde{H}^D_{ji}(\lambda)
\,.}
\end{array}
$$
This proves equation \eqref{20130529:eq4}.
The last assertion of part (a) is a consequence of the fact that the Dirac modification
of a PVA $\lambda$-bracket is again a PVA $\lambda$-bracket, by Theorem \ref{prop:dirac}(a).
Part (b) is an immediate consequence of the 
definition \eqref{20130529:eq3} of the Dirac modification $\widetilde{H}^D(\partial)$,
and the definition \eqref{20130529:eq1} of the matrix $C(\partial)$.
\end{proof}

\subsection{Dirac reduced Poisson structure}\label{sec:4.3}

An interesting situation is when the constraints $\theta_1,\dots,\theta_m\in\mc V$ 
are of type \eqref{20130530:eq7} ($m\leq\ell$).
In this case,
if we write $H(\partial)$ in block form as in \eqref{20130530:eq2},
the matrix $C(\partial)$ defined in \eqref{20130529:eq1} is
$$
C(\partial)
=D(\partial)+D_p(\partial)\circ B(\partial)-B^*(\partial)\circ D_p^*(\partial)
+D_p(\partial)\circ A(\partial)\circ D_p^*(\partial)\,.
$$
Let us assume that $C(\partial)$ is invertible in $\Mat_{m\times m}\mc V((\partial^{-1}))$,
so that 
we can construct the Dirac modified Poisson structure $\widetilde{H}^D(\partial)$.
By Proposition \ref{20130529:prop}(b),
the $\theta_\alpha$'s are central elements 
for $\widetilde{H}^D(\partial)$,
which, by Proposition \ref{20130530:prop}(a), has the form
$$
\widetilde{H}^D(\partial)=
\left(\begin{array}{c}
\id_{\ell-m} \\
-D_p(\partial)
\end{array}
\right)
\circ A^D(\partial)\circ
\big(\id_{\ell-m}\,,\,\,-D_p^*(\partial)\big)
\,.
$$
In fact, it is not hard to compute explicitly the matrix $A^D(\partial)$:
\begin{equation}\label{20130530:eq4}
A^D(\partial)=
A(\partial)+(B(\partial)+A(\partial)\circ D_p^*(\partial))\circ
C^{-1}(\partial)\circ(B^*(\partial)-D_p(\partial)\circ A(\partial))
\,.
\end{equation}
Then, by Proposition \ref{20130530:propb}(c),
under some additional mild assumptions,
we have a ``Dirac reduced'' Poisson structure 
on the quotient algebra of differential functions $\mc V/\mc I$
(with modified partial derivatives \eqref{20130530:eq9}),
$H^D(\partial):=({\widetilde{H}^D})^C(\partial)=\overline{A^D}(\partial)$.
Namely, we have the following:
\begin{corollary}\label{20130604:cor}
Let $H(\partial)\in\Mat_{\ell\times\ell}\mc V((\partial^{-1}))$ 
be a Poisson structure on the algebra of differential functions $\mc V$.
Let $\theta_1,\dots,\theta_m\in\mc V$
be elements of the form \eqref{20130530:eq7} ($m\leq\ell$),
and let $\mc I=\langle\theta_1,\dots,\theta_m\rangle_\mc V\subset\mc V$
the differential ideal of $\mc V$
generated by $\theta_1,\dots,\theta_m$.
\begin{enumerate}[(a)]
\item
The Dirac modification $\widetilde{H}^D(\partial)\in\Mat_{\ell\times\ell}\mc V((\partial^{-1}))$
defined by \eqref{20130529:eq3} is a Poisson structure on $\mc V$.
\item
The matrix $A^D(\partial)\in\Mat_{(\ell-m)\times(\ell-m)}\mc V((\partial^{-1}))$ 
defined in \eqref{20130530:eq4}
is a Poisson structure on the algebra of differential functions $\widetilde{\mc V}$
($=\mc V$ with modified partial derivatives).
It is induced by $\widetilde{H}^D(\partial)$ in the sense that
$$
\{f_\lambda g\}_{\widetilde{H}^D(\partial)}^{\mc V}
=
\{f_\lambda g\}_{A^D(\partial)}^{\widetilde{\mc V}}
\,.
$$
\item
Assume that $\mc I\cap R_{\ell-m}=0$,
that $\mc V/\mc I$ is a domain,
and that the induced matrix
$\overline{A^D}(\partial)\in\Mat_{(\ell-m)\times(\ell-m)}(\mc V/\mc I)[\partial]$
is rational.
Then, the matrix 
\begin{equation}\label{20130604:eq5}
H^D(\partial)=\overline{A^D}(\partial)\in\Mat_{(\ell-m)\times(\ell-m)}(\mc V/\mc I)((\partial^{-1}))
\,,
\end{equation} 
is a Poisson structure on the quotient algebra of differential functions $\mc V/\mc I$
(cf. Proposition \ref{20130530:prop}(b)), induced by $A^D(\partial)$.
\end{enumerate}
\end{corollary}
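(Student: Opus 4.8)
The plan is to deduce the corollary by assembling results already established: Proposition~\ref{20130529:prop}, Proposition~\ref{20130530:prop}, and Proposition~\ref{20130530:propb}. Throughout we use the standing assumption of Section~\ref{sec:4.3} that the matrix $C(\partial)$ of \eqref{20130529:eq1} is invertible in $\Mat_{m\times m}\mc V((\partial^{-1}))$. Part~(a) is then immediate: since the constraints $\theta_1,\dots,\theta_m$ are of the form \eqref{20130530:eq7}, and hence have Frechet derivative \eqref{20130604:eq4}, Proposition~\ref{20130529:prop}(a) applies verbatim and shows that the operator $\widetilde H^D(\partial)$ of \eqref{20130529:eq3} is a (non-local) Poisson structure on $\mc V$.

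For part~(b), I would first invoke Proposition~\ref{20130529:prop}(b) to conclude that every $\theta_\alpha$ is central for $\widetilde H^D(\partial)$. By Proposition~\ref{20130530:prop}(a), any matrix pseudodifferential operator for which the constraints \eqref{20130530:eq7} are central is necessarily of the form \eqref{20130604:eq1}, so
$$
\widetilde H^D(\partial)=
\left(\begin{array}{c}\id_{\ell-m}\\ -D_p(\partial)\end{array}\right)
\circ A^D(\partial)\circ
\big(\id_{\ell-m}\,,\,-D_p^*(\partial)\big)
$$
for a uniquely determined rational $(\ell-m)\times(\ell-m)$ matrix $A^D(\partial)$, namely the upper-left $(\ell-m)\times(\ell-m)$ block of $\widetilde H^D(\partial)$. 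To identify this block with \eqref{20130530:eq4}, I would write $H(\partial)$ in the block form \eqref{20130530:eq2} and compute, with the help of \eqref{20130604:eq4}, that
$$
D_\theta(\partial)\circ H(\partial)=\big(D_p(\partial)\circ A(\partial)-B^*(\partial)\,,\,D_p(\partial)\circ B(\partial)+D(\partial)\big)
$$
and that $H(\partial)\circ D_\theta^*(\partial)$ has upper $(\ell-m)\times m$ block $A(\partial)\circ D_p^*(\partial)+B(\partial)$; substituting into \eqref{20130529:eq3} and reading off the $(1,1)$-block then gives
$$
A(\partial)-\big(A(\partial)\circ D_p^*(\partial)+B(\partial)\big)\circ C^{-1}(\partial)\circ\big(D_p(\partial)\circ A(\partial)-B^*(\partial)\big)\,,
$$
which is precisely \eqref{20130530:eq4} after a trivial sign rearrangement. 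The remaining assertion of part~(b), that $A^D(\partial)$ is a Poisson structure on $\widetilde{\mc V}$ and induces the same $\lambda$-bracket as $\widetilde H^D(\partial)$, is then exactly Proposition~\ref{20130530:propb}(b) applied with $H$ there replaced by $\widetilde H^D$ and $A$ there replaced by $A^D$.

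Part~(c) is again a direct application: under the additional hypotheses $\mc I\cap R_{\ell-m}=0$, $\mc V/\mc I$ a domain, and $\overline{A^D}(\partial)$ rational, Proposition~\ref{20130530:propb}(c), applied to the Poisson structure $\widetilde H^D(\partial)$ of the form \eqref{20130604:eq1} with middle block $A^D(\partial)$, shows that $H^D(\partial)=\overline{A^D}(\partial)$ is a Poisson structure on the quotient algebra of differential functions $\mc V/\mc I$ (which is an algebra of differential functions by Proposition~\ref{20130530:prop}(c)), inducing the PVA $\lambda$-bracket \eqref{20130530:eq5b} via the Master Formula. I do not expect a genuine obstacle here: the corollary is a bookkeeping synthesis of the three propositions above, and the only non-formal step is the finite block-matrix computation for the $(1,1)$-block of $\widetilde H^D(\partial)$; the off-diagonal and lower-right blocks need not be computed, since they are automatically of the shape prescribed by \eqref{20130604:eq1} by Proposition~\ref{20130529:prop}(b) and Proposition~\ref{20130530:prop}(a), and the one point to keep in mind is that $A(\partial)^*=-A(\partial)$ and $C(\partial)^*=-C(\partial)$ are what make $A^D(\partial)$ skewadjoint.
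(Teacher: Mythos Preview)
Your proof is correct and follows essentially the same route as the paper: part~(a) from Proposition~\ref{20130529:prop}(a), part~(b) from Proposition~\ref{20130529:prop}(b) together with Proposition~\ref{20130530:propb}(b), and part~(c) from Proposition~\ref{20130530:propb}(c). The only slip is a citation: where you write ``By Proposition~\ref{20130530:prop}(a), any matrix pseudodifferential operator for which the constraints \eqref{20130530:eq7} are central is necessarily of the form \eqref{20130604:eq1}'', you mean Proposition~\ref{20130530:propb}(a); and your explicit block computation of the $(1,1)$-entry of $\widetilde H^D(\partial)$ reproduces exactly the computation the paper records just before stating the corollary.
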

\begin{proof}
Part (a) is stated in Proposition \ref{20130529:prop}(a).
Part (b) follows from Proposition \ref{20130530:propb}(b)
and Proposition \ref{20130529:prop}(b).
Part (c) follows from Proposition \ref{20130530:propb}(c).
\end{proof}
\begin{remark}\label{20130622:rem3}
We believe that the assumption that
$\overline{A^D}(\partial)$ is rational in part (c) of Corollary \ref{20130604:cor} 
automatically holds (see Remark \ref{20130622:rem1}).
\end{remark}

\subsection{Dirac reduction of a Hamiltonian equation}\label{sec:4.4}

Let $\mc V$ be an algebra of differential functions, which is a domain,
and let $H(\partial)\in\Mat_{\ell\times\ell}\mc V((\partial^{-1}))$
be a (non-local) Poisson structure on $\mc V$.
Let $\theta_1,\dots,\theta_m\in\mc V$,
and consider the Dirac modified Poisson structure $\widetilde{H}^D$ on $\mc V$
given by equation \eqref{20130529:eq3}.
Let
\begin{equation}\label{20130530:eq1c}
\frac{du}{dt}=P\,\in\mc V^\ell\,,
\end{equation}
be a Hamiltonian equation associated to the Poisson structure $H$
and a Hamiltonian functional $\tint h\in\mc V/\partial\mc V$.

Suppose that the elements $\theta_\alpha$'s
are constant densities for the Hamiltonian equation \eqref{20130529:eq3},
i.e. $D_\theta(\partial)P=0$.
Naively, to say that equation \eqref{20130530:eq1c} is Hamiltonian for the Poisson structure $H$
and the Hamiltonian functional $\tint h$ means that
$$
``H(\partial)\frac{\delta h}{\delta u}=P"
\,.
$$
Note that the LHS is not well defined unless $H(\partial)$ is a local Poisson structure.
The precise meaning of the above identity is $\tint h\ass{H}P$ (see Section \ref{sec:3.4}).
If we try to apply naively the Dirac modified Poisson structure 
$\widetilde{H}^D(\partial)$ in \eqref{20130529:eq3}
to $\frac{\delta h}{\delta u}$ we get
$$
``\widetilde{H}^D(\partial)\frac{\delta h}{\delta u}=P"
\,,
$$
since $``H(\partial)\frac{\delta h}{\delta u}=P"$ and, by assumption, $D_\theta(\partial)P=0$.
This indicates that equation \eqref{20130530:eq1c}
should be Hamiltonian also for the Dirac modified Poisson structure $\widetilde{H}^D$
and with the same Hamiltonian functional $\tint h$,
i.e. $\tint h\ass{\widetilde{H}^D}P$.
We cannot prove this statement in general, 
due to the way the association relation $\tint h\ass{H}P$ is defined,
in terms of a fractional decomposition for $H$,
but we will check that this is indeed the case in the example that we will consider
in Section \ref{sec:5}.

Suppose now that the constraints $\theta_1,\dots,\theta_m$ 
are of the special form \eqref{20130530:eq7},
that $\mc I\cap R_{\ell-m}=0$,
that $\mc V/\mc I$ is a domain,
and that $\overline{A^D}(\partial)\in\Mat_{(\ell-m)\times(\ell-m)}(\mc V/\mc I)((\partial^{-1}))$
is rational.
Therefore, 
by Proposition \ref{20130530:prop}(b) the quotient $\mc V/\mc I$
is still an algebra of differential functions, and it is a domain,
and, by Corollary \ref{20130604:cor}(c),
we have a well defined Dirac reduced Poisson structure 
$H^D(\partial)=\overline{A^D}(\partial)$ on $\mc V/\mc I$.
It is natural to expect to have a Hamiltonian equation corresponding to \eqref{20130530:eq1c}.
To say that the constraints $\theta_\alpha$'s are constant densities for the
Hamiltonian equation \eqref{20130530:eq1c}
means (cf. Definition \ref{20130530:def}) 
that $D_\theta(\partial)P=0$, i.e.,
recalling \eqref{20130604:eq4},
$$
P
=
\left(\begin{array}{c}
P_1 \\ -D_p(\partial)P_1
\end{array}\right)
=
\left(\begin{array}{c}
\id_{\ell-m} \\ -D_p(\partial)
\end{array}\right)P_1
\,,
$$
where $P_1\in\mc V^{\ell-m}$ denotes the first $\ell-m$ entries of $P\in\mc V^\ell$.
If, as expected, it happens that $\tint h\ass{\widetilde{H}^D}P$,
then, by Proposition \ref{20130604:prop}(b) and Corollary \ref{20130604:cor}(c),
we have $\tint \overline{h}\ass{H^D}\overline{P_1}$.
In other words, the evolution equation in $\mc V/\mc I$,
$$
\frac{du}{dt}
=
\overline{P_1}
\,,
$$
is Hamiltonian also for the Dirac reduced Poisson structure $H^D$,
with the Hamiltonian functional $\tint \overline{h}$.

We summarize the above observations in the following:
\begin{ansatz}\phantomsection\label{20130604:ans1}
\begin{enumerate}[(a)]
\item
Suppose that
 $\theta_1,\dots,\theta_m\in\mc V$ are constant densities
for the Hamiltonian equation \eqref{20130530:eq1c},
and that the matrix $C(\partial)\in\Mat_{m\times m}((\partial^{-1}))$ in \eqref{20130529:eq1} 
is non-degenerate.
Then equation \eqref{20130530:eq1c} should be a Hamiltonian equation
also for the Dirac modified Poisson structure $\widetilde{H}^D$ defined by \eqref{20130529:eq3},
with the same Hamiltonian functional $\tint h$.
\item
Suppose that the claim in part (a) holds.
Suppose, moreover, that the elements $\theta_\alpha$'s
have the special form \eqref{20130530:eq7},
that $\mc I\cap R_{\ell-m}=0$,
and that $\mc V/\mc I$ is a domain.
Then, $\overline{A^D}(\partial)\in\Mat_{(\ell-m)\times(\ell-m)}(\mc V/\mc I)((\partial^{-1}))$
should be a rational matrix (see Remark \ref{20130622:rem1}).
Moreover, the evolution equation in $\mc V/\mc I$,
\begin{equation}\label{20130603:eq8}
\frac{du_i}{dt}=\overline{P_i}
\,\,,\,\,\,\,
i=1,\dots,\ell-m\,,
\end{equation}
is Hamiltonian with respect to the Dirac reduced Poisson structure $H^D$
given by \eqref{20130604:eq5},
and the Hamiltonian functional $\tint \overline{h}\in\mc V$.
\end{enumerate}
\end{ansatz}

Unfortunately, we have no general statement
relating integrals of motion for a Hamiltonian equation \eqref{20130530:eq1}
to integrals of motion for the corresponding Dirac reduced equation \eqref{20130603:eq8}.
In the next Section we discuss the special case
when the integrals of motion are obtained by the so-called Lenard-Magri scheme
of integrability for a bi-Hamiltonian equation.

\section{Dirac reduction of a bi-Hamiltonian hierarchy}\label{sec:8}

\subsection{Reduction of a bi-Poisson structure}\label{sec:8.1}

Let $\mc V$ be an algebra of differential functions 
in the differential variables $u_1,\dots,u_\ell$, which is a domain.
Recall that two (non-local) Poisson structures
$H_0(\partial),H_1(\partial)\in\Mat_{\ell\times\ell}\mc V((\partial^{-1}))$
are said to be \emph{compatible} if $H_0(\partial)+H_1(\partial)$
is also a Poisson structure on $\mc V$.
In this case we say that $(H_0,H_1)$ form a \emph{bi-Poisson structure} on $\mc V$.

Let $(H_0,H_1)$ be a bi-Poisson structure on $\mc V$.
Let $\theta_1,\dots,\theta_m\in\mc V$ be central elements for $H_0$,
and let $\mc I=\langle\theta_1,\ldots,\theta_m\rangle_{\mc V}$
be the the differential ideal generated by them.
If the matrix pseudodifferential operator 
whose symbol is given by \eqref{C}, with the $\lambda$-brackets for $H_1$, is invertible,
then we can consider the Dirac modified $\lambda$-bracket $\{\cdot\,_\lambda\,\cdot\}_1^D$,
and, by Theorem \ref{20130516:thm1},
$\{\cdot\,_\lambda\,\cdot\}_0$ and $\{\cdot\,_\lambda\,\cdot\}_1^D$
are compatible PVA $\lambda$-brackets on $\mc V$.
It is natural to ask what are the corresponding Poisson structures.
This is given by the following

\begin{proposition}\label{20130604:prop2}
Let $(H_0,H_1)$ be a bi-Poisson structure on $\mc V$.
Let $\theta_1,\dots,\theta_m\in\mc V$ be central elements for $H_0$,
and let $\mc I=\langle\theta_1,\ldots,\theta_m\rangle_{\mc V}$
be the the differential ideal generated by them.
Assume that the matrix
\begin{equation}\label{20130529:eq1b}
C(\partial)=D_\theta(\partial)\circ H_1(\partial)\circ D_\theta^*(\partial)\,
\in\Mat_{m\times m}\mc V((\partial^{-1}))\,,
\end{equation}
is invertible,
and consider the Dirac modified Poisson structure
$\widetilde{H}_1^D(\partial)$, defined by \eqref{20130529:eq3}.
\begin{enumerate}[(a)]
\item
We have
$\widetilde{(H_0+H_1)}^D=H_0+\widetilde{H_1}^D$.
In particular,
$H_0$ and $\widetilde{H}_1^D$ are compatible
Poisson structures on $\mc V$.
\end{enumerate}
Suppose also that the elements $\theta_1,\dots,\theta_m\in\mc V$
are of the form \eqref{20130530:eq7} (with $m\leq\ell$).
We can write $H_0$ and $H_1$ in block form as 
(cf. equations \eqref{20130604:eq1} and \eqref{20130530:eq2}):
$$
\begin{array}{l}
\displaystyle{
H_0(\partial)=
\left(\begin{array}{c}
\id_{\ell-m} \\
-D_p(\partial)
\end{array}
\right)
\circ A_0(\partial)\circ
\big(\id_{\ell-m}\,,\,\,-D_p^*(\partial)\big)
\,,} \\
\displaystyle{
H_1(\partial)=
\left(\begin{array}{cc}
A_1(\partial) & B_1(\partial) \\
-B_1^*(\partial) & D_1(\partial)
\end{array}
\right)
\,.}
\end{array}
$$
Let also $A_1^D(\partial)\in\Mat_{(\ell-m)\times(\ell-m)}\mc V((\partial^{-1}))$
be given by \eqref{20130530:eq4},
with $A,B,D$ replaced by $A_1,B_1,D_1$ respectively.
\begin{enumerate}[(a)]
\setcounter{enumi}{1}
\item
The matrices $A_0(\partial)$ and $A_1^D(\partial)$
are compatible Poisson structures on $\widetilde{\mc V}$
($=\mc V$ with modified partial derivatives).
\item
Assume that $\mc I\cap R_{\ell-m}=0$,
that $\mc V/\mc I$ is a domain,
and that the induced matrices 
$\overline{A_0}(\partial), \overline{A_1^D}(\partial)
\in\Mat_{(\ell-m)\times(\ell-m)}(\mc V/\mc I)((\partial^{-1}))$
are rational.
Then, the central reduction $H_0^C(\partial)=\overline{A_0}(\partial)$,
and the Dirac reduction $H_1^D(\partial)=\overline{A_1^D}(\partial)$,
are compatible Poisson structures on $\mc V/\mc I$.
\end{enumerate}
\end{proposition}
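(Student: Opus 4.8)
The plan is to deduce the three parts of Proposition \ref{20130604:prop2} in order, relying on the PVA-level compatibility result (Theorem \ref{20130516:thm1}) translated into the language of Poisson structures via the Master Formula, and then on the reduction machinery already established in Propositions \ref{20130530:propb} and \ref{20130529:prop}.

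For part (a), the key observation is that the Dirac modification \eqref{20130529:eq3} depends on $H$ in a very specific way: the matrix $C(\partial)$ in \eqref{20130529:eq1b} is built from $H_1$ alone (since the $\theta_i$'s are central for $H_0$, we have $D_\theta(\partial)\circ H_0(\partial)=0$, hence $D_\theta(\partial)\circ(H_0+H_1)(\partial)\circ D_\theta^*(\partial)=D_\theta(\partial)\circ H_1(\partial)\circ D_\theta^*(\partial)$, so the same $C(\partial)$ appears whether we Dirac-modify $H_1$ or $H_0+H_1$). Writing out \eqref{20130529:eq3} for $H_0+H_1$ and again using $D_\theta(\partial)\circ H_0(\partial)=0$ and its adjoint $H_0(\partial)\circ D_\theta^*(\partial)=0$, every cross-term involving $H_0$ in the correction drops out, and one is left with $\widetilde{(H_0+H_1)}^D=H_0+\widetilde{H_1}^D$. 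Since $H_0+H_1$ is a Poisson structure by hypothesis and $\theta_1,\dots,\theta_m$ are automatically constant for it on the relevant matrix ($C$ is invertible), Proposition \ref{20130529:prop}(a) applied to $H_0+H_1$ shows $\widetilde{(H_0+H_1)}^D$ is a Poisson structure; combined with the fact (again Proposition \ref{20130529:prop}(a)) that $\widetilde{H_1}^D$ is a Poisson structure, the identity $\widetilde{(H_0+H_1)}^D=H_0+\widetilde{H_1}^D$ says precisely that $H_0$ and $\widetilde{H_1}^D$ are compatible. (Alternatively, one invokes Theorem \ref{20130516:thm1} directly, since $\{\cdot\,_\lambda\,\cdot\}_{\widetilde H_1^D}=\{\cdot\,_\lambda\,\cdot\}_{H_1}^D$ by Proposition \ref{20130529:prop}(a), and the $\theta_i$'s are central for $\{\cdot\,_\lambda\,\cdot\}_{H_0}$ by Lemma \ref{20130530:lem}(a).)

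For part (b), since the $\theta_i$'s are central for $H_0$, Proposition \ref{20130530:propb}(a) forces $H_0$ into the block form displayed, with $A_0(\partial)$ rational; and by Proposition \ref{20130529:prop}(b) the $\theta_i$'s are central for $\widetilde{H_1}^D$, so Proposition \ref{20130530:propb}(a) forces $\widetilde{H_1}^D$ into the analogous block form, and an explicit computation (the one indicated right before Corollary \ref{20130604:cor}) identifies the $(\ell-m)\times(\ell-m)$ corner as $A_1^D(\partial)$ given by \eqref{20130530:eq4}. By part (a), $H_0+\widetilde{H_1}^D=\widetilde{(H_0+H_1)}^D$ has the $\theta_i$'s as central elements and hence also splits in block form; its corner is $A_0+A_1^D$. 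Now Proposition \ref{20130530:propb}(b) tells us that for each of $H_0$, $\widetilde{H_1}^D$, and their sum, the $H$-$\lambda$-bracket on $\mc V$ equals the $A$-$\lambda$-bracket on $\widetilde{\mc V}$; since being a Poisson structure is a property of the associated $\lambda$-bracket, $A_0$ and $A_1^D$ and $A_0+A_1^D$ are all Poisson structures on $\widetilde{\mc V}$, i.e.\ $A_0$ and $A_1^D$ are compatible on $\widetilde{\mc V}$.

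For part (c), under the stated hypotheses ($\mc I\cap R_{\ell-m}=0$, $\mc V/\mc I$ a domain, $\overline{A_0}$ and $\overline{A_1^D}$ rational), Proposition \ref{20130530:propb}(c) applies to $H_0$, to $\widetilde{H_1}^D$, and — provided we check $\overline{A_0+A_1^D}=\overline{A_0}+\overline{A_1^D}$ is rational, which is immediate since it is a sum of two rational matrices — to $H_0+\widetilde{H_1}^D$ as well. We obtain that $H_0^C=\overline{A_0}$, $H_1^D=\overline{A_1^D}$, and $\overline{A_0}+\overline{A_1^D}$ are all Poisson structures on $\mc V/\mc I$; the last of these being a Poisson structure is exactly the statement that $H_0^C$ and $H_1^D$ are compatible on $\mc V/\mc I$. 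The main obstacle — and the only place where genuine care is needed — is the bookkeeping in part (a): one must verify that \emph{all} the cross-terms in the expansion of \eqref{20130529:eq3} for $H_0+H_1$ that pair an $H_0$ against a $D_\theta$ or $D_\theta^*$ really vanish, using both $D_\theta(\partial)\circ H_0(\partial)=0$ and $H_0(\partial)\circ D_\theta^*(\partial)=0$ (the latter following from the former by skewadjointness of $H_0$), so that the correction term is unchanged when $H_0$ is added; everything downstream is a routine invocation of the already-established propositions.
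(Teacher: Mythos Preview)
Your proposal is correct and follows essentially the same route as the paper's proof. The paper proves part (a) by the identical observation that $D_\theta(\partial)\circ H_0(\partial)=H_0(\partial)\circ D_\theta^*(\partial)=0$ kills all cross-terms in the Dirac modification of $H_0+H_1$; for parts (b) and (c) it cites Proposition~\ref{20130530:propb}(b),(c) and Corollary~\ref{20130604:cor}(b),(c) (the latter being just a repackaging of Propositions~\ref{20130529:prop}(b) and~\ref{20130530:propb}), exactly as you do, though your write-up makes the compatibility step more explicit by applying the block-form argument to the sum $H_0+\widetilde{H_1}^D$ itself.
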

\begin{proof}
Part (a) is an immediate consequence of 
the definition \eqref{20130529:eq3} of Dirac modification of a Poisson structure,
and by the assumption that 
$D_\theta(\partial)\circ H_0(\partial)\,\big(=H_0(\partial)\circ D_\theta^*(\partial)\big)=0$.

By Proposition \ref{20130530:propb}(b), $A_0(\partial)$ 
is a Poisson structure on $\widetilde{\mc V}$,
by Corollary \ref{20130604:cor}(b), $A_1^D(\partial)$ 
is also a Poisson structure on $\widetilde{\mc V}$,
and by part (a) they are compatible.
This proves part (b).

Finally, part (c) follows from Proposition \ref{20130530:propb}(c)
and Corollary \ref{20130604:cor}(c).
\end{proof}
\begin{remark}\label{20130622:rem4}
We believe that the assumptions that
$\overline{A_0}(\partial)$ and $\overline{A_1^D}(\partial)$ are rational
in part (c) of Proposition \ref{20130604:prop2} 
are automatically satisfied (see Remark \ref{20130622:rem1}).
\end{remark}

\subsection{Reduced bi-Hamiltonian hierarchy}\label{sec:8.2}

A \emph{bi-Hamiltonian hierarchy} 
with respect to a bi-Poisson structure $(H_0,H_1)$
is, by definition, a sequence of evolution equations
\begin{equation}\label{20130604:eq6}
\frac{du}{dt_n}=P_n\,\in\mc V^\ell
\,\,,\,\,\,\,
n\in\mb Z_+
\,,
\end{equation}
satisfying the following \emph{Lenard-Magri recursive conditions}:
\begin{equation}\label{20130604:eq7}
\tint h_{n-1}\ass{H_1}P_n
\,\,,\,\,\,\,
\tint h_n\ass{H_0}P_n
\,\,\,\,
\text{ for all } n\in\mb Z_+
\,,
\end{equation}
for some Hamiltonian functionals $\tint h_{-1},\tint h_0,\dots\in\mc V/\partial\mc V$.
In this case,
all Hamiltonian functionals $\tint h_n,\,n\geq-1$,
are integrals of motion for all equations of the hierarchy \eqref{20130604:eq6},
in involution with respect to both Poisson structures $H_0$ and $H_1$.
Also, all commutators $[P_m,P_n]$ lie in a finite dimensional space,
(see \cite{DSK13}).
Hence, each of the equations \eqref{20130604:eq6} is integrable, 
provided that the $\tint h_n$'s are linearly independent.

Let $\theta_1,\dots,\theta_m$ be central elements for $H_0(\partial)$
of the form \eqref{20130530:eq7} ($m\leq\ell$).
By Proposition \ref{20130530:prop}(c),
if $\mc I\cap R_{\ell-m}=0$,
the quotient space $\mc V/\mc I$ has a natural structure of
an algebra of differential functions in the variables $u_1,\dots,u_{\ell-m}$,
with modified partial derivatives $\frac{\widetilde{\partial}}{\widetilde{\partial} u_i^{(s)}}$,
and Proposition \ref{20130604:prop2}
states that we can construct a Dirac reduced bi-Poisson structure
$(H_0^C(\partial),H_1^D(\partial))$ on $\mc V/\mc I$,
under some additional mild assumptions.

For $n\in\mb Z_+$, let ${(P_n)}_1\in\mc V^{\ell-m}$ 
be given by the first $\ell-m$ entries of $P_n\in\mc V^\ell$.
By Proposition \ref{20130604:prop}(b)
we have that 
$\tint \overline{h_n}\ass{H_0^C}\overline{(P_n)_1}$.
By Lemma \ref{20130530:lem}(c), 
the $\theta_\alpha$'s are constant densities for all the equations 
of the hierarchy \eqref{20130604:eq6}.
Hence, 
by Ansatz \ref{20130604:ans1}
we expect that 
$\tint \overline{h_{n-1}}\ass{H_1^D}\overline{(P_n)_1}$.
Therefore, we expect to get a ``reduced'' bi-Hamiltonian hierarchy on $\mc V/\mc I$.
Thus we have the following
\begin{ansatz}\label{20130604:ans2}
Let $(H_0,H_1)$ be a bi-Poisson structure on the algebra of differential functions $\mc V$,
and let \eqref{20130604:eq6} be a bi-Hamiltonian hierarchy
satisfying the Lenard-Magri recursive conditions \eqref{20130604:eq7}.
Let $\theta_1,\dots,\theta_m$ be central elements for $H_0(\partial)$
of the form \eqref{20130530:eq7} ($m\leq\ell$).
\begin{enumerate}[(a)]
\item
By Ansatz \ref{20130604:ans1}(a),
we expect to have the association relations
$\tint h_{n-1}\ass{\widetilde{H}_1^D}P_n$,
for all $n\in\mb Z_+$.
\item
Suppose that the claim in part (a) holds.
Suppose, moreover, that $\mc I\cap R_{\ell-m}=0$,
and that $\mc V/\mc I$ is a domain.
Then, 
$\overline{A_0}(\partial), \overline{A_1^D}(\partial)
\in\Mat_{(\ell-m)\times(\ell-m)}(\mc V/\mc I)((\partial^{-1}))$
should be a rational matrices (see Remark \ref{20130622:rem4}).
Moreover, we have the \emph{Dirac reduced} Lenard-Magri recursive conditions in $\mc V/\mc I$:
$$
\tint \overline{h_{n-1}}\ass{H_1^D}\overline{(P_n)_1}
\,\,,\,\,\,\,
\tint \overline{h_n}\ass{H_0^C}\overline{(P_n)_1}
\,\,\,\,
\text{ for all } n\in\mb Z_+
\,.
$$
Hence, we have a bi-Hamiltonian hierarchy
$$
\frac{du}{dt_n}=\overline{(P_n)_1}\,\in(\mc V/\mc I)^\ell
\,\,,\,\,\,\,
n\in\mb Z_+
\,,
$$
which is integrable provided that the local functionals $\tint\overline{h_n}$'s are linearly
independent.
\end{enumerate}
\end{ansatz}

\begin{remark}\label{20130715:rem2}
Using the results of \cite{CDSK13c}
one can show that, in fact, the naive argument used to ``prove''
the association relation $\tint h_{n-1}\ass{\widetilde{H}_1^D}P_n$
before Ansatz \ref{20130604:ans1},
can be made into a formal proof under the assumption that \eqref{20130529:eq3}
is a minimal rational expression for the Dirac modified Poisson structure $\widetilde{H}_1^D$,
i.e.
\begin{equation}\label{20130715:eq1}
\sdeg(\widetilde{H}_1^D)=2\sdeg(H_1)+\sdeg(C^{-1})
\,.
\end{equation}
(See \cite{CDSK13c} for the definition of the \emph{singular degree} $\sdeg(H)$ of a rational matrix
pseudodifferential operator $H$.)
Under such assumption,
Anstats \ref{20130604:ans1} and \ref{20130604:ans2}
can be made into Theorems.
\end{remark}

\section{Example: Dirac reduction of the generalized Drinfeld-Sokolov hierarchy 
for the minimal nilpotent element of \texorpdfstring{$\mf{sl}_3$}{sl3}}\label{sec:5}

In \cite{DSKV12} we introduced the classical $\mc W$-algebras
associated to a simple Lie algebra $\mf g$ and its nilpotent element
via the classical Hamiltonian reduction of Poisson vertex algebras,
and we studied the associated generalized Drinfeld-Sokolov hierarchies.
In \cite{DSKV13} we considered in detail the example of $\mf g=\mf{sl}_3$ and its minimal
nilpotent element.
Recall that in this case the $\mc W$-algebra is the algebra $R(L,\psi_+,\psi_-,\varphi)$
of differential polynomials in the variables
$L,\psi_+,\psi_-,\varphi$ over $\mb F$,
endowed with two compatible PVA structures
$\{\cdot\,_\lambda\,\cdot\}_0$ and $\{\cdot\,_\lambda\,\cdot\}_1$.
The $\lambda$-brackets among the generators are as follows.
$\varphi$ is central for $\{\cdot\,_\lambda\,\cdot\}_0$, and
$$
\begin{array}{l}
\vphantom{\Big(}
{\{L_\lambda L\}_0}=-2\lambda
\,\,,\,\,\,\,
\{L_\lambda\psi_\pm\}_0=\{{\psi_\pm}_\lambda\psi_\pm\}_0=0
\,\,,\,\,\,\,
\{{\psi_+}_\lambda\psi_-\}_0=1
\,,
\end{array}
$$
and
$$
\begin{array}{l}
\vphantom{\Big(}
{\{L_\lambda L\}_1}=(\partial+2\lambda )L-\frac12\lambda^3
\,\,,\,\,\,\,
\{L_\lambda\psi_\pm\}_1=(\partial+\frac32\lambda )\psi_\pm
\,\,,\,\,\,\,
\{L_\lambda\varphi\}_1=(\partial+\lambda)\varphi
\,\\
\vphantom{\Big(}
\{{\psi_\pm}_\lambda\psi_\pm\}_1=0
\,\,,\,\,\,\,
\{\varphi_\lambda\varphi\}_1=6\lambda
\,\,,\,\,\,\,
\{{\psi_\pm}_\lambda\varphi\}_1=\pm3\psi_\pm
\,\\
\vphantom{\Big(}
\{{\psi_+}_\lambda\psi_-\}_1=
\frac13\varphi^2-\frac12(\partial+2\lambda)\varphi-L+\lambda^2
\,.
\end{array}
$$
(The others are obtained by skewsymmetry.)
The corresponding Poisson structures $H_0(\partial)$ and $H_1(\partial)$,
given by \eqref{20130613:eq2},
are the skewadjoint 
$4\times 4$ matrix differential operators with coefficients in $\mc W$
with the following block forms
\begin{equation}\label{20130614:eq13}
H_0(\partial)=
\left(\begin{array}{cc}
A_0(\partial) & 0_{3\times1} \\
0_{1\times3} & 0_{1\times1}
\end{array}\right)
\,\,,\,\,\,\,
H_1(\partial)=
\left(\begin{array}{cc}
A_1(\partial) & B_1(\partial) \\
-B_1^*(\partial) & D_1(\partial)
\end{array}\right)
\,,
\end{equation}
where
\begin{equation}\label{20130613:eq3}
\begin{array}{l}
\displaystyle{
\vphantom{\begin{array}{c} 1\\1\\1\\1\end{array}}
A_0(\partial)=
\left(\begin{array}{ccc}
-2\partial & 0 & 0 \\
0 & 0 & -1 \\
0 & 1 & 0 \\
\end{array}\right)
\,\,,\,\,\,\,
B_1(\partial)
=
\left(\begin{array}{c}
\varphi\partial \\
-3\psi_+ \\
3\psi_-
\end{array}\right)
\,\,,\,\,\,\,
D_1(\partial)=6\partial
\,,} \\
\displaystyle{
A_1(\partial)
=
\left(\!\!\!\begin{array}{ccc}
\partial\!\circ\! L\!+\!L\partial\!-\!\frac12\partial^3 &
\frac12\partial\circ\psi_++\psi_+\partial &
\frac12\partial\circ\psi_-+\psi_-\partial \\
\partial\circ\psi_++\frac12\psi_+\partial &
0 & 
\Big(\!\!\!\begin{array}{c}
-\frac12(\partial\!\circ\!\varphi+\varphi\partial)\\
-\frac13\varphi^2+L-\partial^2
\end{array}\!\!\!\Big) \\
\partial\!\circ\!\psi_-+\frac12\psi_-\partial &
\Big(\!\!\!\begin{array}{c}
-\frac12(\partial\!\circ\!\varphi+\varphi\partial)\\
+\frac13\varphi^2-L+\partial^2 
\end{array}\!\!\!\Big) &
0
\end{array}\!\!\!\right)
\,.}
\end{array}
\end{equation}
The element $\varphi\in\mc W$ is central for $H_0(\partial)$.
Let $\mc I=\langle\varphi\rangle_{\mc W}$
be the differential ideal generated by $\varphi$.
Clearly, $\mc I\cap R(L,\psi_+,\psi_-)=0$,
and the quotient algebra $\mc W/\mc I$ 
is naturally identified with $R(L,\psi_+,\psi_-)$,
the algebra of differential polynomials in the variables $L,\psi_+,\psi_-$.
In particular, it is a domain.
The Frechet derivative of $\varphi$ is $D_\varphi(\partial)=(0\,0\,0\,1)$.
Hence, the matrix \eqref{20130529:eq1b} is $C(\partial)=D_1(\partial)=6\partial$,
which is invertible in $\mc W((\partial^{-1}))$.
By Proposition \ref{20130604:prop2}(b),
the matrices 
$$
A_0(\partial)
\,\,\text{ and }\,\,
A_1^D(\partial)=A_1(\partial)+B_1(\partial)\circ D_1^{-1}(\partial)\circ B_1^*(\partial)\,,
$$
form a compatible pair of Poisson structure on $\widetilde{\mc W}$
($=\mc W$ considered as algebra of differential functions in 
the variables $L,\psi_+,\psi_-$, with $\varphi$ treated as a quasiconstant).
Clearly, 
$A_1^D(\partial)$ is a rational matrix pseudodifferential operator.
We can compute explicitly its fractional decomposition:
$A_1^D(\partial)=A_1(\partial)+M(\partial)\circ N^{-1}(\partial)$,
where
\begin{equation}\label{20130613:eq4}
\begin{array}{l}
M(\partial)
=
\left(\begin{array}{ccc}
0 & 0 & \varphi\partial\circ\psi_-^2 \\
0 & 0 & -3\psi_+\psi_-^2 \\
0 & 0 & 3\psi_-^3
\end{array}\right)
\,,\\
N(\partial)
=
\left(\begin{array}{ccc}
\psi_+^2 & 0 & 0 \\
-\frac13(\psi_+\partial+2\psi_+^\prime)\circ\varphi & \psi_- & 0 \\
0 & \psi_+ & 2(\psi_-\partial+2\psi_-^\prime)
\end{array}\right)
\,.
\end{array}
\end{equation}

Clearly, 
the images $H_0^C(\partial)$ and $H_1^D(\partial)$
of $A_0(\partial)$ and $A_1^D(\partial)$ respectively
in the quotient space $\mc W/\mc I=R(L,\psi_+,\psi_-)$ 
are rational matrix pseudodifferential operators
with coefficients in $R(L,\psi_+,\psi_-)$.
Hence, by Proposition \ref{20130604:prop2}(c),
they form a compatible pair of Poisson structures on $R(L,\psi_+,\psi_-)$.
Explicitly, they are
\begin{equation}\label{20130613:eq5}
H_0^C(\partial)=
\left(\begin{array}{ccc}
-2\partial & 0 & 0 \\
0 & 0 & -1 \\
0 & 1 & 0 \\
\end{array}\right)
\,,
\end{equation}
and
\begin{equation}\label{20130613:eq6}
H_1^D(\partial)
=
\left(\begin{array}{ccc}
\partial\circ L+L\partial-\frac12\partial^3 &
\frac12\partial\circ\psi_++\psi_+\partial &
\frac12\partial\circ\psi_-+\psi_-\partial \\
\partial\circ\psi_++\frac12\psi_+\partial &
\frac32\psi_+\partial^{-1}\psi_+ & 
\Big(\!\!\!\begin{array}{c}
L-\partial^2 \\
-\frac32\psi_+\partial^{-1}\circ\psi_-
\end{array}\!\!\!\Big) \\
\partial\circ\psi_-+\frac12\psi_-\partial &
\Big(\!\!\!\begin{array}{c}-L+\partial^2 \\
-\frac32\psi_-\partial^{-1}\circ\psi_+ 
\end{array}\!\!\!\Big) &
\frac32\psi_-\partial^{-1}\psi_-
\end{array}\right)
\,.
\end{equation}
Since the image of $N(\partial)$ in the quotient space $\mc W/\mc I$
is still non-degenerate,
a fractional decomposition for $H_1^D(\partial)$
is obtained projecting the fractional decomposition \eqref{20130613:eq4} for $A_1^D(\partial)$.
We have
$H_1^D(\partial)=\overline{A_1}(\partial)+\overline{M}(\partial)\circ \overline{N}^{-1}(\partial)$,
where
\begin{equation}\label{20130613:eq7}
\begin{array}{l}
\overline{A_1}(\partial)
=
\left(\begin{array}{ccc}
\partial\circ L+L\partial-\frac12\partial^3 &
\frac12\partial\circ\psi_++\psi_+\partial &
\frac12\partial\circ\psi_-+\psi_-\partial \\
\partial\circ\psi_++\frac12\psi_+\partial &
0 & 
L-\partial^2 \\
\partial\circ\psi_-+\frac12\psi_-\partial &
-L+\partial^2 &
0
\end{array}\!\!\!\right)
\,,\\
\overline{M}(\partial)
=
\left(\begin{array}{ccc}
0 & 0 & 0 \\
0 & 0 & -3\psi_+\psi_-^2 \\
0 & 0 & 3\psi_-^3
\end{array}\right)
\,\,,\,\,\,\,
\overline{N}(\partial)
=
\left(\begin{array}{ccc}
\psi_+^2 & 0 & 0 \\
0 & \psi_- & 0 \\
0 & \psi_+ & 2(\psi_-\partial+2\psi_-^\prime)
\end{array}\right)
\,,
\end{array}
\end{equation}
are the images of $A_1(\partial)$, $M(\partial)$ and $N(\partial)$
in the quotient space $\mc W^D=\mc W/\mc I=R(L,\psi_+,\psi_-)$.

In \cite[Ex.6.4]{DSKV13} we constructed an infinite sequence of linearly independent 
local functionals $\tint g_n,\tint g_{\tilde n}\in\mc W/\partial\mc W$, $n\in\mb Z_+$,
such that $H_0(\partial)\frac{\delta g_{0}}{\delta u}=H_0(\partial)\frac{\delta g_{\tilde{0}}}{\delta u}=0$,
satisfying the Lenard-Magri recursive conditions \eqref{20130604:eq7}.
The first few conserved densities are
\begin{equation}\label{20130614:eq1}
\begin{array}{l}
\displaystyle{
\vphantom{\Big(}
g_0=\widetilde L:=L-\frac1{12}\varphi^2
\,,
\qquad
g_{\tilde0}=\varphi
\,,} \\
\displaystyle{
\vphantom{\Big(}
g_1=\frac12(\psi_+\partial\psi_--\psi_-\partial\psi_+-\varphi\psi_+\psi_-)
-\frac14(L-\frac1{12}\varphi^2)^2
\,,
\qquad
g_{\tilde1}=6\psi_+\psi_-
\,.}
\end{array}
\end{equation}
Hence, we have the corresponding integrable hierarchy of bi-Hamiltonian equations
\begin{equation}\label{20130614:eq2}
\frac{du}{dt_n}=P_n
=H_1(\partial)\frac{\delta g_{n-1}}{\delta u}
=H_0(\partial)\frac{\delta g_{n}}{\delta u}
\,\,,\,\,\,\,
n\in\mb Z_+
\,,
\end{equation}
and similar equations with $n$ replaced by $\tilde n$.

The first few equations of the hierarchy are as follows
\begin{equation}\label{20130614:eq3}
\frac{d}{dt_0}
\left(\begin{array}{c} L \\ \psi_+ \\ \psi_- \\ \varphi \end{array}\right)
=
\left(\begin{array}{c} 
\widetilde{L}^\prime \\ 
\psi_+^\prime + \frac{1}{2}\varphi\psi_+ \\ 
\psi_-^\prime - \frac{1}{2}\varphi\psi_- \\
0 \end{array}\right)
\,,
\qquad
\frac{d}{dt_{\tilde0}}
\left(
\begin{array}{c}
L \\ \psi_+ \\ \psi_- \\ \varphi 
\end{array}\right)
=
\left(
\begin{array}{c}
0 \\ -3\psi_+ \\ 3\psi_- \\ 0
\end{array}
\right)
\,,
\end{equation}
and
\begin{equation}\label{20130614:eq4}
\begin{array}{rcl}
\displaystyle{
\vphantom{\Big(}
\frac{dL}{dt_1}
}&=&\displaystyle{
\frac14\widetilde L^{\prime\prime\prime}-\frac3{2}\widetilde{L}\widetilde{L}^\prime+
\frac3{2}\left(\psi_+\psi_-^{\prime\prime}-\psi_-\psi_+^{\prime\prime}\right)
-\frac{3}{2}(\varphi\psi_+\psi_-)^\prime
\,,} \\ 
\displaystyle{
\vphantom{\Big(}
\frac{d\psi_\pm}{dt_1}
}&=&\displaystyle{
\psi_{\pm}^{\prime\prime\prime}
\pm\frac{3}{2}\varphi\psi_{\pm}^{\prime\prime}
\pm\frac{1}{2}\psi_{\pm}\varphi^{\prime\prime}
\pm\frac{3}{2}\varphi^\prime \psi_{\pm}^\prime
-\frac3{2}\widetilde{L} \psi_{\pm}^\prime
-\frac3{4}\psi_{\pm} \widetilde{L}^\prime
} \\ 
&& \displaystyle{
\vphantom{\Big(}
\mp\frac3{4}\phi\psi_{\pm}\widetilde{L}
+\frac3{4}\phi^2\psi_{\pm}^\prime
+\frac{3}{4}\psi_{\pm}\varphi \varphi^\prime
\pm\frac{3}{2}\psi_{\pm}\psi_+\psi_-
\pm\frac{1}{8}\phi^3\psi_{\pm}
\,,} \\
\displaystyle{
\vphantom{\Big(}
\frac{d\varphi}{dt_1}
}&=&\displaystyle{
0
\,,}\\
\displaystyle{
\vphantom{\bigg)}
\frac{dL}{dt_{\tilde1}}
}
&=&
\displaystyle{
\vphantom{\Big)}
6\left(\psi_+\psi_-\right)^\prime
\,,
}\\
\displaystyle{
\vphantom{\Big)}
\frac{d\psi_{\pm}}{dt_{\tilde1}}
}
&=&
\displaystyle{
\vphantom{\Big(}
\mp3\psi_{\pm}''\pm3L\psi_{\pm}\mp\varphi^2\psi_{\pm}
-\frac32\psi_{\pm}\varphi'-3\varphi\psi_{\pm}'
\,,}
\\
\displaystyle{
\vphantom{\Big(}
\frac{d\varphi}{dt_{\tilde1}}
}
&=&
\displaystyle{
\vphantom{\Big(}
0
\,.}
\end{array}
\end{equation}

We want to prove that, in this example,
all the statements in Ansatz \ref{20130604:ans2} hold.
By the Lenard-Magri recursive conditions \eqref{20130614:eq2}
we have that
\begin{equation}\label{20130614:eq5}
\begin{array}{l}
\displaystyle{
\vphantom{\Big(}
A_1(\partial)\frac{\widetilde{\delta}g_{n-1}}{\widetilde{\delta}u}
+B_1(\partial)\frac{\delta g_{n-1}}{\delta\varphi}
=
A_0(\partial)\frac{\widetilde{\delta}g_{n}}{\widetilde{\delta}u}
=
(P_n)_1
\,,} \\
\displaystyle{
\vphantom{\Big(}
-B_1^*(\partial)\frac{\widetilde{\delta}g_{n-1}}{\widetilde{\delta}u}
+D_1(\partial)\frac{\delta g_{n-1}}{\delta\varphi}
=
0
\,,}
\end{array}
\end{equation}
where, recalling the notation \eqref{20130603:eq4}, 
$\frac{\widetilde{\delta}}{\widetilde{\delta}u}
=\left(\begin{array}{ccc}
\frac{\delta}{\delta L}\,, &
\frac{\delta}{\delta\psi_+}\,, &
\frac{\delta}{\delta\psi_-}
\end{array}\right)^T$,
and, as usual, $(P_n)_1\in\mc W^3$ is given by the first three components of $P_n\in\mc W^4$.
We can write explicitly the second equation in \eqref{20130614:eq5} 
using the definitions \eqref{20130613:eq3} of $B_1(\partial)$ and $D_1(\partial)$:
\begin{equation}\label{20130614:eq6}
\partial\Big(\varphi\frac{\delta g_{n-1}}{\delta L}\Big)
+3\psi_+\frac{\delta g_{n-1}}{\delta\psi_+}
-3\psi_-\frac{\delta g_{n-1}}{\delta\psi_-}
+6\partial\frac{\delta g_{n-1}}{\delta\varphi}
=0
\,.
\end{equation}
It follows from equations \eqref{20130613:eq4} and \eqref{20130614:eq6} that
\begin{equation}\label{20130614:eq7}
\frac{\widetilde{\delta}g_{n-1}}{\widetilde{\delta}u}
=
N(\partial)F
\,,
\end{equation}
where
\begin{equation}\label{20130614:eq8}
F=
\left(\begin{array}{l}
\frac1{\psi_+^2}\frac{\delta g_{n-1}}{\delta L} \\
\frac1{3\psi_+\psi_-}\partial\Big(\varphi\frac{\delta g_{n-1}}{\delta L}\Big)
+\frac1{\psi_-}\frac{\delta g_{n-1}}{\delta\psi_+} \\
\frac1{\psi_-^2}\frac{\delta g_{n-1}}{\delta\varphi}
\end{array}\right)
\,\in\mc K^3
\,.
\end{equation}
($\mc K$ denotes the field of fractions of $\mc W$.)
On the other hand,
by the definitions \eqref{20130613:eq3} and \eqref{20130613:eq4} of the matrices $B_1(\partial)$
and $M(\partial)$, we have
\begin{equation}\label{20130614:eq9}
M(\partial)F=
B_1(\partial)\frac{\delta g_{n-1}}{\delta\varphi}
\,.
\end{equation}
By the first equation in \eqref{20130614:eq5} and equation \eqref{20130614:eq9}
we thus get
\begin{equation}\label{20130614:eq10a}
\big(A_1(\partial)\circ N(\partial)+M(\partial)\big)F=
(P_n)_1
\,.
\end{equation}
Recalling the fractional decomposition \eqref{20130613:eq4}
for $A_1^D(\partial)$,
equations \eqref{20130614:eq7} and \eqref{20130614:eq10a}
exactly say that we have the association relations
\begin{equation}\label{20130614:eq10b}
\tint g_{n-1}\ass{A_1^D}(P_n)_1
\,\,\text{ in }\,\,
\widetilde{\mc W}
\,,
\end{equation}
for all $n\in\mb Z_+$.
Namely, the first statement of Ansatz \ref{20130604:ans2} holds.

Next, we go to the quotient algebra $\mc W/\mc I=R(L,\psi_+,\psi_-)$.
By equations \eqref{20130614:eq7}, \eqref{20130614:eq8} and \eqref{20130614:eq10b}
we have
$$
\frac{\widetilde{\delta}\overline{g_{n-1}}}{\widetilde{\delta}u}
=
\overline{N}(\partial)\overline{F}
\,\,,\,\,\,\,
\big(\overline{A_1}(\partial)\circ \overline{N}(\partial)+\overline{M}(\partial)\big)\overline{F}=
\overline{(P_n)}_1
\,,
$$
where
$$
\overline{F}
=
\left(\begin{array}{l}
\frac1{\psi_+^2}\frac{\delta g_{n-1}}{\delta L} \\
\frac1{\psi_-}\frac{\delta g_{n-1}}{\delta\psi_+} \\
\frac1{\psi_-^2}\frac{\delta g_{n-1}}{\delta\varphi}
\end{array}\right)
\,\in\mc (K^D)^3
\,,
$$
where $\mc K^D$ is the field of fractions of $\mc W^D=R(L,\psi_+,\psi_-)$.
Hence,
recalling the fractional decomposition \eqref{20130613:eq7} of $H_1^D(\partial)$,
we conclude that
\begin{equation}\label{20130614:eq12}
\tint \overline{g_{n-1}}\ass{H_1^D}\overline{(P_n)}_1
\,\,\text{ in }\,\,
\mc W^D
\,.
\end{equation}
On the other hand, by the block form \eqref{20130614:eq13} of $H_0(\partial)$
and the Lenard-Magri recursive conditions \eqref{20130614:eq2},
we have that
$$
A_0(\partial)\frac{\widetilde{\delta}g_n}{\widetilde{\delta}u}=(P_n)_1
\,,
$$
and therefore, going to the quotient,
\begin{equation}\label{20130614:eq14}
\tint \overline{g_n}\ass{H_0^C}\overline{(P_n)}_1
\,\,\Big(=H_0^C(\partial)\frac{\widetilde{\delta}\overline{g_n}}{\widetilde{\delta}u}\Big)
\,\,\text{ in }\,\,
\mc W^D
\,.
\end{equation}
The association relations \eqref{20130614:eq12} and \eqref{20130614:eq14}
say that the Lenard-Magri scheme of integrability
still holds after Dirac reduction.
Hence, as stated in Ansatz \ref{20130604:ans2}(b),
we have the integrable bi-Hamiltonian hierarchy
$\frac{du}{dt_n}=\overline{(P_n)}_1$, $n\in\mb Z_+$, in $\mc W^D$,
with constant densities $\overline{g_n}\in\mc W^D$, $n\in\mb Z_+$.
We argue similarly for $n$ replaced by $\tilde n$.

The first few constant densities and equations of the hierarchy are obtained 
taking the images of \eqref{20130614:eq1}, \eqref{20130614:eq3} and \eqref{20130614:eq4}:
$$
\overline{g_0}=L
\,,\,\,\,\,
\overline{g_1}=\frac12(\psi_+\partial\psi_--\psi_-\partial\psi_+)-\frac14L^2
\,,\,\,\,\,
\overline{g_{\tilde1}}=6(\psi_+\psi_-)'
\,,
$$
$$
\frac{d}{dt_0}
\left(\begin{array}{c} L \\ \psi_+ \\ \psi_- \end{array}\right)
=
\left(\begin{array}{c} 
L \\ 
\psi_+ \\ 
\psi_-
\end{array}\right)^\prime
\,,
\qquad
\frac{d}{dt_{\tilde0}}
\left(\begin{array}{c} L \\ \psi_+ \\ \psi_- \end{array}\right)
=
\left(\begin{array}{c} 0 \\ -3\psi_+ \\ 3\psi_- \end{array}\right)
\,,
$$
and
\begin{equation}\label{20130614:eq16}
\frac{d}{dt_1}
\left(\begin{array}{c} L \\ \psi_+ \\ \psi_- \end{array}\right)
=
\left(\begin{array}{c} 
\frac14 L^{\prime\prime\prime}-\frac3{2}LL^\prime+
\frac3{2}\left(\psi_+\psi_-^{\prime\prime}-\psi_-\psi_+^{\prime\prime}\right)
\\ 
\psi_+^{\prime\prime\prime}
-\frac3{2}L \psi_+^\prime
-\frac3{4}\psi_+ L^\prime
+\frac{3}{2}\psi_+^2\psi_-
\\
\psi_-^{\prime\prime\prime}
-\frac3{2}L \psi_-^\prime
-\frac3{4}\psi_- L^\prime
-\frac{3}{2}\psi_+\psi_-^2
\end{array}\right)
\,,
\end{equation}
\begin{equation}\label{20130614:eq16-1}
\frac{d}{dt_{\tilde1}}
\left(\begin{array}{c} L\\ \psi_+ \\ \psi_- \end{array}\right)
=\left(
\begin{array}{c}
6\left(\psi_+\psi_-\right)'
\\
-3\psi_+''+3L\psi_+
\\
3\psi_-''-3L\psi_-
\end{array}
\right)
\,.
\end{equation}

In order to prove integrability of the bi-Hamiltonian hierarchy
$\frac{du}{dt_n}=\overline{(P_n)}_1$,
$\frac{du}{dt_{\tilde n}}=\overline{(P_{\tilde n})}_1$, $n\in\mb Z_+$,
we  prove linear independence of the elements $\overline{(P_n)}_1\in(\mc W^D)^3$.
It is not difficult to show, using the recursive equation \eqref{20130614:eq2},
that 
$$
(\overline{P_n})_1=
\left(\begin{array}{c}-2^{-2n}L^{(2n+1)} \\ \psi_+^{(2n+1)} \\ \psi_-^{(2n+1)} \end{array}\right)+
\text{ terms of lower differential order}\,.
$$
In particular, the elements $\overline{(P_n)}_1$ are linearly independent.
Thus, \eqref{20130614:eq16} and \eqref{20130614:eq16-1}
are the first non-trivial equations of an integrable hierarchy of bi-Hamiltonian equations 
with respect to the Poisson structures \eqref{20130613:eq5} and \eqref{20130613:eq6}
and Hamiltonian functionals $\tint \overline{g_n}$,
$\tint \overline{g_{\tilde n}}$.


%
\end{document}